\definecolor{myurlcolor}{rgb}{0,0,0.7}
\definecolor{myrefcolor}{rgb}{0.8,0,0}
\newtheorem{thm}{Theorem}
\newcommand{\PRLsection}[1]{\emph{#1.---}}
\newcommand{\eref}[1]{(\ref{#1})}
\newcommand{\eqnref}[1]{Eq.~(\ref{#1})}
\newcommand{\eqnsref}[2]{Eqs.~(\ref{#1}) and (\ref{#2})}
\newcommand{\figref}[1]{Fig.~\ref{#1}}
\newcommand{\tabref}[1]{Table~\ref{#1}}
\newcommand{\appref}[1]{App.~\ref{#1}}
\newcommand{\citeref}[1]{Ref.~\cite{#1}}
\renewcommand\Re{\operatorname{Re}}
\renewcommand\Im{\operatorname{Im}}
\renewcommand\Tr{\operatorname{Tr}}
\newcolumntype{M}[1]{>{\centering\arraybackslash}m{#1}}
\newcolumntype{N}{@{}m{0pt}@{}}
\newcommand{\mrm}[1]{\mathrm{#1}}
\renewcommand{\d}{\mrm{d}}
\renewcommand{\i}{\mrm{i}}
\newcommand{\e}{\mrm{e}}
\newcommand{\T}{{\mrm{T}}}
\newcommand{\qmean}[1]{{\langle#1\rangle}}
\newcommand{\mean}[1]{{\mathbb{E}\!\left[#1\right]}}
\newcommand{\meanP}[2]{{\mathbb{E}_{#1}\!\left[#2\right]}}
\newcommand{\MSE}[1]{\Delta^2{#1}}
\newcommand{\errormat}{{{\bm{\Delta}}^2 \tilde{\parVec}}}
\newcommand{\errorM}{\Delta_{*}}
\newcommand{\cerror}{\Delta_\mrm{C}}
\newcommand{\qerror}{\Delta_\mrm{Q}}
\newcommand{\errorS}{\delta_{*}}
\newcommand{\cerrorS}{\delta_\mrm{C}}
\newcommand{\qerrorS}{\delta_\mrm{Q}}
\newcommand{\mat}[1]{{\textbf{#1}}}
\newcommand{\qvec}[1]{\hat{\bm{#1}}}
\newcommand{\id}{\mat{I}}   
\renewcommand{\H}{{\mat{H}}}
\newcommand{\HS}{\H_{\theta_1}^\mrm{S}}
\newcommand{\HNS}{\H_{\theta_1}^\mrm{NS}}
\newcommand{\smat}[1]{{\mathsf{#1}}}
\newcommand{\sformletter}{J}
\newcommand{\sform}{{\smat{\sformletter}}} 
\newcommand{\stransf}[1]{{\mathcal{\sformletter}\!\left\{#1\right\}}}
\newcommand{\svec}[1]{{\smat{#1}}}
\newcommand{\qsvec}[1]{{\hat{\smat{#1}}}}
\newcommand{\Id}{\smat{I}}	 
\newcommand{\dprime}{{\prime \prime}}
\newcommand{\parVec}{{\bm{\theta}}}
\renewcommand{\pmat}[1]{{\bm{#1}}}
\newcommand{\F}{{\pmat{F}}}
\newcommand{\FQ}{{\pmat{\mathcal{F}}}}
\newtheorem{lemma}{Lemma}
\newcommand{\X}{{\smat{X}}}
\newcommand{\A}{{\smat{A}}}
\newcommand{\beq}{\begin{equation}}
\newcommand{\eeq}{\end{equation}}
\newcommand{\bpm}{\begin{pmatrix}}
	\newcommand{\epm}{\end{pmatrix}}
\newcommand{\cP}{\mathcal{P}}
\newcommand*\circled[1]{\tikz[baseline=(char.base)]{
		\node[shape=circle,draw,inner sep=2pt] (char) {#1};}}
\begin{document}

\title{Multiparameter estimation perspective on non-Hermitian singularity-enhanced sensing}

\author{Javid Naikoo}
\email{j.naikoo@cent.uw.edu.pl}
\affiliation{Centre for Quantum Optical Technologies, Centre of New Technologies, University of Warsaw, Banacha 2c, 02-097 Warszawa, Poland}

\author{Ravindra W. Chhajlany}
\affiliation{Institute of Spintronics and Quantum Information, Faculty of Physics, Adam Mickiewicz University, 61-614 Pozna\'{n}, Poland}
\affiliation{Centre for Quantum Optical Technologies, Centre of New Technologies, University of Warsaw, Banacha 2c, 02-097 Warszawa, Poland}

\author{Jan Ko\l{}ody\'{n}ski}
\email{jan.kolodynski@cent.uw.edu.pl}
\affiliation{Centre for Quantum Optical Technologies, Centre of New Technologies, University of Warsaw, Banacha 2c, 02-097 Warszawa, Poland}

\begin{abstract}
Describing the evolution of quantum systems by means of non-Hermitian generators opens a new avenue to explore the dynamical properties naturally emerging in such a picture, e.g.~operation at the so-called exceptional points, preservation of parity-time symmetry, or capitalising on the singular behaviour of the dynamics. In this work, we focus on the possibility of achieving unbounded sensitivity when using the system to sense linear perturbations away from a singular point. By combining multiparameter estimation theory of Gaussian quantum systems with the one of singular-matrix perturbations, we introduce the necessary tools to study the ultimate limits on the precision attained by such singularity-tuned sensors. We identify under what conditions and at what rate can the resulting sensitivity indeed diverge, in order to show that nuisance parameters should be generally included in the analysis, as their presence may alter the scaling of the error with the estimated parameter.
\end{abstract}

\maketitle

\PRLsection{Introduction}%
%
Quantum entanglement boosts dramatically performance in sensing~\cite{Degen2017,Pezze2018}, allowing quantum sensors to breach classical limits imposed by the i.i.d.-statistics~\cite{Giovannetti2004}. The corresponding enhancement, however, turns out to be very fragile~\cite{Maccone2011,Escher2011,Demkowicz2012}, making methods of quantum control~\cite{Chaves2013,len_quantum_2021,Kaoru2022} and error-correction~\cite{Sekatski2017,zhou_achieving_2018,Demkowicz2017} essential, if the robustness against decoherence and imperfections is to be maintained. As the impact of noise becomes inevitable with sensor complexity, a change of paradigm is necessary. One way is to adopt a non-Hermitian dynamical description and engineer the noise instead, in order to make the evolution extremely sensitive to external perturbations. For instance, by considering deviations from the \emph{exceptional points} (EPs) in the space of parameters characterising the system~\cite{WiersigSensing14}---special degeneracies at which $n$ (complex) eigenvalues coalesce along with their respective eigenmodes~\cite{Heiss2004,Miri2019,Ravi2019}---a linear perturbation $\epsilon$ away from the EP leads to an $n$th-root splitting $\sim\!\sqrt[n]{\epsilon}$ of the eigenmode frequencies~\cite{Wiersig20review} (in contrast to the polynomial energy-splittings arising when perturbing Hermitian Hamiltonians). Hence, a splitting measurement may yield infinitely steep signals of unbounded sensitivity as $\epsilon\to0$, as demonstrated with optical resonators~\cite{Chen2017,Hodaei2017} in the regime in which the measurement-induced noise can be ignored.
Otherwise, the effect is washed out by the quantum noise~\cite{hongkong,Wiersig2020,Wang2020}---in a similar way as it prohibits noiseless amplification of optical signals~\cite{Clerk2010,Tsang2018}. 

Alternative schemes involving linearly coupled systems were proposed (and implemented~\cite{Kononchuk2022}) that surpass the impact of quantum noise by resorting to different perturbations of the effective non-Hermitian generator, $\H$, in the Langevin formalism~\cite{Gardiner1985}---with the operation around an EP being no longer essential~\footnote{See~\cite{Chu2020} for considerations of also finite dimensional systems.}. For example, by considering the internal interaction to be non-reciprocal and perturbing the coupling strength instead, the sensitivity---the \emph{signal-to-noise ratio} (SNR)---was shown to improve by a constant factor~\cite{AC1}. Moreover, it was shown that by engineering $\H$ to be \emph{singular}
\footnote{\citeref{LJ} attributed this to EP and lasing-threshold conditions being simultaneously fulfilled, while actually satisfying the singularity condition~\cite{hongkong}.} and sensing perturbations of the internal frequency with the probing signal tuned to it, the SNR may diverge boundlessly as $\epsilon^{-2}$ with $\epsilon\to0$~\cite{LJ}. Despite the apparent similarity to the EP-induced effect, this is a consequence of probing the quantum sensor close to a dynamical phase transition, which constitutes a resource in sensing~\cite{Macieszczak2016,Fernandez2017,Wald2020,chu_dynamic_2021,wu_criticality-enhanced_2021,xie_quantum_2021,di_candia_critical_2021,Ilias2022}. Although the linearity of dynamics may be questioned at such a critical point~\cite{hongkong,Wiersig20review}, for the class of sensors we consider~\cite{Hodaei2014,Feng2014,Peng2014b,Peng2014,Partanen2019} it has been verified experimentally at different probe powers~\cite{Peng2014}. Here, we demonstrate how to correctly assess such singularity-tuned sensors~\cite{LJ}, and hence possibly other criticality-enhanced sensing schemes~\cite{Macieszczak2016,Fernandez2017,Wald2020,chu_dynamic_2021,wu_criticality-enhanced_2021,xie_quantum_2021,di_candia_critical_2021,Ilias2022}.

\begin{figure}
	\includegraphics[width=\columnwidth]{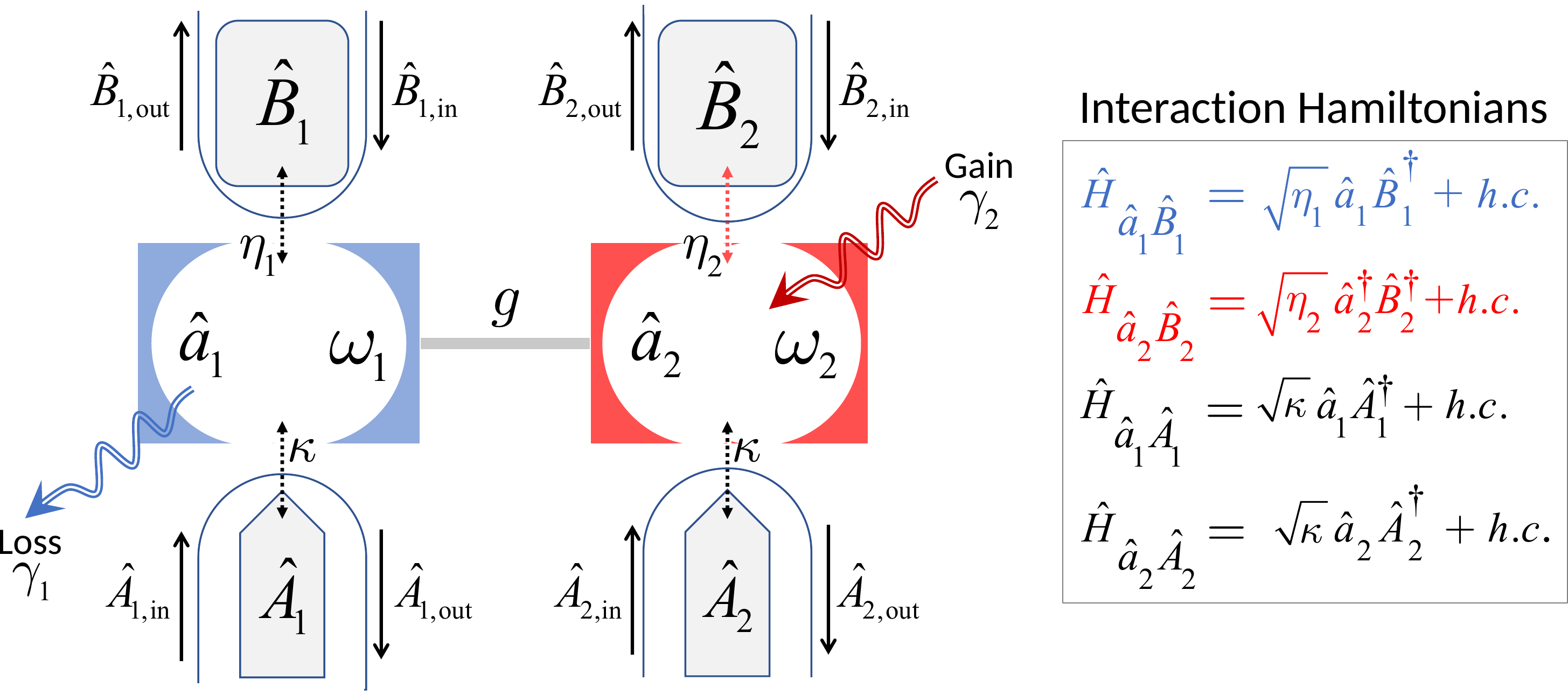}
	\caption{\textbf{Non-Hermitian sensor} model consisting of two coupled cavities bearing optical modes $\hat{a}_1$ and $\hat{a}_2$ with effective loss and gain rates $\gamma_1=\tfrac{\eta_1 + \kappa}{2}$ and $\gamma_2=\tfrac{\eta_2 - \kappa}{2}$, respectively. The rates are controlled by the `scattering channels' (modes), $\hat{B}_1$ and $\hat{B}_2$, coupled to the respective cavities, whose dynamics is also affected by the `probing channels', $\hat{A}_1$ and $\hat{A}_2$, used to continuously monitor each cavity.}
	\label{fig:model}
\end{figure}

In particular, we analyse the emergence of singularity-induced SNR-divergence within the canonical linear system exhibiting \emph{parity-time} (PT) \emph{symmetry}~\cite{El-Ganainy2018}, i.e.~two coupled bosonic cavities that experience loss and gain while being continuously monitored~\cite{Hodaei2014,Feng2014,Peng2014b,Peng2014,Partanen2019}---see \figref{fig:model}---whose complexity of non-Hermitian dynamics covers the settings of all:~EP\cite{Wiersig20review}-, non-reciprocity\cite{AC1}- and singularity\cite{LJ}-based sensing. We demonstrate how these approaches fit into a single picture, from which it follows that one must resort to multiparameter estimation theory of multimode Gaussian states~\cite{Nichols}, in order to determine the ultimate limits on sensitivity. We then perform \emph{singular perturbations}~\cite{Sain1969,Howlett1982,Howlett2001,avrachenkov2013analytic} of the corresponding frequency response function, in order to show that the SNR-divergence critically depends on the perturbation form, even when assuming all other system parameters to be known. Furthermore, not only any slight deviation from the singular point precludes unbounded sensitivity, but also the divergence rate depends on \emph{nuisance parameters}, i.e.~other system parameters unknown prior to estimation~\cite{Rafal2020,Suzuki2020}. Thus, one must be careful when assessing the sensing capabilities of singularity-tuned sensors~\cite{LJ}, as these depend strongly on the ability to fine-tune and calibrate the system.

\PRLsection{Non-Hermitian sensor model}%
Sensors~\cite{Hodaei2014,Feng2014,Peng2014b,Peng2014,Partanen2019} can be conveniently described by the model depicted in \figref{fig:model}, in which two cavities containing optical modes $\hat{a}_1$ and $\hat{a}_2$ at frequencies $\omega_1$ and $\omega_2$, respectively, are linearly coupled with strength $g$, so that the overall free Hamiltonian reads: 
\begin{align}
	\hat{H}_S &= \omega_1 \hat{a}_1^\dagger  \hat{a}_1 +  \omega_2 \hat{a}_2^\dagger  \hat{a}_2  + g \left(\hat{a}_1^\dagger \hat{a}_2 + \hat{a}_2^\dagger \hat{a}_1\right).
	\label{eq:HS}
\end{align}
We shall consider here the case of degenerate cavities such that $\omega_1=\omega_2\eqqcolon\omega_0$~\cite{Peng2014}. Each mode $\hat{a}_1$ ($\hat{a}_2$) is separately coupled to a scattering channel $\hat{B}_1$ ($\hat{B}_2$) that effectively induces loss (gain) of strength $\eta_1$ ($\eta_2$) on each cavity. The couplings correspond to effective  beam-splitter  and non-degenerate parametric-amplifier \cite{clerk2010introduction} interactions, see \figref{fig:model}, assuming that in the latter case other non-linear effects, e.g.~degenerate parametric amplification~\cite{Bruch:19}, are absent. In parallel, both cavities are independently probed via channels $\hat{A}_1$ and $\hat{A}_2$, each coupled with strength $\kappa$, whose outputs are continuously monitored. By resorting to the input-output formalism~\cite{GardinerBook,Guta2016} summarised in \appref{app:Langevin_eqs}, the sensor dynamics can be described by a linear quantum Langevin equation~\cite{Gardiner1985}:
\begin{equation}
	\partial_t 	\qvec{a} = -\i (\omega_0 \id + \H) \qvec{a} + \qvec{A}_\mrm{in} + 
	\qvec{B}_\mrm{in},
	\label{eq:dyn_model}
\end{equation}
where $\qvec{a}\coloneqq\{\hat{a}_1,\hat{a}_2\}^\T$, $\qvec{A}_\mrm{in}\coloneqq\{\sqrt{\kappa} \hat{A}_{1,\mrm{in}}, \sqrt{\kappa}\hat{A}_{2,\mrm{in}}\}^\T$,
$\qvec{B}_\mrm{in}\coloneqq\{\sqrt{\eta_1} \hat{B}_{1,\mrm{in}},- \sqrt{\eta_2} \hat{B}_{2,\mrm{in}}^\dagger\}^\T$, 
and $\id$ is a 2$\times$2 identity matrix. 
As depicted in \figref{fig:model}, $\hat{A}_{\ell, \mrm{in}}$,  $\hat{B}_{\ell, \mrm{in}}$ with $\ell=1,2$ denote the effective input fields of the optical channels, whose output fields are determined by the input-output relations~\cite{Gardiner1985} and read at time $t$:~$\hat{A}_{\ell,\mrm{out}} (t) = \hat{A}_{\ell,\mrm{in}}(t) - \sqrt{\kappa} \,\hat{a}_{\ell}(t)$, see \appref{app:Langevin_eqs}.

The inter-cavity interactions, as well as their coupling to optical channels, modify the free evolution of the cavity modes in \eqnref{eq:dyn_model} in the form of a non-Hermitian dynamical generator:
\begin{equation}
\H = \begin{pmatrix}
	 - \i \gamma_1 	&	g  \\
		g     &	 + \i \gamma_2              
	\end{pmatrix}
\label{eq:sys_H}
\end{equation}
with $\gamma_1\coloneqq(\eta_1 + \kappa)/2$ ($\gamma_2\coloneqq(\eta_2 - \kappa)/2$) being the overall loss (gain) rate of each cavity. As a result, defining $\gamma_\pm\coloneqq(\gamma_2 \pm \gamma_1)/2$, we may write the eigenvalues and (unnormalised) eigenmodes of $\H$ as
\begin{equation}
	\lambda_{\pm} =  \i \gamma_- \pm \sqrt{g^2 -\gamma_+^2}, 
	\left|e_{\pm}\right)\! = \!\begin{pmatrix} -\i\gamma_+ \pm \sqrt{g^2 - \gamma_+^2} \\ g \end{pmatrix},
	\label{eq:eval_evecs}
\end{equation}
so that it becomes clear that the spectrum of $\H$ is real iff $g\ge\gamma_+$ and $\gamma_-=0$, in which case $\H$ formally exhibits \emph{PT-symmetry}~\cite{El-Ganainy2018}. Such a condition is conveniently visualised in the 3D-space of parameters $\{\gamma_1,\gamma_2,g\}$, see \figref{fig:par_space}\textbf{a}, by a vertical (yellow) triangular plane. Importantly, it has been demonstrated that by maintaining the PT-symmetry, the validity of the linear model \eref{eq:dyn_model} can be extended to high probe powers~\cite{Peng2014}. The sole condition $\gamma_-=0$ we term as the balanced scenario, as the gain then balances out exactly the loss ($\gamma_1=\gamma_2$)~\cite{Peng2014b}. In what follows, when probing the system at the sensor frequency $\omega_0$, the \emph{singularity} of the non-Hermitian generator \eref{eq:sys_H} will play a pivotal role. This corresponds to the condition $\det\H=0$ or $g^2=\gamma_1 \gamma_2$, which is represented by the green surface in \figref{fig:par_space}\textbf{a}. In contrast, the generator \eref{eq:sys_H} exhibits an EP when gain and loss are such that $g=\gamma_+$, characterized by the simultaneous coalescence of the two eigenvalues $\lambda_\pm$ and eigenmodes $\left|e_\pm\right)$~\cite{El-Ganainy2018}. In \figref{fig:par_space}\textbf{a}, however, we mark the EP-condition only in the relevant case, i.e.~when the singularity is simultaneously fulfilled (dashed blue line). Now, as emphasised by the constant-$\gamma_1$ cut of the parameter-space in \figref{fig:par_space}\textbf{b}, the singularity and PT-symmetry conditions separate the regions of $\mathrm{Im}\lambda_\pm$ being either negative or positive, which yield the dynamics \eref{eq:dyn_model} either stable or unstable, respectively~\footnote{Manifested by the absence of steady-state solution, and the cavity occupation numbers becoming divergent as $t\to\infty$.}---the latter coloured in grey in \figref{fig:par_space}. The border $\mathrm{Im}\lambda_\pm=0$ defines then the \emph{lasing threshold}~\cite{Peng2014b}, which we mark for $\gamma_1=0.75$ in \figref{fig:par_space} by the dashed purple line. Although we defer the details to \appref{app:lasing_threshold} and \cite{NewProject}, let us note that for $g>\gamma_1$ the system is at the lasing threshold when balanced, while the singularity is then not fulfilled.

\begin{figure}
	\includegraphics[width=\columnwidth]{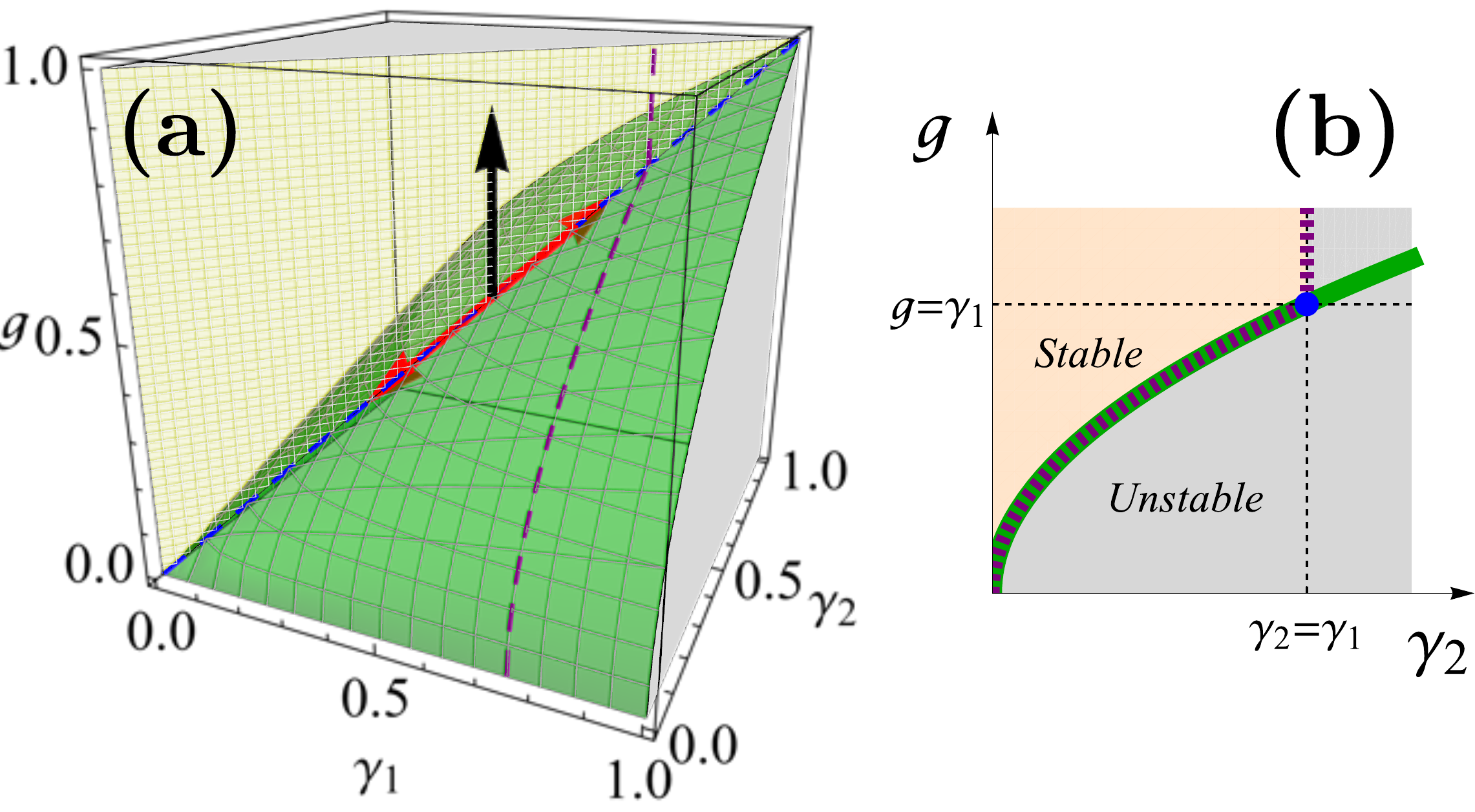}
	\caption{\textbf{Space of parameters characterising the dynamical generator} $\H$ in \eqnref{eq:sys_H}, with $g$ and $\gamma_{1}$($\gamma_{2}$) being the coupling constant and loss (gain) rates, respectively. \textbf{(a)}: The green surface contains the parameter values for which the \emph{singularity} condition $\det\H=0$ is fulfilled, while the triangular vertical plane ensures the \emph{PT-symmetry}. At their intersection, at which $g=\gamma_1=\gamma_2$, marked with a dashed blue line, the \emph{EP-condition} is guaranteed. \textbf{(b)}: A space-cut at exemplary $\gamma_1=0.75$, for which the separation between stable and unstable regions of dynamics is clearly visible---with the \emph{lasing threshold} (purple dashed line) following the singularity surface before becoming (at an EP, blue dot) equivalent to the PT-condition for $g>\gamma_1$. In the multiparameter estimation setting considered, we choose perturbations of the nuisance parameter to either invalidate or maintain the singularity condition, while always preserving the PT-symmetry---as marked with black and red arrows in \textbf{(a)}, respectively.  
	}
	\label{fig:par_space}
\end{figure}

In sensing tasks with linear perturbations, the generator \eref{eq:sys_H} is modified as follows:
\begin{equation}
	\H_\parVec\coloneqq  \H - \sum_{i = 0} ^{m} \theta_i \mat{n}_i = \H_{\bar{\parVec}} - \theta_0 \mat{n}_0,
	\label{eq:sens_H}
\end{equation}
where $\parVec \coloneqq \{\theta_i\}_i$ denotes a set of ($m+1$, real) parameters to be sensed, each of which modifies $\H$ according to some (complex) 2$\times$2 matrix $\mat{n}_i$. In \eqnref{eq:sens_H}, we single out the case in which $\theta_0$ denotes the \emph{primary} parameter to be sensed around zero,  while the rest of the set, $\bar{\parVec} \coloneqq \{\theta_i\}_{i\ne 0}$, contains \emph{nuisance} parameters, i.e.~ones that are of no interest but nonetheless unknown. This allows us to capture all the following $\theta_0$-estimation scenarios. By setting $\mat{n}_0=\sigma_z$ in \eqnref{eq:sens_H}, we let $\theta_0=(\omega_1-\omega_2)/2$ describe perturbations of the detuning between the cavity frequencies---as originally considered in the EP-based sensing schemes~\cite{hongkong}. By choosing $\mat{n}_0=\sigma_x$ instead, we let $\theta_0$ perturb the coupling strength $g$---as investigated in \citeref{AC1} dealing with non-reciprocal dynamics. Finally, we consider $\mat{n}_0=\id$ ($\mat{n}_0=(1,0;0,0)$), in which case $\theta_0$ describes perturbations of the common frequency $\omega_0$ (of $\omega_1$ for the first cavity only)~\cite{LJ}.

\PRLsection{Linear response in the Fourier domain}%
We consider the sensor to be interacting with Gaussian light~\footnote{However, we do not exclude the possibility of non-Gaussian measurements.}, so it is sufficient to describe its dynamics using the Gaussian formalism~\cite{Ferraro2005,Weedbrook2012};~within which evolution of any bosonic modes $\hat{b}_i$ is fully characterised after defining the vector $\qsvec{S}=\{\hat{q}_1, \hat{q}_2, \dots, \hat{p}_1, \hat{p}_2,\dots\}^\T$ of their quadratures, $\hat{q}_i=\hat{b}_i+\hat{b}_i^\dagger$ and $\hat{p}_i=-\i(\hat{b}_i-\hat{b}_i^\dagger)$, and tracking its mean $\svec{S}\coloneqq\qmean{\qsvec{S}}$ and covariance $\smat{V}$ with entries $\smat{V}_{jk}\coloneqq\frac{1}{2} \qmean{ \{\qsvec{S}_j, \qsvec{S}_k\}}  - \qmean{\qsvec{S}_j}\qmean{\qsvec{S}_k}$. As we are interested in probing the sensor at a particular frequency $\omega$, we focus on the evolution in the Fourier space, in which according to \eqnref{eq:dyn_model} the dynamics of measured outputs, $\qsvec{S}_\mrm{out}^A$ containing quadratures of $\hat{A}_{\ell,\mrm{out}}[\omega]\coloneqq\int\!\d t\,\e^{\i\omega t}\hat{A}_{\ell,\mrm{out}}(t)$, is given by $\svec{S}_\mrm{out}^A = \left(\Id - \kappa\smat{G} \right) \svec{S}_\mrm{in}^A$ and $\smat{V}_\mrm{out}^A  = \left(\Id - \kappa\smat{G} \right) \smat{V}_\mrm{in}^A \left(\Id - \kappa\smat{G} \right)^\T + \kappa \smat{G} \Xi \tilde{\smat{V}}_\mrm{in}^{B} \Xi^\T \smat{G}^\T$, where the covariance of probe outputs $\smat{V}_\mrm{out}^A$ depends also on the overall covariance of input scattering modes, i.e.~$\tilde{\smat{V}}_\mrm{in}^B$ describing correlations between 8 quadratures ($\ell=1,2$) of $\hat{B}_{\ell,\mrm{in}}[\pm\omega]$, see \appref{app:dyns_omega} for derivation. Above, $\Xi$ is the matrix associated with the coupling of cavities to scattering channels, $\Id$ denotes a 4$\times$4 identity matrix, while the central object is the \emph{(linear) response function}~\footnote{Equivalent to the transfer-function matrix arising when the linear dynamics is solved in the Laplace domain instead~\cite{Guta2016}.}:
\begin{equation}
	\smat{G}[\omega] = \sform \left((\omega-\omega_0)\Id - \smat{H}\right)^{-1},
	\label{eq:Greens}
\end{equation}
whose divergent behaviour, as shown, is responsible for the unbounded precision when sensing perturbations. By $\sform=(0,-\id;\id,0)$ we denote above the symplectic form consistent with the notation of \appref{app:dyns_omega}, within which~%
\footnote{Phase-space representation of a matrix $\mat{m}$ reads $\smat{m}\equiv\stransf{\mat{m}}\coloneqq(\Re[\mat{m}],- \Im[\mat{m}];\Im[\mat{m}],\Re[\mat{m}])$, e.g.~$\sform\coloneqq\stransf{\i\,\id}$.}:
\begin{equation}
\smat{H}\coloneqq 	\begin{pmatrix}
	        	\Re[\H]   & - \Im[\H] \\
	            \Im[\H]  & \Re[\H]
            \end{pmatrix}
        = 	\begin{pmatrix}
              0 & g & \gamma_1 & 0 \\
              g & 0 & 0 & -\gamma_2 \\
              -\gamma_1 & 0 & 0 & g \\
              0 & \gamma_2 & g & 0 
        	\end{pmatrix},
	\label{eq:H0_smatrix}
\end{equation}
is the phase-space representation of $\H$. Crucially, \eqnref{eq:Greens} diverges if $\det\!\left((\omega - \omega_{0})\Id - \smat{H}\right)=0$, which can always be assured by tuning loss and gain, as long as the probing is performed \emph{in resonance} ($\omega=\omega_0$) with the common internal frequency. Divergence then occurs when $\det\smat{H}={|\!\det\H|}^2=0$, i.e.~when the generator $\H$ in \eref{eq:sys_H} is, indeed, singular~\footnote{For $\omega\neq\omega_0$, the response function \eref{eq:Greens} can be made divergent only for a balanced system, $\gamma_1=\gamma_2=\gamma$, with strong coupling fixed to $g=\sqrt{\gamma^2+(\omega-\omega_0)^2}>\gamma$, see \appref{app:sense_at_LT}.}.

Considering now $\parVec$-parametrised perturbations specified in \eqnref{eq:sens_H}, the response function \eref{eq:Greens} becomes 
\begin{equation} 
	\smat{G}_{\parVec}[\omega=\omega_0] = \sform \left(\sum_{i = 0} ^{m} \theta_i \smat{n}_i - \smat{H}\right)^{-1}
	\!\!\! =
	\sform \left(\theta_0 \smat{n}_0-\smat{H}_{\bar{\parVec}} \right)^{-1},
	\label{eq:Gomega}
\end{equation}
where $\smat{n}_i$ are the phase-space representations of $\mat{n}_i$~\cite{Note6}. Analogously to \eqnref{eq:sens_H}, we highlight the case when $\theta_0$ is the primary parameter and $\smat{H}_{\bar{\parVec}}\coloneqq\smat{H}-\sum_{i \neq 0} \theta_i \smat{n}_i$. When considering singular dynamics  with $\det\smat{H}=0$, the response function \eref{eq:Gomega} diverges for all $\theta_i=0$ with the character of divergence specified by the matrices $\{\smat{n}_i\}_i$. For instance, when estimating only $\theta_0\approx0$ the singularity may be maintained despite $\bar{\parVec}\neq0$---as long as the condition $\det\smat{H}_{\bar{\parVec}}=0$ is assured also for $\bar{\parVec}\neq0$.

Aiming to study such subtleties and treat $\bar{\parVec}$ as nuisance parameters, we focus on the \emph{two-parameter} setting, $\parVec=\{\theta_0,\theta_1\}$, in which the primary parameter represents $\omega_0$-frequency perturbations, i.e.~$\mat{n}_0=\id$~\cite{LJ}. In contrast, we choose the secondary parameter $\theta_1$ such that its variations either invalidate or maintain the singularity condition, but do not break the PT-symmetry, so that the linearity of dynamics is ensured~\cite{Peng2014}. In \figref{fig:par_space}\textbf{a}, we mark these cases by black and red arrows, respectively, which correspond to \emph{singularity non-preserving} (NS) increase of the coupling $g$~\cite{AC1}, and \emph{singularity preserving} (S) perturbations maintaining $g=\gamma_1=\gamma_2$ and, hence, the EP-condition%
~\footnote{However, the maintenance of EP is not essential when not restricted to PT-preserving perturbations.}.
In particular, the two scenarios lead to \eqnref{eq:sens_H} of the form $\H_{\parVec}= \H_{\theta_1}^\mrm{NS/S} - \theta_0 \id$, where
\begin{align}
	\HNS  \coloneqq \bar{\H} - \theta_1 \sigma_x  
	\quad\text{and}\quad 
	\HS  \coloneqq \bar{\H} - \theta_1(\sigma_x - \i \sigma_z),
	\label{eq:HS_HNS}
\end{align}
and $\bar{\H}$ denotes $\H$ with $g=\gamma_1=\gamma_2=1$~%
\footnote{
Any value of $\chi>0$ such that $g=\gamma_1=\gamma_2=\chi$ (defining a point on the blue dashed line in \figref{fig:par_space}a), can be rescaled to $\chi=1$ by redefining the perturbation parameters $\theta_j$.}.

\PRLsection{Multiparameter estimation of Gaussian states}%
For a quantum system prepared in a state $\rho_\parVec$ parametrised by $\parVec\coloneqq \{\theta_j\}_j$ and a measurement with outcome $\xi$ distributed according to $p(\xi | \parVec)$, the \emph{classical} and \emph{quantum Fisher information matrices} (CFIM and QFIM) read, respectively~\cite{Rafal2020}:
\begin{align}
	\F_{jk} &\coloneqq \meanP{p(\xi | \parVec)}{\partial_{j}\!\ln p(\xi | \parVec)\,\partial_{k}\!\ln p(\xi | \parVec)},
	\label{eq:FCmatrix} \\
	\FQ_{jk} &\coloneqq \Tr[\rho_\parVec\,\tfrac{1}{2}\{L_j,L_k\}],
	\label{eq:FQmatrix}
\end{align}
where $\partial_j \equiv \partial/\partial_{\theta_j}$ is the derivative w.r.t.~any estimated parameter $\theta_j$, while by $\meanP{p(\xi | \parVec)}{\bullet}\coloneqq\int\!\d \xi\, p(\xi | \parVec) \bullet$ we denote the expected value. In the quantum case \eref{eq:FQmatrix}, $\meanP{p(\xi | \parVec)}{\bullet}$ naturally generalises to $\Tr[\rho_\parVec\bullet]$, while $\partial_{j}\!\ln p(\xi | \parVec)$ becomes the \emph{symmetric} logarithmic derivative $L_j$, defined as the solution to $\partial_{j}\rho_\parVec=\tfrac{1}{2}\{\rho_\parVec,L_j\}$~\cite{Helstrom1976,Holevo1982,BraunsteinCaves1994}. 

For any unbiased estimator $\tilde{\parVec}(\bm \xi)$ constructed based on the measurement data $\bm{\xi}=\{\xi_r\}_{r=1}^\nu$ gathered over $\nu$ independent shots,
its \emph{(squared-)error matrix} $\errormat\coloneqq\mean{(\tilde{\parVec}-\parVec)(\tilde{\parVec}-\parVec)^\T}$ satisfies the so-called \emph{quantum Cram\'{e}r-Rao bound} (QCRB)~\cite{Helstrom1976,Holevo1982}:   
\begin{equation}
	\nu\,\errormat \ge   \F^{-1} \ge \FQ^{-1}, 
	\label{eq:QCRB}
\end{equation}
where the first matrix inequality is guaranteed to be saturable by some $\tilde{\parVec}$ in the $\nu\to\infty$ limit, i.e.~for any $\pmat{W}\ge0$ for which $\Tr[\pmat{W}\errormat]$ is then minimised. In contrast, although the second inequality applies to any quantum measurement, as the optimal measurements for distinct parameters $\theta_j$ may not commute (formally $\Tr[\rho_\parVec[L_j,L_k]]\neq0$~\cite{Ragy2016}), it may \emph{not} be generally saturable by any estimator $\tilde{\parVec}$ given some $\pmat{W}\ge0$. However, it can differ at most by a factor of 2 from the minimal $\Tr[\pmat{W}\F^{-1}]$ attained by the optimal measurements~\cite{Rafal2020}. 

In the special case when only a single parameter, say $\theta_i$, is of interest, while the others are treated as nuisance ones (i.e.~$\pmat{W}_{jk}=\delta_{ij}\delta_{ik}$), we denote the lower bounds on the error in \eqnref{eq:QCRB} as $\cerror\theta_{i} \coloneqq \sqrt{[\F^{-1}]_{ii}}$ and $\qerror \theta_{i} \coloneqq \sqrt{[\FQ^{-1}]_{ii}}$, respectively, so that any unbiased estimator of $\theta_i$ satisfies then $\nu\MSE{\tilde{\theta}_i}\ge\cerror^2\theta_{i}\ge\qerror^2\theta_{i}$~\footnote{In fact, in the special case of $\pmat{W}_{jk}=\delta_{ij}\delta_{ik}$ an $\theta_i$-estimator and optimal measurement are guaranteed to exist such that $\nu\MSE{\tilde{\theta}_i}=\cerror^2\theta_{i}=\qerror^2\theta_{i}$~\cite{wojtek}}. This contrasts the \emph{ideal} single-parameter scenario with all parameters apart from $\theta_i$ known, in which case \eqnref{eq:QCRB} simplifies to $\nu\MSE{\tilde{\theta}_i}\ge\cerrorS^2\theta_{i}\ge\qerrorS^2\theta_{i}$ with $\cerrorS\theta_{i} \coloneqq 1/\sqrt{[\F]_{ii}}$ and $\qerrorS\theta_{i} \coloneqq 1/\sqrt{[\FQ]_{ii}}$.

Considering any Gaussian measurement of the probe outputs $\hat{A}_{\ell,\mrm{out}}$, its outcome $\svec{x}$ is normally distributed $\svec{x}\sim\exp\{-\tfrac{1}{2}(\svec{x}-\bar{\svec{x}})\smat{C}^{-1}(\svec{x}-\bar{\svec{x}})^T\}$~\cite{Cenni2022,Weedbrook2012}, with both the mean vector $\bar{\svec{x}}(\parVec)$ and the covariance matrix $\smat{C}(\parVec)$ depending on the parameter set $\parVec$. The CFIM \eref{eq:FCmatrix} takes then a special form~\cite{KayBook}:
\begin{equation}
	\F_{jk} = \frac{1}{2} \Tr\!\left[\smat{C}^{-1} (\partial_j \smat{C}) \smat{C}^{-1} (\partial_k \smat{C})  \right] + (\partial_j \bar{\svec{x}})^\T \smat{C}^{-1}  (\partial_k \bar{\svec{x}}),
	\label{eq:CFIM_Gaussian}
\end{equation}
so that in case of a heterodyne measurement being performed one should replace $\bar{\svec{x}} = \svec{S}_\mrm{out}^A$ and $\smat{C} = \smat{V}_\mrm{out}^A + \Id$ in the above~\cite{Weedbrook2012}. More generally, allowing for arbitrary quantum measurements performed on a Gaussian state of mean $\svec{S}(\parVec)$ and covariance $\smat{V}(\parVec)$, the QFIM \eref{eq:FQmatrix} reads $\FQ_{jk} = \frac{1}{2} \Tr\!\left[ \smat{L}_j \partial_k \smat{V} \right] + (\partial_j  \svec{S})^\T \smat{V}^{-1}  (\partial_k \svec{S})$~\cite{Monras2010,Monras2013,Nichols,Jing,Safranek2018}, with the matrix $\smat{L}_j$ possessing a nontrivial form detailed in \appref{app:NoisyApprox}. However, by generalising the results of~\citeref{Jiang2014} to the multiparameter case, we show in \appref{app:NoisyApprox} that QFIM can always be approximated for noisy Gaussian states, e.g.~highly thermalised, as
\begin{align}
	\FQ_{jk} &\approx  \frac{1}{2} \Tr\!\left[ \smat{V}^{-1} (\partial_{j} \smat{V}) \smat{V}^{-1} (\partial_k \smat{V})\right] + (\partial_j \svec{S})^\T \smat{V}^{-1}  (\partial_k  \svec{S}), 
	\label{eq:FmatrixNoisyparVec}
\end{align}
as long as the spectrum of $\smat{V}$ satisfies $\lambda_\mrm{min}(\smat{V})\gg1$~%
\footnote{\eqnref{eq:FmatrixNoisyparVec} is equivalently the leading term of the series-like expression for QFIM \eref{eq:FQmatrix} derived in~\cite{Safranek2018}}. Crucially, the \emph{noisy QFIM} \eref{eq:FmatrixNoisyparVec} is then determined by the response function \eref{eq:Gomega}, $\smat{G}_{\parVec}$, with $\svec{S}(\parVec) = \left(\Id - \kappa\smat{G}_{\parVec} \right) \svec{S}_\mrm{in}^A$ and $\smat{V}(\parVec)  = \left(\Id - \kappa\smat{G}_{\parVec} \right) \smat{V}_\mrm{in}^A \left(\Id - \kappa\smat{G}_{\parVec} \right)^\T + \kappa \smat{G}_{\parVec} \Xi \tilde{\smat{V}}_\mrm{in}^{B} \Xi^\T \smat{G}_{\parVec}^\T$.

\PRLsection{Single-parameter sensitivities}%
\label{sec:single_par}
When sensing a single parameter $\theta_0$ with others perfectly known, we set $\bar{\parVec} =0$ in \eqnref{eq:Gomega}, so that $\smat{H}_{\bar{\parVec} =0} =  \smat{H}$ and only the entry $j=k=0$ in Eq.~(\ref{eq:FmatrixNoisyparVec}) is relevant. Now, whenever the generator $\smat{H}$ is \emph{non-singular}, the response function \eref{eq:Gomega} admits a Neumann series $\smat{G}_{\theta_{0}} = -\sform \smat{H}^{-1} (\Id + \sum_{k=1}^{\infty} \theta_{0}^k (\smat{n}_{0} \smat{H}^{-1})^{k})$~%
\footnote{Not to be confused with a Taylor expansion in $\theta_0^{-1}$~\cite{LJ}, which generally cannot be trusted, see \appref{app:comp_to_LJ}.}
with $\lim_{\theta_0\to0}\smat{G}_{\theta_0} = -\sform \smat{H}^{-1}$. As a result, \eqnref{eq:FmatrixNoisyparVec} then reads $\FQ_{00}\approx C+\order{\theta_0}$ with some $\theta_0$-independent finite constant $C$, see \appref{app:single_par_nonsing}, so that the estimation error cannot vanish as $\theta_0\to0$, at which its minimal value is given by $\qerrorS\theta_{0} = 1/\sqrt{C}>0$~%
\footnote{The minimal error may still diminish, e.g., with the mean number of photons, $\bar{n}$, in the probes, for which $\svec{S}_\mrm{in},\smat{V}_\mrm{in}\propto \bar{n}$ and, hence, \eqnref{eq:FmatrixNoisyparVec} yields the \emph{standard quantum limit}~\cite{AC1}, i.e.~$\FQ_{00}\propto \bar{n}$ for $\bar{n}\gg1$. 
}.
In contrast, when the generator $\smat{H}$ is \emph{singular}, the Sain-Massey (SM) expansion~\cite{Sain1969,Howlett1982,Howlett2001,avrachenkov2013analytic} of the response function \eref{eq:Gomega} applies, i.e.~$\smat{G}_{\theta_0}  = \sform \theta_0^{-s} \sum_{k=0}^r \theta_0^k \X_k$ with each non-zero coefficient, $\X_k(\smat{H},\smat{n}_0)$, generally depending on $\smat{H}$ and the perturbation matrix $\smat{n}_0$ up to some (possibly infinite) $r$. The singularity is then characterized by the order $s\in \mathbb{N}_{+}$ of the pole in the SM expansion, which for any $\smat{G}_{\theta_0}$ is found by a recursive procedure described in \appref{app:SM}. \eqnref{eq:FmatrixNoisyparVec} implies then that $\FQ_{00} \approx \theta_0^{-2s} \left[A + \order{\theta_0}\right]$ for some $\theta_0$-independent $A>0$, see \appref{app:single_par_sing}, so that the estimation error $\qerrorS\theta_{0}  = 1/\sqrt{\FQ_{00}} \propto \theta_{0}^{s}$ vanishes as $\theta_0\to0$ at a rate dictated by $s$. The above observations prove that the singularity of $\smat{H}$ is essential for unbounded sensitivity, while other system properties, e.g.~exhibiting an EP and/or balancing the loss and gain rates~\cite{LJ}, can only play a role in determining the pole-order $s$. For the system considered, we evaluate in \appref{app:SM_examples} the SM expansions for the choices of the perturbation matrix $\smat{n}_0$ listed below \eqnref{eq:sens_H}. We observe that only in the case of two-mode symmetric frequency perturbations, $\smat{n}_0=\Id$~\cite{LJ}, the pole is second-order, while in all other cases it is of order one. We focus on the former in \figref{fig:CRB}, where we plot the exact, numerically obtained, estimation error (black lines)---neither noisy \eref{eq:FmatrixNoisyparVec} nor $\theta_0\approx0$ approximations are made---that consistently follows a quadratic scaling.

\begin{figure}[t!]
\includegraphics[width=\columnwidth]{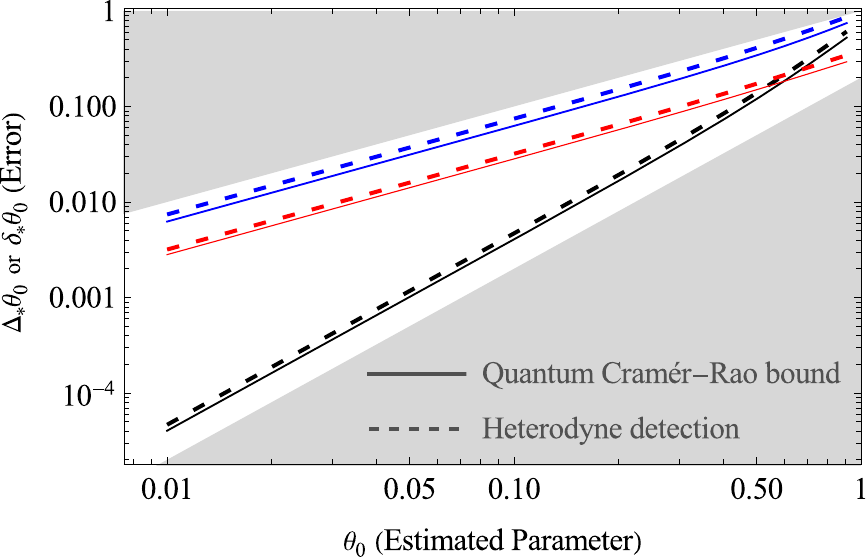}	
	\caption{\textbf{Estimation errors} attained by heterodyne detection (dashed lines), $\cerrorS\theta_{0}$ or $\cerror\theta_{0}$, vs quantum Cram\'{e}r-Rao bounds applicable to any measurement (solid lines), $\qerrorS\theta_{0}$ or $\qerror\theta_{0}$. Within the ideal single-parameter scenario, both $\errorS\theta_{0}$ (black) follow quadratic scaling. When $\theta_1$ constitutes a nuisance parameter whose variations maintain the singularity, the estimation errors $\errorM\theta_{0}$ scale linearly with $\theta_0$, e.g.~for $\theta_{1} = 0$ (red) or $\theta_{1} = 0.25$ (blue). In the above, all probe and scattering input modes are chosen to be in a thermal state with average photon-number $n_\mrm{A}=n_\mrm{B}=1$, while the intermode couplings are set to $\kappa=g=1$ and $\eta_1=1$, $\eta_2=3$.
	}
	\label{fig:CRB}
\end{figure}

\PRLsection{Impact of nuisance parameters}%
Dealing with the multiparameter scenario, in order to expand the corresponding response function \eref{eq:Gomega} around $\theta_0=0$, it is now $\smat{H}_{\bar{\parVec}}$ appearing in \eqnref{eq:Gomega} whose singularity must be verified. Still, when $\smat{H}_{\bar{\parVec}}$ is non-singular, $\smat{G}_{\parVec}$ admits again a Neumann series with $\lim_{\theta_0\to0}\smat{G}_{\parVec} = -\sform \smat{H}_{\bar{\parVec}}^{-1}$. Hence, $\FQ_{00}$ must be bounded as before, with the estimation error $\qerror\theta_{0}=\sqrt{[\FQ^{-1}]_{00}}\ge 1/ \FQ_{00}$ forbidden to vanish as $\theta_0\to0$. When $\smat{H}_{\bar{\parVec}}$ is singular instead, it is the SM expansion of $\smat{G}_{\parVec}$ that is valid, with the expansion coefficients $\X_k$ depending now also on nuisance parameters $\bar{\parVec}$. However, the sensitivity scaling in $\theta_{0}$ may not be associated directly with the pole-order any more, as the estimation error $\qerror\theta_{0}=\sqrt{[\FQ^{-1}]_{00}}$ involves the inverse of the QFIM and, hence, is generally affected by correlations between different unknown parameters. We show this explicitly by focussing on the two-parameter estimation scenario with the primary $\theta_0$ generated by $\smat{n}_0=\Id$, while $\theta_1$ acts as nuisance parameter of either $\HNS$ or $\HS$ in \eqnref{eq:HS_HNS}---black or red arrows in \figref{fig:par_space}\textbf{a}.

In case of $\HNS$, the error in estimating $\theta_0$ may not vanish as $\theta_0\to0$ for any $\theta_1\ne0$, for which $\smat{H}_{\theta_1}$ is \emph{non-singular}. We show this explicitly in \appref{app:nonsing_pert} by considering a thermal state with zero displacement as the probe input, for which $[\FQ^{-1}]_{00}= 1/\FQ_{00}\propto\theta_1^2$ at $\theta_0=0$. At the  singular point $\theta_1=0$, however, we also show---by resorting correctly to the SM expansion---that $\qerror\theta_{0}=\qerrorS\theta_{0}+O(\theta_0^4)$, with the impact of the nuisance parameter being then ignorable and single-parameter results being applicable ($\qerrorS\theta_{0}\propto\theta_0^2$, black lines in \figref{fig:CRB}). Turning to $\HS$, which importantly leads to $\smat{H}_{\theta_1}$ being \emph{singular} for any $\theta_1$, we demonstrate in \appref{app:sing_pert} that the SM expansion $\smat{G}_{\theta_0, \theta_1}= \sform \theta_0^{-2} \sum_{k=0}^1 \theta_0^k\,\X_k(\theta_1)$ exhibits a second-order pole. Substituting the expansion into \eqnref{eq:FmatrixNoisyparVec}, we obtain the entries of QFIM:~$\FQ_{00}\approx \alpha \theta_0^{-4}    +  2\beta \theta_0^{-3}  + \gamma \theta_0^{-2}$, $\FQ_{11}\approx  \alpha \theta_0^{-2}$ and $\FQ_{01}\approx \alpha \theta_0^{-3}+\beta \theta_0^{-2}$, with $\alpha,\gamma>0$ and $\beta\in\mathbb{R}$ (see \appref{app:sing_pert} for details). This implies that the error in estimating the primary parameter reads
\begin{align}
     \qerror\theta_{0} =  
     \left(\FQ_{00} - \frac{ \FQ_{01}  \FQ_{10}}{ \FQ_{11}}\right)^{-\frac{1}{2}}
     \!\approx
     \left(\gamma-\frac{\beta^2}{\alpha}\right)^{-\frac{1}{2}}\,\theta_0,
     \label{eq:error_nuis}
\end{align}
and scales linearly with $\theta_0$---the presence of nuisance $\theta_1$ precludes the scaling from following the quadratic behaviour dictated by the pole. We show this explicitly in \figref{fig:CRB}, where we plot the exact estimation error $\qerror\theta_{0}$ in red (blue) for $\HS$ with $\theta_1=0$ ($\theta_1=0.25$), as well as $\cerror\theta_{0}$ attained by heterodyne detection (dashed lines) that also follow the linear scaling.

\PRLsection{Conclusions}%
We establish the tools necessary to assess quantum Gaussian systems in sensing linear perturbations away from singularities. We investigate the divergence of sensitivity then exhibited, while clarifying that other dynamical properties, e.g.~operation at an exceptional point, fulfilment of lasing conditions or non-reciprocity, do not play a primary role. However, we demonstrate that nuisance parameters may then strongly affect the performance, i.e.~the rate at which the sensitivity diverges. Such a phenomenon resembles the setting of quantum superresolution problems~\cite{Tsang2016}, in which the lack of a spatial reference disallows to resolve infinitesimal separations between objects~\cite{Chrostowski2017,Rehacek2017,Grace2020,Sorelli2021,Almeida2021}. We leave open the question of how our results change, if one accounts for any prior knowledge about the sensed and/or nuisance parameters~\cite{Rubio2020,Rubio2021,Mehboudi2022}.

\begin{acknowledgments}
\PRLsection{Acknowledgments}%
We thank Dayou Yang, M{\u{a}}d{\u{a}}lin Gu{\c{t}}{\u{a}}, Marcin Jarzyna, Mohammad Mehboudi, Giacomo Sorelli and Konrad Banaszek for helpful comments. This work has been supported by the Quantum Optical Technologies project that is carried out within the International Research Agendas programme of the Foundation for Polish Science co-financed by the European Union under the European Regional Development Fund. RWC also acknowledges support from the Polish National Science Centre (NCN) under the Maestro Grant No.~DEC-2019/34/A/ST2/00081.
\end{acknowledgments}

\bibliographystyle{apsrev4-2}
\bibliography{EP_metro}


\appendix
\setcounter{secnumdepth}{3} 

\section{Linear quantum Langevin equations}
\label{app:Langevin_eqs}
Treating the optical modes at different frequencies $\omega'$ within the dissipation/driving and probe channels ($\hat{B}_{\ell}$ and $\hat{A}_{\ell}$ with $\ell=1,2$, respectively) as the ones forming an effective bath of the system, we can describe the complete dynamics by specifying apart from the system Hamiltonian, $\hat{H}_{S}$, the ones of the bath, $\hat{H}_B$, as well as the system-bath interaction, $\hat{H}_{SB}$, in the continuum limit~\cite{Gardiner1985,Steck}:
\begin{widetext}
	\begin{subequations}
		\label{continuum}
		\begin{align}
			\hat{H}_S    
			&= 
			\omega_1 \hat{a}_1^\dagger  \hat{a}_1 +  \omega_2 \hat{a}_2^\dagger  \hat{a}_2  + g \left(\hat{a}_1^\dagger \hat{a}_2 + {\rm H.c.} \right), \label{eq:cavityHam}\\
			\hat{H}_{B}  
			&=  
			\int_{-\infty}^{\infty} \!\d \omega^\prime\,\omega^\prime
			\left(   
			\hat{A}_{1,\omega^\prime}^\dagger \hat{A}_{1,\omega^\prime} +  
			\hat{A}_{2,\omega^\prime}^\dagger \hat{A}_{2,\omega^\prime} +  
			\hat{B}_{1,\omega^\prime}^\dagger \hat{B}_{1,\omega^\prime} +   
			\hat{B}_{2,\omega^\prime}^\dagger \hat{B}_{2,\omega^\prime}
			\right)\\
			\hat{H}_{SB} 
			&=  
			\int_{-\infty}^{\infty} \!\d \omega^\prime\,
			\left(
			\sqrt{\frac{\kappa(\omega^\prime)}{2\pi}} \left[\hat{a}_1 \hat{A}_{1,\omega^\prime}^\dagger + {\rm H.c.} \right] + 
			\sqrt{\frac{\kappa(\omega^\prime)}{2\pi}} \left[\hat{a}_2 \hat{A}_{2,\omega^\prime}^\dagger + {\rm H.c.} \right] \right. \nonumber \\
			&\qquad\qquad\qquad
			\left.+ \sqrt{\frac{\eta_1(\omega^\prime)}{2\pi}} \left[\hat{a}_1 \hat{B}_{1,\omega^\prime}^\dagger + {\rm H.c.} \right] +  
			\sqrt{\frac{\eta_2(\omega^\prime)}{2\pi}} \left[ \hat{a}_2^\dagger  \hat{B}_{2,\omega^\prime}^\dagger  + {\rm H.c.}  \right]
			\right),
			\label{eq:H_SB}
		\end{align}
	\end{subequations}
\end{widetext}	                 
with each optical mode satisfying for any $\omega^\prime$, $\omega^\dprime$:
\begin{equation}
	\left[\hat{A}_{\ell,\omega^\prime}, \hat{A}_{\ell,\omega^\dprime}^\dagger\right]=
	\left[\hat{B}_{\ell,\omega^\prime}, \hat{B}_{\ell,\omega^\dprime}^\dagger\right] = 
	\delta(\omega^\prime  - \omega^\dprime).
	\label{eq:comm_rel}
\end{equation}

As a result, defining the total Hamiltonian as $\hat{H}_T:=\hat{H}_S+\hat{H}_B+\hat{H}_{SB}$, we may write the Heisenberg-Langevin equations of motion describing dynamics of cavity modes, $\hat{a}_1(t)$ and $\hat{a}_2(t)$, in the Heisenberg picture as ($\hbar=1$):
\begin{align}
	\label{eq:a12dot}
	\partial_t \hat{a}_{\ell}(t)
	&= -\i \left[\hat{a}_{\ell}(t), \hat{H}_T \right] = -\i \left[\hat{a}_{\ell}(t), \hat{H}_S+\hat{H}_{SB} \right], \nonumber \\
	&= -\i \omega_1 \,\hat{a}_{\ell}(t) - \i g \,\hat{a}_{\neg{l}}(t)  - \i\!\int_{-\infty}^{\infty} \!\!\!\d \omega^\prime \sqrt{\frac{\kappa(\omega^\prime)}{2\pi}} \hat{A}_{\ell,\omega^\prime}(t) \nonumber   \\
	&- \i  \int_{-\infty}^{\infty} \!\d \omega^\prime\, \sqrt{\frac{\eta_{\ell}(\omega^\prime)}{2\pi}} \hat{O}_{\ell,\omega^\prime}(t),
\end{align}
where $\hat{O}_2:=\hat{B}^\dagger_2$ in contrast to $\hat{O}_1:=\hat{B}_1$---with the second cavity being actually driven via the channel $\hat{B}_2$, see \eqnref{eq:H_SB}.

In analogous way, we obtain the Heisenberg equations for all the optical modes as
\begin{align}
	\label{eq:A_modes_langevin}
	\partial_t \hat{A}_{\ell,\omega^\prime}(t)
	& = - \i \omega^\prime \hat{A}_{\ell,\omega^\prime}(t)-\i \sqrt{\frac{\kappa(\omega^\prime)}{2\pi}}\hat{a}_{\ell}(t) \\
	\label{eq:B_modes_langevin}
	\partial_t \hat{B}_{\ell,\omega^\prime}(t)
	& = - \i \omega^\prime \hat{B}_{\ell,\omega^\prime}(t)-\i \sqrt{\frac{\eta_{\ell}(\omega^\prime)}{2\pi}}\hat{o}_{\ell}(t)
\end{align}
with $\hat{o}_1:=\hat{a}_1$ and $\hat{o}_2:=\hat{a}^\dagger_2$ according to \eqnref{eq:H_SB}. Hence, starting from some remote initial time $t_i$ at which the operators $\hat{A}_{\ell,\omega^\prime}(t_i)$ and $\hat{B}_{\ell,\omega^\prime}(t_i)$ satisfy the same commutation relations as in \eqnref{eq:comm_rel}, the equations of motion (\ref{eq:A_modes_langevin}-\ref{eq:B_modes_langevin}) can be formally integrated, yielding
\begin{align}
	\label{eq:A_modes_t}
	\hat{A}_{\ell,\omega^\prime}(t)
	& =
	\hat{A}_{\ell,\omega^\prime}(t_i)\,\e^{-\i\omega^\prime(t-t_i)} \nonumber \\
	& - \i\sqrt{\frac{\kappa(\omega^\prime)}{2\pi}} \int_{t_i}^t \!\d D\, \hat{a}_{\ell}(D)\,\e^{-\i\omega^\prime(t-D)}
\end{align}
and similarly for $\hat{B}_{\ell,\omega^\prime}(t)$.

Now, substituting the above integral solutions into \eqnref{eq:a12dot} and invoking the \emph{Markov approximation} under which the interaction rates are assumed to be frequency independent, i.e.:
\begin{equation}
	\kappa (\omega^\prime) \approx \kappa, \quad \eta_1(\omega^\prime) \approx \eta_1, \quad \eta_2(\omega^\prime) \approx \eta_2,
\end{equation} 
we may rewrite \eqnref{eq:a12dot} in a vector-form consistent with \eqnref{eq:dyn_model} (dropping w.l.o.g.~the $t$-dependence):
\begin{align}\label{eq:a1a2}
	\begin{pmatrix}
		\partial_t \hat{a}_1  \\  \partial_t  \hat{a}_2 
	\end{pmatrix}                    
	&=  -\i  
	\begin{pmatrix}
		\omega_0 - \i (\eta_1 + \kappa)/2     &        g  \\
		g                                &            \omega_0 + \i (\eta_2 - \kappa)/2              
	\end{pmatrix}
	\begin{pmatrix}
		\hat{a}_1   \\   \hat{a}_2
	\end{pmatrix}  \nonumber \\&+   \begin{pmatrix}    \sqrt{\kappa}  \hat{A}_{1,\mrm{in}}  \\ \sqrt{\kappa}  \hat{A}_{2,\mrm{in}}        \end{pmatrix} 
	+ \begin{pmatrix}
		\sqrt{\eta_1} \hat{B}_{1,\mrm{in}}        \\   - \sqrt{\eta_2} \hat{B}_{2,\mrm{in}}^\dagger 
	\end{pmatrix},
\end{align}
where for each optical mode, $\hat{\mathcal{O}}_{\omega^\prime} \in \{\hat{A}_{\ell,\omega^\prime},\hat{B}_{\ell.\omega^\prime}\}$ with $\ell=1,2$, we define its effective input field as
\begin{equation}
	\hat{\mathcal{O}}_{\mrm{in}}(t) := -\i \int_{-\infty}^{\infty} \! \frac{\d\omega^{\prime}}{\sqrt{2\pi}}\, \hat{\mathcal{O}}_{\omega^\prime}(t_i)\, \e^{-\i \omega^\prime (t - t_i)},
\end{equation}
which satisfies $\left[\hat{\mathcal{O}}_{\mrm{in}}(t),\hat{\mathcal{O}}_{\mrm{in}}(t')^\dagger\right]=\delta(t-t')$.

On the other hand, we may define the measured output fields of the monitored modes ($\hat{A}_\ell$ with $\ell=1,2$) as 
\begin{equation}
	\hat{A}_{\ell,\mrm{out}}(t) := -\i \int_{-\infty}^{\infty} \!\frac{\d\omega^{\prime}}{\sqrt{2\pi}}\, \hat{A}_{\ell,\omega^\prime}(t_f)\, \e^{-\i \omega^\prime (t - t_f)},
\end{equation}
where $t_f$ is some final time in the remote future, so that by integrating \eqnref{eq:A_modes_langevin} now backwards in time we have
\begin{align}
	\label{eq:A_modes_t_future}
	\hat{A}_{\ell,\omega^\prime}(t)
	& =
	\hat{A}_{\ell,\omega^\prime}(t_f)\,\e^{-\i\omega^\prime(t-t_f)} \nonumber \\
	& + \i\sqrt{\frac{\kappa(\omega^\prime)}{2\pi}} \int_t^{t_f} \!\d D\, \hat{a}_{\ell}(D)\,\e^{-\i\omega^\prime(t-D)}.
\end{align}
Hence, by equating \eqnsref{eq:A_modes_t}{eq:A_modes_t_future} and performing the integral over frequencies $\omega'$, while invoking again the Markov approximation, we obtain the \emph{input-output relation} for the probe fields as:
\begin{equation}\label{eq:IO}
	\hat{A}_{\ell,\mrm{out}}(t)  = \hat{A}_{\ell,\mrm{in}}(t)  - \sqrt{\kappa}\, \hat{a}_{\ell}(t).
\end{equation}

\section{Gaussian dynamics in the Fourier domain}
\label{app:dyns_omega}
Let us define the Fourier transform for any vector of time-dependent bosonic operators, say $\qvec{c}(t)=\{\hat{c}_1(t),\hat{c}_2(t),\dots\}^\T$, as $\qvec{c}[\omega]:=\mathcal{F}_{\omega} [\qvec{c}] := \int\!dt\, \e^{\i \omega t} \qvec{c}(t)$, so that the vectors containing then the corresponding position and momentum quadratures of all the modes in the Fourier basis read:~$\qvec{q}^c[\omega]=\qvec{c}[\omega] + \left(\qvec{c}[\omega]\right)^\dagger$ and $\qvec{p}^c[\omega]=-\i (\qvec{c}[\omega] - \left(\qvec{c}[\omega]\right)^\dagger)$, respectively, with the Hermitian conjugation being performed entry-wise, i.e.~$\left(\qvec{c}[\omega]\right)^\dagger=\{\hat{c}_1[\omega]^\dagger,\hat{c}_2[\omega]^\dagger,\dots\}^\T$. However, let us observe that $\left(\qvec{c}[\omega]\right)^\dagger = \left(\int\!dt\, \e^{\i \omega t} \qvec{c}(t)\right)^\dagger = \int\!dt\, \e^{-\i \omega t} \qvec{c}^\dagger(t)=\mathcal{F}_{-\omega}[\qvec{c}^\dagger]=\qvec{c}^\dagger[-\omega]$, so that the conjugate vector can be equivalently interpreted as containing Fourier transforms of operator conjugates, but evaluated at the \emph{negative} frequency.

As a result, we may rewrite the dynamical equation \eqref{eq:a1a2} in the Fourier domain as 
\begin{equation}
	- \i \omega \qvec{a}[\omega] = -\i (\omega_0 \id + \H) \qvec{a}[\omega] + \qvec{A}_\mrm{in}[\omega] + 	\qvec{B}_\mrm{in}[\omega],
	\label{eq:dyn_fourier}
\end{equation}
where we have defined similarly to \eqnref{eq:dyn_model}:~$\qvec{a}:=\{\hat{a}_1,\hat{a}_2\}^\T$, $\qvec{A}_\mrm{in}:=\{\sqrt{\kappa} \hat{A}_{1,\mrm{in}}, \sqrt{\kappa}\hat{A}_{2,\mrm{in}}\}^\T$ and $\qvec{B}_\mrm{in}:=\{\sqrt{\eta_1} \hat{B}_{1,\mrm{in}},- \sqrt{\eta_2} \hat{B}_{2,\mrm{in}}^\dagger\}^\T$ as the operator-vectors containing cavity, probing and scattering modes, respectively. Hence, by re-expressing any (creation or annihilation) operator appearing in \eqnref{eq:dyn_fourier} in terms of its quadratures, i.e.~$\hat{o}[\omega]=\tfrac{1}{2}(\hat{q}^o[\omega]+\i\hat{p}^o[\omega])$, and using the fact that $\hat{q}^{o^\dagger}[\omega]=\hat{q}^{o}[-\omega]$ and $\hat{p}^{o^\dagger}[\omega]=-\hat{p}^{o}[-\omega]$,
we may rewrite \eqnref{eq:dyn_fourier} with help of the quadrature vectors:
\begin{align}
	\qsvec{S}^{S}[\omega] &:= 
	\begin{pmatrix}
		\hat{q}_1 [\omega] \\ \hat{q}_2[\omega]  \\ \hat{p}_1 [\omega]  \\  \hat{p}_2[\omega]
	\end{pmatrix}, \quad
	\qsvec{S}_\mrm{in}^{\bullet}[\omega] := 
	\begin{pmatrix}
		\hat{q}^{\bullet_\mrm{in}}_1[\omega] \\ \hat{q}^{\bullet_\mrm{in}}_2[\omega]  \\ \hat{p}^{\bullet_\mrm{in}}_1[\omega]  \\  \hat{p}^{\bullet_\mrm{in}}_2[\omega]
	\end{pmatrix}  
	\label{eq:Svectors}
\end{align}
with $\bullet=A/B$ representing either probing or scattering modes, in order to importantly obtain the phase-space representation of the dynamics \eref{eq:a1a2} in the form:
\begin{align}
	-\omega\sform \qsvec{S}^{S}[\omega]  
	& = 
	-(\omega_{0} \Id + \smat{H} ) \sform \qsvec{S}^{S}[\omega] + \sqrt{\kappa} \qsvec{S}^{A}_\mrm{in}[\omega] 
	\nonumber \\
	&+  \smat{K}_{B_1}\qsvec{S}^{B}_\mrm{in}[\omega] + \smat{K}_{B_2}\qsvec{S}^{B}_\mrm{in}[-\omega],
	\label{eq:dyn_phasespace}
\end{align}
where any (2x2) matrix $\mat{m}$ appearing in \eqnref{eq:dyn_fourier} is now rewritten to act in the (symplectic) phase space as
\begin{equation}
	\smat{m}
	\equiv	\stransf{\mat{m}} 
	:= 	\begin{pmatrix}
		\Re[\mat{m}]   & - \Im[\mat{m}] \\
		\Im[\mat{m}]  & \Re[\mat{m}]
	\end{pmatrix}, 	
	\label{eq:phase_space_rep}
\end{equation}
so that $\smat{H}=\stransf{\mat{H}}$ consistently with \eqnref{eq:H0_smatrix}, $\Id=\stransf{\id}$ is just a 4x4 identity matrix, and $\sform:=\stransf{\i\id}$ is the symplectic form playing the role of the imaginary unit appearing in \eqnref{eq:dyn_fourier}. Moreover, we have defined above $\smat{K}_{B_1}:=\mrm{diag}\{\sqrt{\eta_1},0,\sqrt{\eta_1},0\}$ and $\smat{K}_{B_2}:=\mrm{diag}\{0,-\sqrt{\eta_2},0,\sqrt{\eta_2}\}$ as the effective matrices coupling the system to the first and the second scattering channels, $\hat{B}_1$ and $\hat{B}_2$, respectively. As depicted in \figref{fig:model}, these are independently causing dissipation and gain in the cavities, so distinct modes at either positive or negative frequencies contribute in the two cases---and so the quadratures of $\hat{B}_{1,\mrm{in}}[\omega]$ and $\hat{B}_{2,\mrm{in}}[-\omega]$ appear in \eqnref{eq:dyn_phasespace}.

Performing the Fourier transform of the input-output relation \eref{eq:IO} in a similar manner, we obtain 
\begin{equation}
	\qsvec{S}_\mrm{out}^{A}[\omega] = \qsvec{S}_\mrm{in}^{A}[\omega ] - \sqrt{\kappa} \qsvec{S}_\mrm{in}^{S}[\omega]
	\label{eq:in-out_phasespace}
\end{equation}
with $\qsvec{S}_\mrm{out}^{A}$ being analogously defined to $\qsvec{S}_\mrm{in}^{A}$ in \eqnref{eq:Svectors} for the output probing modes, i.e.~the ones being actually measured. 

Hence, we can now eliminate the system degrees of freedom in \eqnref{eq:dyn_phasespace} with help of \eref{eq:in-out_phasespace}, in order to obtain the expression for the output quadratures being probed in terms of the input modes, i.e.
\begin{align}\label{eq:SAout}
	\qsvec{S}_\mrm{out}^{A}[\omega] &= (\Id  - \kappa \smat{G}[\omega]) \qsvec{S}_\mrm{in}^{A}[\omega] \nonumber \\& - \sqrt{\kappa} \smat{G}[\omega] \left( \smat{K}_{B_1} \qsvec{S}_\mrm{in}^{B}[\omega] + \smat{K}_{B_2}  \qsvec{S}^{B}_\mrm{in}[-\omega] \right),
\end{align}
where $\smat{G}[\omega] := \sform \left((\omega-\omega_0)\Id - \smat{H}\right)^{-1}$ is the response (Green's) function stated in \eqnref{eq:Greens}.

Denoting the average amplitude of $\qsvec{S}$ by $\svec{S} := \langle \qsvec{S} \rangle$, and further assuming the scattering channels to be initialised with $\svec{S}^{B}_\mrm{in}[\omega]   =  \svec{S}^{B}_\mrm{in}[-\omega]   = 0$, we conclude that 
\begin{equation}\label{eq:Sout}
	\svec{S}^{A}_\mrm{out}[\omega] =  \left(\Id  - \kappa \smat{G}[\omega]\right) \svec{S}_\mrm{in}^{A}[\omega].
\end{equation}

In order to find a similar input-output relation determining the covariance matrix of the output modes being measured, i.e.~$\smat{V}_\mrm{out}^A[\omega]$ with entries $[\smat{V}_\mrm{out}^A]_{jk}=\frac{1}{2} \qmean{ \{[\qsvec{S}_\mrm{out}^{A}]_j, [\qsvec{S}_\mrm{out}^{A}]_k\}}  - \qmean{[\qsvec{S}_\mrm{out}^{A}]_j}\qmean{[\qsvec{S}_\mrm{out}^{A}]_k}$, we formally define the scattering matrix $\mathcal{S}$ for the entire input-output relation connecting all the input and output modes as follows 
\begin{equation}
	\begin{pmatrix}
		\qsvec{S}^{A}_\mrm{out}[\omega] \\ \qsvec{S}^{B}_\mrm{out}[\omega] \\ \qsvec{S}^{B}_\mrm{out}[-\omega] 
	\end{pmatrix}  = \mathcal{S}[\omega] \begin{pmatrix}
		\qsvec{S}^{A}_\mrm{in}[\omega] \\ \qsvec{S}^{B}_\mrm{in}[\omega] \\ \qsvec{S}^{B}_\mrm{in}[-\omega]
	\end{pmatrix}.
\end{equation}
However, as we are interested only in the probe output $A$, we observe that it is sufficient to determine only the first row of the $\mathcal{S}$-matrix that is fully specified by the relation \eref{eq:SAout}, i.e.:  
\begin{equation}
	\mathcal{S}[\omega] =  
	\left(
	\begin{array}{c|c}
		\Id - \kappa \smat{G}[\omega] & -\sqrt{\kappa} \smat{G}[\omega] \Xi\\
		\hline
		\dots&\dots 
	\end{array}
	\right),
	\label{eq:Smatrix}
\end{equation}
where $\Xi := (\smat{K}_{B_1}~|~\smat{K}_{B_2})$ is a 4x8 matrix, responsible for coupling the scattering modes both at the positive and negative frequency to the measured output mode. As the probing modes are uncorrelated from the scattering ones at the input, we may generally write the total covariance matrix of all the output modes as $\smat{V}_\mrm{out}[\omega] = \mathcal{S} (\smat{V}_\mrm{in}^{A}[\omega]\oplus \tilde{\smat{V}}_\mrm{in}^{B}[\omega]) \mathcal{S}^\T$~\cite{Ferraro2005}, where by $\tilde{\smat{V}}_\mrm{in}^{B}[\omega]$ we effectively denote the overall (8x8) covariance matrix describing correlations between all input scattering modes, i.e.~also between ones at $+\omega$ and $-\omega$. 

As a consequence, focussing only on the output modes being measured, it follows from \eqnref{eq:Smatrix} that
\begin{align}\label{eq:Vout}
	\smat{V}_\mrm{out}^{A}[\omega] &= \left(\Id - \kappa \smat{G}[\omega]\right) \smat{V}_\mrm{in}^{A}[\omega] \left(\Id - \kappa \smat{G}[\omega]\right)^\T \nonumber\\ 
	&+ \kappa \smat{G}[\omega] \Xi \tilde{\smat{V}}_\mrm{in}^{B}[\omega] \Xi^\T \smat{G}[\omega]^\T,
\end{align}
where the second term can be further simplified by noting that $\tilde{\smat{V}}_\mrm{in}^{B}[\omega] = \smat{V}_\mrm{in}^B[\omega] \oplus \smat{V}_\mrm{in}^{B}[-\omega]$, which yields 
\begin{align}
	\Xi \tilde{\smat{V}}_\mrm{in}^{B}[\omega] \Xi^\T  &= \smat{K}_{B_1} \smat{V}_\mrm{in}^{B}[\omega] \smat{K}_{B_1}^\T + \smat{K}_{B_2} \smat{V}_\mrm{in}^{B}[-\omega] \smat{K}_{B_2}^\T.
\end{align}

\section{Quantum Fisher information matrix (QFIM) for Gaussian states and its noisy form}
\label{app:NoisyApprox}
In general, the $jk$-th element of the quantum Fisher information matrix (QFIM), corresponding to $\theta_j$ and $\theta_k$ parameters, is given for any $m$-mode Gaussian state with amplitude vector $\svec{S}$ and covariance matrix $\smat{V}$ by~\cite{Monras2010,Monras2013,Nichols,Jing}:
\begin{align}
	\FQ_{jk} &= \frac{1}{2} \Tr\!\left[ \smat{L}_j \partial_k \smat{V} \right] + 2 (\partial_j  \svec{S})^\T \smat{V}^{-1}  (\partial_k \svec{S}), 
	\label{eq:FqGaussian}
\end{align}
where the matrix 
\begin{equation}
	\smat{L}_j = \sum\limits_{a,b=1}^{m} \sum\limits_{l=0}^{3} \frac{g_{l;j}^{(ab)} }{ v_a v_b + (-1)^{l+1}} \left(\mathcal{D}^\T \right)^{-1} B_{l}^{(ab)} \mathcal{D}^{-1}
	\label{eq:L-matrix}
\end{equation} 
can be interpreted as the symmetric logarithmic derivative (w.r.t.~the parameter $\theta_j$) associated with the covariance matrix $\smat{V}$, which is diagonalised by the symplectic matrix $\mathcal{D}$ (i.e.~satisfying $\mathcal{D} \smat{J} \mathcal{D}^\T =\smat{J}$) such that $\smat{V}_d= \mathcal{D}^{-1} \smat{V}(\mathcal{D}^\T)^{-1} = \mrm{diag}\{v_1,v_2,\dots,v_m\} \otimes \Id_2$ is diagonal with symplectic eigenvalues $v_k$. The $g$-coefficients in \eqnref{eq:L-matrix} read
\begin{equation}
	g_{l;j}^{(ab)} = \Tr\!\left[ \mathcal{D}^{-1} (\partial_j \smat{V})  (\mathcal{D}^\T)^{-1} B_{l}^{(ab)}  \right],
\end{equation} 
where $\{B_{l}^{(ab)} = \cP A_{l}^{(ab)} \cP^\T\}_{l;ab}$ constitute a complete basis of matrices, with $\cP$ being a permutation matrix (defined below) and $A_{l}^{(ab)}$ forming a $2m$-dimensional matrix with only non-zero entries within the $2\times2$ block at the position $ab$, chosen according to the index $l=0,1,2,3$ from the ordered set $\frac{1}{\sqrt{2}} \left\{\i \sigma_y, \sigma_z, \id, \sigma_x \right\}$. 

For completeness, note that in \citeref{Nichols} the QFIM is derived in the basis $\qsvec{R}=\{\hat{q}_1,\hat{p}_1, \dots, \hat{q}_n,\hat{p}_n\}^\T$, with the corresponding covariance matrix $\sigma$ being thus diagonalised by a different symplectic matrix $\mathcal{K}$ such that $\sigma_d=\mathcal{K}^{-1} \sigma (\mathcal{K}^\T)^{-1}$. However, as we consider here simply a different ordering of the quadratures, i.e.~$\qsvec{S} = \cP \qsvec{R}$ with $\cP$ being a permutation matrix with entries $\cP_{kl} = \delta_{k, 2l-1}$ for $1 \le l \le n$ and $\delta_{k, 2l-2n}$ for $ n+1 \le l \le 2n$~\cite{Ferraro2005}, the symplectic matrix $\mathcal{D}$ appearing above is generally related to $\mathcal{K}$ via $\mathcal{D}=\cP\mathcal{K}\cP^\T$.

In order to arrive at the noisy approximation of the QFIM given by \eqnref{eq:FmatrixNoisyparVec}, we prove first the following 
\begin{lemma}
	The $\smat{L}$-matrix defined in \eqnref{eq:L-matrix} constitutes the solution to the matrix equation:
	\begin{align}\label{eq:LjImplicit}
		\partial_j  \smat{V}
		&= \smat{V} \smat{L}_j \smat{V} + \sform \smat{L}_j \sform.
	\end{align}
\end{lemma}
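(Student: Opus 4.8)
The plan is to verify directly that the explicit expression \eqnref{eq:L-matrix} satisfies \eqnref{eq:LjImplicit} by working in the symplectic eigenbasis of $\smat{V}$. First I would substitute $\smat{V}=\mathcal{D}\smat{V}_d\mathcal{D}^\T$ and $\smat{L}_j=(\mathcal{D}^\T)^{-1}\smat{L}_j^d\mathcal{D}^{-1}$, where $\smat{L}_j^d := \sum_{a,b,l} \frac{g_{l;j}^{(ab)}}{v_a v_b+(-1)^{l+1}} B_l^{(ab)}$ collects the diagonal-frame pieces, and use the symplectic property $\mathcal{D}\sform\mathcal{D}^\T=\sform$ (equivalently $\sform=(\mathcal{D}^\T)^{-1}\sform\mathcal{D}^{-1}$) to push everything through the congruence. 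The right-hand side of \eqnref{eq:LjImplicit} then becomes $\mathcal{D}\big(\smat{V}_d\smat{L}_j^d\smat{V}_d + \sform\smat{L}_j^d\sform\big)\mathcal{D}^\T$, so it suffices to show $\mathcal{D}^{-1}(\partial_j\smat{V})(\mathcal{D}^\T)^{-1} = \smat{V}_d\smat{L}_j^d\smat{V}_d+\sform\smat{L}_j^d\sform$ in the diagonal frame.

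The heart of the computation is then linear-algebraic and block-diagonal. Writing $\smat{M} := \mathcal{D}^{-1}(\partial_j\smat{V})(\mathcal{D}^\T)^{-1}$ and expanding it in the complete basis $\{B_l^{(ab)}\}$ as $\smat{M}=\sum_{a,b,l} g_{l;j}^{(ab)} B_l^{(ab)}$ — which is exactly the definition of the $g$-coefficients, provided one checks that the $B_l^{(ab)}$ are orthonormal under the trace inner product so that $g_{l;j}^{(ab)}=\Tr[\smat{M}\,B_l^{(ab)}]$ reproduces the stated formula — I would compute, block by block in the $(a,b)$ sector, the action of the maps $X\mapsto \smat{V}_d X \smat{V}_d$ and $X\mapsto \sform X\sform$ on each basis element $B_l^{(ab)}$. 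Since $\smat{V}_d = \mrm{diag}\{v_1,\dots,v_m\}\otimes\Id_2$ acts on the $ab$-block simply as multiplication by $v_a v_b$, and $\sform$ acting by congruence on the small $2\times2$ matrices $\tfrac{1}{\sqrt2}\{\i\sigma_y,\sigma_z,\id,\sigma_x\}$ gives eigenvalue $+1$ on $\i\sigma_y,\id$ and $-1$ on $\sigma_z,\sigma_x$ (this is just $\sigma_y\sigma\sigma_y$ up to signs), each $B_l^{(ab)}$ is a simultaneous eigenvector of both maps with eigenvalue $v_a v_b$ and $(-1)^{l+1}$ respectively. Hence $\smat{V}_d(\cdot)\smat{V}_d+\sform(\cdot)\sform$ multiplies $B_l^{(ab)}$ by $v_a v_b+(-1)^{l+1}$, and applying its inverse to $\smat{M}$ gives precisely $\smat{L}_j^d$ as defined. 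Finally, conjugating back by $\mathcal{D}$ recovers \eqnref{eq:LjImplicit}.

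The main obstacle — really the only nontrivial bookkeeping — is getting the two sets of eigenvalues and the index conventions exactly right: confirming the $\pm1$ pattern of $\sform(\cdot)\sform$ on the ordered Pauli-type basis matches the sign $(-1)^{l+1}$ in \eqnref{eq:L-matrix}, handling the $v_a v_b+(-1)^{l+1}$ denominator so it never vanishes (it does not, since $v_a v_b\ge1$ for physical covariance matrices and the only dangerous term $l=1$ gives $v_a v_b-1\ge0$, with genuine care needed at $v_a=v_b=1$ — pure states — where one notes those $g$-coefficients vanish too, or restricts to the faithful case), and tracking the permutation $\cP$ relating the $\qsvec{S}$-ordering to the $\qsvec{R}$-ordering so that $\sform$ in the $\qsvec{S}$-basis is block-anti-diagonal rather than block-diagonal. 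Once the eigenstructure is pinned down, the identity \eqnref{eq:LjImplicit} follows termwise without further calculation.
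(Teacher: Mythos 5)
Your proposal is essentially the paper's own proof: the paper likewise conjugates the equation into the symplectic eigenbasis of $\smat{V}$ (writing $\widetilde{\smat{L}}_j=\mathcal{D}^\T\smat{L}_j\mathcal{D}$, which is your $\smat{L}_j^d$), expands both sides in the basis $\{B_l^{(ab)}\}$, and uses that each $B_l^{(ab)}$ is a simultaneous eigenvector of $X\mapsto\smat{V}_d X\smat{V}_d$ and $X\mapsto\sform X\sform$ with eigenvalues $v_av_b$ and $(-1)^{l+1}$, so equating coefficients gives the denominator $v_av_b+(-1)^{l+1}$. One detail inside the bookkeeping you yourself flag: the $\pm1$ pattern you quote is inverted---the congruence $\sform(\cdot)\sform$ (per-mode $(\i\sigma_y)M(\i\sigma_y)$) gives $-1$ on $\i\sigma_y$ and $\id$ and $+1$ on $\sigma_z$ and $\sigma_x$, which is exactly $(-1)^{l+1}$ for the ordered set $\{\i\sigma_y,\sigma_z,\id,\sigma_x\}$, so the proof goes through once that sign check is done correctly.
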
	
\begin{proof}
	Introducing for short the notation $\overline{\smat{O}}:=\mathcal{D}^{-1} \smat{O} (\mathcal{D}^\T)^{-1}$ to denote any given matrix $\smat{O}$ written in the eigenbasis of $\smat{V}$ (for which $\overline{\smat{V}} = \smat{V}_d$), we rewrite \eqnref{eq:LjImplicit} as
	\begin{align}           	
		\overline{\partial_j \smat{V}} = \smat{V}_d \widetilde{\smat{L}}_j  \smat{V}_d +  \smat{J} \widetilde{\smat{L}}_j    \smat{J}, 
		\label{eq:Dtranf}
	\end{align}
	where we have also defined $\widetilde{\smat{L}}_j := \mathcal{D}^\T \smat{L}_j \mathcal{D} $. 
	Furthermore, defining the expansions of $\overline{\partial_j \smat{V}}$ and $\widetilde{\smat{L}}_j$ in basis of matrices $\{B_{l}^{(ab)}\}_{l;ab}$ specified above, i.e.:
	\begin{equation}
		\label{eq:gh}
		\overline{\partial_j \smat{V}} = \sum\limits_{abl} g_{l}^{(ab)} B_{l}^{(ab)}, 
		\qquad 
		\widetilde{\smat{L}}_j  = \sum\limits_{abl} h_{l}^{(ab)} B_{l}^{(ab)},
	\end{equation}
	we substitute these into \eqnref{eq:Dtranf}, in order to obtain 
	\begin{align}
		\sum\limits_{abl} g_{l}^{(ab)} B_{l}^{(ab)} &= \sum\limits_{abl} h_{l}^{(ab)}  \left(  \smat{V}_d B_{l}^{(ab)}  \smat{V}_d  + \smat{J} B_{l}^{(ab)} \smat{J}  \right), \nonumber \\
		&= \sum\limits_{abl} h_{l}^{(ab)} \left(  v_a v_b B_{l}^{(ab)} + (-1)^{l+1} B_{l}^{(ab)}  \right).
	\end{align}
	Hence, equating the coefficients in the equation above, we have that each
	\begin{equation}
		h_{l}^{(ab)} = \frac{g_{l}^{(ab)}}{v_a v_b + (-1)^{l+1}},
	\end{equation}
	so that it follows from \eqnref{eq:gh} that 
	\begin{equation}
		\widetilde{\smat{L}}_j =  \sum\limits_{abl} \frac{g_{l}^{(ab)}}{v_a v_b  + (-1)^{l+1}}  B_{l}^{(ab)},
	\end{equation}
	which, as required, yields then the expression for $\smat{L}_j=\left(\mathcal{D}^\T \right)^{-1} \widetilde{\smat{L}}_j \mathcal{D}^{-1}$ stated in \eqnref{eq:L-matrix}. 
\end{proof}

Now, let us derive the noisy approximation of the QFIM that applies to any noisy, e.g.~highly thermalised, Gaussian state, by first considering the $\smat{L}$-matrix \eref{eq:L-matrix}:

\begin{thm}[Noisy approximation \cite{Jiang2014}]
	If all the eigenvalues of the covariance matrix $\smat{V}$ fulfil $\forall_{\alpha=0}^{2m}:\!\,v_\alpha \gg 1$, then the $\smat{L}$-matrix defined in \eqnref{eq:L-matrix} w.r.t.~any parameter, say $\theta_j$, can be approximated as follows
	\begin{equation}
		\smat{L}_j	\approx \smat{V}^{-1} (\partial_j \smat{V})  \smat{V}^{-1}.
		\label{eq:SLD_noisy_approx}
	\end{equation} 
	\label{thm:noisy}
\end{thm}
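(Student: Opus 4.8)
The plan is to derive \eqnref{eq:SLD_noisy_approx} directly from the implicit characterisation of the symmetric logarithmic derivative supplied by the preceding Lemma, namely $\partial_j\smat{V} = \smat{V}\smat{L}_j\smat{V} + \sform\smat{L}_j\sform$, and then argue that in the regime $v_a\gg1$ the ``symplectic'' term $\sform\smat{L}_j\sform$ is a subleading correction to $\smat{V}\smat{L}_j\smat{V}$. Concretely, I would first reuse the setup of the Lemma's proof verbatim: passing to the eigenbasis of $\smat{V}$ via the symplectic matrix $\mathcal{D}$, writing $\overline{\smat{O}} := \mathcal{D}^{-1}\smat{O}(\mathcal{D}^\T)^{-1}$ and $\widetilde{\smat{L}}_j := \mathcal{D}^\T\smat{L}_j\mathcal{D}$, the equation becomes $\overline{\partial_j\smat{V}} = \smat{V}_d\widetilde{\smat{L}}_j\smat{V}_d + \smat{J}\widetilde{\smat{L}}_j\smat{J}$, and expanding both sides in the complete basis $\{B_l^{(ab)}\}$ yields the exact relation $h_l^{(ab)} = g_l^{(ab)}/(v_a v_b + (-1)^{l+1})$ between the expansion coefficients, exactly as established there.

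The second step is the approximation itself. Since every (symplectic) eigenvalue obeys $v_a\gg1$, each denominator satisfies $v_a v_b + (-1)^{l+1} = v_a v_b\,(1 + O(v_\mrm{min}^{-2}))$ uniformly in $a,b,l$, so $h_l^{(ab)} = \frac{g_l^{(ab)}}{v_a v_b}\,(1 + O(v_\mrm{min}^{-2}))$. Discarding the relative correction, $\widetilde{\smat{L}}_j \approx \sum_{abl}\frac{g_l^{(ab)}}{v_a v_b}B_l^{(ab)}$. Now using the identity $\smat{V}_d B_l^{(ab)}\smat{V}_d = v_a v_b\,B_l^{(ab)}$ already exploited in the Lemma, equivalently $B_l^{(ab)}/(v_a v_b) = \smat{V}_d^{-1}B_l^{(ab)}\smat{V}_d^{-1}$, together with $\overline{\partial_j\smat{V}} = \sum_{abl} g_l^{(ab)}B_l^{(ab)}$, the leading term is precisely $\widetilde{\smat{L}}_j \approx \smat{V}_d^{-1}\,\overline{\partial_j\smat{V}}\,\smat{V}_d^{-1}$. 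Finally I would undo the symplectic transformation: from $\smat{V}_d = \mathcal{D}^{-1}\smat{V}(\mathcal{D}^\T)^{-1}$ one gets $\smat{V}_d^{-1} = \mathcal{D}^\T\smat{V}^{-1}\mathcal{D}$, hence $\smat{L}_j = (\mathcal{D}^\T)^{-1}\widetilde{\smat{L}}_j\mathcal{D}^{-1} \approx (\mathcal{D}^\T)^{-1}\smat{V}_d^{-1}\,\overline{\partial_j\smat{V}}\,\smat{V}_d^{-1}\mathcal{D}^{-1} = \smat{V}^{-1}(\partial_j\smat{V})\smat{V}^{-1}$, which is \eqnref{eq:SLD_noisy_approx}. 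A completely eigenbasis-free alternative is to insert the ansatz $\smat{L}_j = \smat{V}^{-1}(\partial_j\smat{V})\smat{V}^{-1} + \smat{E}$ into the Lemma's equation; since $\smat{V}\,[\smat{V}^{-1}(\partial_j\smat{V})\smat{V}^{-1}]\,\smat{V} = \partial_j\smat{V}$ exactly, the residual obeys $\smat{V}\smat{E}\smat{V} + \sform\smat{E}\sform = -\sform\smat{V}^{-1}(\partial_j\smat{V})\smat{V}^{-1}\sform$, whose right-hand side is smaller than $\smat{V}\smat{E}\smat{V}$ by a factor $O(v_\mrm{min}^{-2})$, forcing $\smat{E}$ to be of relative order $v_\mrm{min}^{-2}$.

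The main obstacle I anticipate is not conceptual but a matter of rigour: one must make precise in what norm the correction is ``negligible'' and show the error is controlled relatively (as $O(v_\mrm{min}^{-2})$), being careful that $\partial_j\smat{V}$ itself need not be small --- only the relative size of the dropped term matters, which is exactly why working coefficient by coefficient in the eigenbasis, where the denominators decouple, is the cleanest route. A secondary technical point worth stating is that $v_a\ge1$ holds for any genuine covariance matrix, so $\smat{V}^{-1}$ is well defined and the denominators never vanish; the hypothesis $v_a\gg1$ is precisely what suppresses the $\sform\smat{L}_j\sform$ piece that encodes the non-commutative symplectic structure, leaving the ``classical-looking'' expression $\smat{V}^{-1}(\partial_j\smat{V})\smat{V}^{-1}$, from which \eqnref{eq:FmatrixNoisyparVec} then follows by substitution into \eqnref{eq:FqGaussian}.
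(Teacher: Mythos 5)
Your argument is correct, and it reaches \eqnref{eq:SLD_noisy_approx} by a slightly different (and arguably cleaner) route than the paper. The paper follows \citeref{Jiang2014}: it sandwiches the Lemma's equation \eqnref{eq:Dtranf} by $\smat{J}$ and by $\smat{V}_d$, subtracts, passes to the eigenbasis of the covariance, solves entry-wise, and then discards \emph{two} terms (the $1$ against $v_\alpha^2 v_\beta^2$ in the denominator and the numerator term $\smat{J}\,\overline{\partial_j\smat{V}}\,\smat{J}$ against $v_\alpha v_\beta\,\overline{\partial_j\smat{V}}$), the latter being a cross-term not proportional to what is kept. You instead go straight to the exact coefficient formula $h_l^{(ab)}=g_l^{(ab)}/\bigl(v_a v_b+(-1)^{l+1}\bigr)$ already established in the Lemma (equivalently, the definition \eqnref{eq:L-matrix} itself), replace each denominator by $v_a v_b$, and resum using $\smat{V}_d B_l^{(ab)}\smat{V}_d=v_a v_b B_l^{(ab)}$ before undoing the symplectic transformation; this makes the dropped piece a \emph{relative} $O(v_\mrm{min}^{-2})$ correction coefficient by coefficient, with no cross-terms to control, which is exactly the rigour point you flag. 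Your eigenbasis-free ansatz $\smat{L}_j=\smat{V}^{-1}(\partial_j\smat{V})\smat{V}^{-1}+\smat{E}$, with residual equation $\smat{V}\smat{E}\smat{V}+\sform\smat{E}\sform=-\sform\smat{V}^{-1}(\partial_j\smat{V})\smat{V}^{-1}\sform$, is a nice consistency check and transparently shows \emph{why} the $\sform\smat{L}_j\sform$ term is subleading, though turning its ``forces $\smat{E}$ to be small'' step into a proof would again require bounding the inverse of the map $X\mapsto\smat{V}X\smat{V}+\sform X\sform$, i.e.\ essentially the coefficient argument you already gave; so the basis-wise derivation should remain your primary proof, with the ansatz as a remark.
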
	
\begin{proof}
	Following \citeref{Jiang2014}, we sandwich \eqnref{eq:Dtranf} independently by the symplectic form $\sform$ and the diagonal form of $\smat{V}$, in order to respectively obtain
	\begin{align}
		\smat{J} \overline{\partial_j \smat{V} } \smat{J} & = \smat{J} \smat{V}_d \widetilde{\smat{L}}_j  \smat{V}_d \smat{J} +    \widetilde{\smat{L}}_j,  \\
		\smat{V}_d   \overline{\partial_j \smat{V} } \smat{V}_d &= \smat{V}_d^2 \widetilde{L}_j \smat{V}_d^2 + \smat{V}_d \smat{J} \widetilde{\smat{L}}_j \smat{J} \smat{V}_d.
	\end{align}
	Subtracting the latter from the former and using the fact that $\smat{J}$ commutes with $\smat{V}_d$, the above equations lead to
	\begin{equation}\label{eq:diff}
		\widetilde{\smat{L}}_j - \smat{V}_d^2 \widetilde{L}_j \smat{V}_d^2 =  \smat{J} \overline{\partial_j \smat{V} } \smat{J} - \smat{V}_d   \overline{\partial_j \smat{V} } \smat{V}_d.
	\end{equation}
	Moreover, specifying explicitly the spectral decomposition of the covariance matrix as $\smat{V} = \sum_{\alpha=1}^{2m} v_\alpha \svec{e}_\alpha \svec{e}_\alpha^\T$ with the eigenvalues $v_\alpha$ being doubly degenerate, we may rewrite \eqnref{eq:diff} in the eigenbasis of $\smat{V}$ as ($[\bullet]_{\alpha \beta}=\svec{e}^\T_\alpha\bullet\svec{e}_\beta$):
	\begin{equation}
		[\widetilde{\smat{L}}_j]_{\alpha \beta} \left(1 - v_\alpha^2 v_\beta^2 [\overline{\partial_j \smat{V} }]_{\alpha \beta} \right) = [\smat{J} \overline{\partial_j \smat{V} } \smat{J} ]_{\alpha \beta}  - v_{\alpha} v_{\beta}  [\overline{\partial_j \smat{V} } ]_{\alpha \beta}.
	\end{equation}
	Now, this allows us to write each such entry of $\widetilde{\smat{L}}_j$ as
	\begin{align}
		[\widetilde{\smat{L}}_j]_{\alpha \beta}
		&=  
		\frac{v_{\alpha} v_{\beta}  [\overline{\partial_j \smat{V} } ]_{\alpha \beta}-[\smat{J} \overline{\partial_j \smat{V} } \smat{J} ]_{\alpha \beta}}{v_\alpha^2 v_\beta^2 [\overline{\partial_j \smat{V} }]_{\alpha \beta} -1} 
		\label{eq:appr0}\\
		&\approx  \frac{1}{v_\alpha^2 v_\beta^2}\left( v_{\alpha} v_{\beta}  [\overline{\partial_j \smat{V} } ]_{\alpha \beta} - [\smat{J} \overline{\partial_j \smat{V} } \smat{J} ]_{\alpha \beta} \right)  
		\label{eq:appr1}\\
		&\approx  \frac{1}{v_\alpha v_\beta} [\overline{\partial_j \smat{V} } ]_{\alpha \beta},
		\label{eq:appr2}
	\end{align}
	where we have assumed that all the eigenvalues $\nu_{\alpha/\beta}$ are much larger that $1$, so that in the line \eref{eq:appr0} the denominator is assured to be non-zero, while the terms $1$ and $\smat{J} \overline{\partial_j \smat{V} } \smat{J}$ are then ignored in \eref{eq:appr1} and \eref{eq:appr2}, respectively.
	
	Hence, returning to the matrix form, this translates to
	\begin{align}
		\widetilde{\smat{L}}_j
		&= 
		\sum_{\alpha,\beta=1}^{2m} [\widetilde{\smat{L}}_j]_{\alpha\beta} \,\svec{e}_\alpha \svec{e}_\beta^\T \\
		&\approx 
		\sum_{\alpha,\beta=1}^{2m} \frac{1}{v_\alpha v_\beta} [\overline{\partial_j \smat{V} }]_{\alpha \beta} \,\svec{e}_\alpha \svec{e}_\beta^\T \\
		&=
		\smat{V}_d ^{-1} (\overline{\partial_j \smat{V}}) \smat{V}_d^{-1},
	\end{align}
	which under transformation to the original basis of $\smat{V}$, i.e.~$\smat{L}_j=\left(\mathcal{D}^\T \right)^{-1} \widetilde{\smat{L}}_j \mathcal{D}^{-1}$, yields \eqnref{eq:SLD_noisy_approx} as desired.
\end{proof}

Finally, using the approximation \eref{eq:SLD_noisy_approx} for the matrix $\smat{L}_j$, the QFIM \eref{eq:FqGaussian} simplifies to \eqnref{eq:FmatrixNoisyparVec}, i.e: 
\begin{align}
	\FQ_{jk} &\approx  \frac{1}{2} \Tr\!\left[ \smat{V}^{-1} \left(\partial_j \smat{V}\right) \smat{V}^{-1} \left( \partial_k \smat{V} \right) \right] +2 (\partial_j  \svec{S})^\T \smat{V}^{-1}  (\partial_k \svec{S}),
	\label{eq:FmatrixNoisyApp}
\end{align}
which we term, for short, the \emph{noisy QFIM}.

\section{Noisy QFIM for $\parVec$-parametrised linear perturbations}
\label{app:QFIM_perturb}
Let us consider the task of sensing linear perturbations of a dynamical generator $\H$, e.g.~the one appearing in \eqnref{eq:dyn_fourier}, or equivalently \eqnref{eq:dyn_model}, under which its form is modified as in \eqnref{eq:sens_H}, i.e.:
\begin{equation}
	\H_\parVec\coloneqq  \H - \sum_{i = 0} ^{m} \theta_i \mat{n}_i,
	\label{eq:sens_H_app}
\end{equation}
where $\parVec \coloneqq \{\theta_i\}_i$ is a set of real parameters to be estimated.

When the system undergoes linear dynamics, its evolution is fully captured by the response (Green's) function introduced above in \eqnref{eq:SAout}, and stated explicitly in \eqnref{eq:Greens}, which generally reads:
\begin{align}
	\smat{G}_\parVec[\omega] 
	& = \sform \left((\omega-\omega_0)\Id - \smat{H}_\parVec\right)^{-1} \label{eq:Greens_par_app}\\
	& = \sform \left((\omega-\omega_0)\Id - \smat{H} + \sum_{i = 0} ^{m} \theta_i \smat{n}_i\right)^{-1} 
	\label{eq:Greens_par_expl_app}
\end{align}
where $\omega$ is the frequency of the input/output modes being considered, while $\smat{H}_\parVec=\stransf{\mat{H}_\parVec}$ is the phase-space representation \eref{eq:phase_space_rep} of the dynamical generator \eref{eq:sens_H_app}, whose explicit form yields \eqnref{eq:Greens_par_expl_app} with each $\smat{n}_i=\stransf{\mat{n}_i}$. In general, the dependence on the internal frequency $\omega_0$ could be dropped above by redefining the generator as $\smat{H}'_\parVec \coloneqq \smat{H}_\parVec-\omega_0 \Id$.

Now, as Gaussian dynamics is considered, it is sufficient to determine the average amplitude and the covariance matrix of the modes being measured, $\svec{S}^{A}_\mrm{out}[\omega]$ and $\svec{V}^{A}_\mrm{out}[\omega]$ (we drop the superscript $A$ below for generality, as well as the $\omega$-dependence) that are specified by the input-output relations \eref{eq:Sout} and \eref{eq:Vout}, respectively, with the response function $\smat{G}_\parVec$ given by \eqnref{eq:Greens_par_app}. In particular, the output amplitude \eref{eq:Sout} and covariance matrix \eref{eq:Vout} are now $\parVec$-dependent and read 
\begin{align}
	\svec{S}(\parVec) &= \left(\Id - \kappa\smat{G}_{\parVec} \right) \svec{S}_\mrm{in}, \label{eq:SparVec} \\
	\smat{V}(\parVec)  &= \left(\Id - \kappa\smat{G}_{\parVec} \right) \smat{V}_\mrm{in}^A \left(\Id - \kappa\smat{G}_{\parVec} \right)^\T + \kappa \smat{G}_{\parVec} \Xi \tilde{\smat{V}}_\mrm{in}^{B} \Xi^\T \smat{G}_{\parVec}^\T,
	\label{eq:VparVec}
\end{align}
so the $\parVec$-parametrised version of the noisy QFIM  \eqref{eq:FmatrixNoisyApp} is 
\begin{align}
	\FQ_{jk} 
	&\approx 
	\frac{1}{2} \Tr\!\left[ \smat{V}(\parVec)^{-1} (\partial_{j} \smat{V}(\parVec)) \smat{V}(\parVec)^{-1} (\partial_k \smat{V}(\parVec))\right] \nonumber \\
	& + 
	(\partial_j \svec{S}(\parVec))^\T \smat{V}(\parVec)^{-1}  (\partial_k  \svec{S}(\parVec)).
	\label{eq:FmatrixNoisyparVecApp}
\end{align}

Computing derivatives of \eqnsref{eq:Greens_par_expl_app}{eq:SparVec} w.r.t.~a given parameter $\theta_i$, we obtain $\partial_i\smat{G}_{\parVec}=\smat{G}_{\parVec} \smat{n}_{i} \sform \smat{G}_{\parVec}$ and $\partial_i \svec{S}(\parVec) = - \kappa\,\partial_i\smat{G}_{\parVec}\svec{S}_\mrm{in}$, respectively, which imply that the corresponding derivative of the output amplitudes just reads
\begin{equation}
	\partial_i \svec{S}(\parVec) = - \kappa\,\smat{G}_{\parVec} \smat{n}_{i} \sform \smat{G}_{\parVec}\svec{S}_\mrm{in}.
	\label{eq:dS}
\end{equation} 

For the covariance matrix, however, as we are interested only in small perturbations around zero in \eqnref{eq:sens_H_app}, i.e.~$\theta_i\approx0$ at which the sensitivity is supposed to be unbounded, we assume that $\smat{G}_{\parVec}$ is then divergent---if otherwise, see below, the solution can be easily found case by case. As a result, only the terms quadratic in $\smat{G}_{\parVec}$ may be kept in \eqnref{eq:VparVec}, which simplifies then to $\smat{V}(\parVec) \approx \smat{G}_{\parVec} \smat{V}_\mrm{in}  \smat{G}_{\parVec}^\T$ and yields:
\begin{equation}
	\partial_i \smat{V}(\parVec) \approx \partial_i \smat{G}_{\parVec} \smat{V}_\mrm{in}  \smat{G}_{\parVec}^\T+\smat{G}_{\parVec} \smat{V}_\mrm{in}\partial_i \smat{G}_{\parVec}^\T
	\label{eq:dV}
\end{equation}
for an input covariance $\smat{V}_\mrm{in}$ (here, $\smat{V}_\mrm{in}= \kappa^2 \smat{V}_\mrm{in}^{A} + \kappa \Xi \tilde{\smat{V}}_\mrm{in}^{B} \Xi^\T$). 

Finally, substituting for the explicit forms of the output amplitudes and covariances (valid for $\theta_i\approx0$) as well as their derivatives \eref{eq:dS} and \eref{eq:dV}, respectively, we obtain the expression for the noisy QFIM \eref{eq:FmatrixNoisyparVecApp} that applies when estimating (small) linear perturbations around $\smat{G}_{\parVec}$-divergences as
\begin{align}
	\label{eq:FQjk}
	\FQ_{jk} 
	& \approx
	\Tr\!\left[ \smat{n}_{j} \sform   \smat{G}_{\parVec} \smat{n}_{k} \sform  \smat{G}_{\parVec} \right] +   \Tr[\smat{V}_\mrm{in}^{-1} \smat{n}_{j} \sform \smat{G}_{\parVec} \smat{V}_\mrm{in} \smat{G}_{\parVec}^\T \sform^\T \smat{n}_{k}^\T ] \nonumber\\
	& \quad+ 
	\svec{S}_\mrm{in}^\T  \smat{G}_{\parVec}^\T \sform^\T \smat{n}_{j}^\T  \smat{V}_\mrm{in}^{-1}   \smat{n}_{k} \sform \smat{G}_{\parVec}  \svec{S}_\mrm{in},
\end{align}
which also assumes the covariance matrix of the input modes $\smat{V}_\mrm{in}$ to be invertible---as is the case for any Gaussian state of finite energy~\cite{Ferraro2005}.

\section{Sain-Massey procedure for expanding inverses of singular matrices}
\label{app:SM}
Let $\A(z)= \A_0 + z \A_1 + z^2 \A_2 + \cdots$ be an analytic matrix-valued function of $z$ in some non-empty neighbourhood of $z = 0$, such that $A^{-1}(z)$ exists in some (possibly punctured) disc centered at $z=0$. Then, $\A^{-1}(z)$ possesses a Laurent series expansion 
\begin{equation}\label{eq:Ainv}
	\A^{-1}(z)   =   \frac{1}{z^s} \left( \X_0 + z \X_1 + z^2 \X_2 + \dots  \right),
\end{equation}
where $\X_0 \ne 0$ and $s$ is a natural number, known as the order of the pole at $z=0$. In order to determine $s$ and the coefficient-matrices $\X_i$, we follow the so-called \emph{Sain-Massey} (SM) procedure~\cite{Sain1969,Howlett1982,Howlett2001,avrachenkov2013analytic}. 

In particular, we define first the \emph{augmented matrix} as
\begin{equation}
	\bm{\mathcal{A}}^{(t)} := \begin{pmatrix}
		\A_0 & 0      & 0      & \dots & 0\\
		\A_1 &  \A_0 & 0     &  \dots & 0 \\
		\A_2 & \A_1  & \A_0 & \dots & 0 \\
		\vdots & \vdots & \vdots & \dots & \vdots \\
		\A_t & \A_{t-1} & \A_{t-2} & \dots & \A_0
	\end{pmatrix}.
	\label{eq:aug_matrix}
\end{equation}
The order of the pole $s$ is then given by the smallest value of $t$ for which $\rank\bm{\mathcal{A}}^{(t)}-\rank\bm{\mathcal{A}}^{(t-1)}=n$, where $n$ is the dimension of $\A(z)$. Furthermore, the (coefficient) matrices $\X_k$ are given by the following recursive formula for $k=1,2,\dots$:
\begin{equation}\label{eq:recur}
	\X_k  = \sum\limits_{j=0}^{s} \mathcal{G}_{0j}^{(s)} \left( \delta_{j+k,s} \Id   -  \sum\limits_{i=1}^{k}  \A_{i+j} \X_{k-i}\right),
\end{equation} 
with  $\X_0 = \mathcal{G}_{0s}^{(s)}$ being the top-right block of the matrix 
\begin{equation}
	\mathcal{G}^{(s)} := \begin{pmatrix}
		\mathcal{G}_{00}^{(s)}  & \dots & \mathcal{G}_{0s}^{(s)}\\
		\vdots            &     \dots & \vdots \\
		\mathcal{G}_{s0}^{(s) }  &  \dots &  \mathcal{G}_{ss}^{(s)}
	\end{pmatrix},
	\label{eq:Gmatrix}
\end{equation}
which is the Moore–Penrose generalized inverse of the augmented matrix at the pole level ($t=s$), i.e.~$\mathcal{G}^{(s)} := [\bm{\mathcal{A}}^{(s)}]^+$.

\section{Single-parameter perturbation sensing}
\label{app:single_par}
%
In case the aim is to sense a perturbation induced by only a \emph{single parameter}, $\theta_0$, the response function \eref{eq:Greens_par_expl_app} simplifies to
\begin{equation}
	\smat{G}_{\theta_0}[\omega=\omega_0] 
	= \sform \left(\theta_0 \smat{n}_0 - \smat{H}\right)^{-1} 
	\label{eq:response_function}
\end{equation}
where, as throughout the main text (see \eqnref{eq:Gomega}), we choose to be probing the system \emph{on resonance}, i.e.~we set $\omega=\omega_0$ and drop the $\omega$-dependence below for simplicity. However, see also Apps.~\ref{app:Det0}-\ref{app:lasing_threshold} below for important generalisations.

\subsubsection{Sensing at a non-singular point}
\label{app:single_par_nonsing}
Now, whenever the generator $\smat{H}$ is \emph{non-singular}, the response function \eref{eq:response_function} admits a Neumann series:
\begin{align}
	\smat{G}_{\theta_{0}} &= -\sform \smat{H}^{-1} \left[\Id + \sum_{k=1}^{\infty} \theta_{0}^k (\smat{n}_{0} \smat{H}^{-1})^{k}\right],	\nonumber \\
	&= -\smat{J} \smat{H}^{-1} \sum_{k=0}^{\infty} (\smat{n}_0 \smat{H})^k \theta_0^k = \sum_{k=0}^{\infty} \smat{g}_k \theta_0^k, 
	\label{eq:expnG}
\end{align}
where we have defined the expansion (matrix) coefficients as $\smat{g}_k \coloneqq -  \smat{J} \smat{H}^{-1} (\smat{n}_0  \smat{H}^{-1})^k$. As a result, the covariance matrix \eref{eq:VparVec} can also be expanded as follows:
\begin{align}
	\smat{V}(\parVec) &= (\Id - \kappa \smat{G}_{\theta_{0}}) \smat{V}^A_\mrm{in} (\Id - \kappa \smat{G}_{\theta_{0}})^\T + \kappa \smat{G}_{\theta_{0}} \Xi \tilde{\smat{V}}^B_\mrm{in} \Xi^\T \smat{G}_{\theta_{0}}^\T, \nonumber \\
	&=  \smat{V}^A_\mrm{in} - \kappa (\smat{G}_{\theta_{0}} \smat{V}_\mrm{in}^A + \smat{V}_\mrm{in}^A  \smat{G}_{\theta_{0}}^\T ) + \smat{G}_{\theta_{0}} \Lambda \smat{G}_{\theta_{0}}^\T, \nonumber \\
	&=              \smat{V}^A_\mrm{in} -  \kappa  \sum_{\ell =0}^{\infty} (\smat{g}_\ell  \smat{V}_\mrm{in}^A +  \smat{V}_\mrm{in}^A \smat{g}_\ell^\T ) \theta_0^\ell \nonumber \\&+ \sum_{m = 0}^{\infty}  (\smat{g}_0 \Lambda \smat{g}_m^\T + \smat{g}_1 \Lambda \smat{g}_{m-1}^\T + \cdots + \smat{g}_m \Lambda  \smat{g}_0^\T )\theta_0^m, \nonumber \\
	&=       \smat{V}^A_\mrm{in}  -     \sum_{n =0}^{\infty}    \smat{v}_n \theta_0^n, 
	\label{eq:expnV}                                                   
\end{align}
with the expansion coefficient now reading
\begin{equation}
	\smat{v}_n \coloneqq   - \kappa (\smat{g}_n  \smat{V}_\mrm{in}^A +  \smat{V}_\mrm{in}^A \smat{g}_n^\T ) + \smat{g}_0 \Lambda \smat{g}_n^\T + \smat{g}_1 \Lambda \smat{g}_{n-1}^\T + \cdots + \smat{g}_n \Lambda  \smat{g}_0^\T,
\end{equation}
where we have defined $\Lambda \coloneqq  \kappa^2  \smat{V}_\mrm{in}^A + \kappa  \Xi \tilde{\smat{V}}^B_\mrm{in} \Xi^\T$.

Now, evaluating the derivative w.r.t.~$\theta_0$ and/or taking the limit $\theta_0\to0$ of expressions \eqref{eq:SparVec}, \eqref{eq:expnG} and \eqref{eq:expnV}, we obtain: 
\begin{equation} 
	\begin{aligned}
		\lim_{\theta_0 \to 0}  \partial_{\theta_0} \smat{G}_{\theta_{0}} &= \smat{g}_1, \\
		\lim_{\theta_0 \to 0}  	\partial_{\theta_0} \smat{S}(\parVec) &= \lim_{\theta_0 \to 0}    -\kappa (\partial_{\theta_0} \smat{G}_{\theta_{0}} )\svec{S}_\mrm{in} = -\kappa \smat{g}_1  \svec{S}_\mrm{in}, \\
		\lim_{\theta_0 \to 0}   \smat{V}(\parVec) &= \smat{V}^A_\mrm{in}   -  \smat{v}_0, \\
		\lim_{\theta_0 \to 0}  \partial_{\theta_0}  \smat{V}(\parVec)& =   \lim_{\theta_0 \to 0} ( -\smat{v}_1 -2 \smat{v}_2 \theta_0 - \cdots ) =  - \smat{v}_1, \\
	\end{aligned}
\end{equation}
Hence, dealing here with a non-divergent response function \eref{eq:expnG}, we substitute these explicitly into \eqnref{eq:FmatrixNoisyparVecApp} (rather than \eqnref{eq:FQjk}), in order to obtain the valid noisy QFIM as
\begin{align}
	\FQ_{00} 
	&\approx
	\lim_{\theta_0 \to 0} \frac{1}{2} \Tr\!\left[ \smat{V}^{-1}(\parVec) \left(\partial_{\theta_0} \smat{V}(\parVec)\right) \smat{V}^{-1}(\parVec) \left( \partial_{\theta_0} \smat{V}(\parVec) \right) \right] \nonumber \\
	& +   
	\lim_{\theta_0 \to 0} 2 (\partial_{\theta_0}  \svec{S}(\parVec))^\T \smat{V}^{-1}(\parVec)  (\partial_{\theta_0} \svec{S}(\parVec)) \nonumber  \\
	&=
	C + \order{\theta_0},
	\label{eq:FQ00_nonsing}
\end{align}
with a constant
\begin{align}
	C 
	& \coloneqq 
	\frac{1}{2} \Tr\!\left[(\smat{V}^A_\mrm{in}   -  \smat{v}_0)^{-1} \smat{v}_1 (\smat{V}^A_\mrm{in}   -  \smat{v}_0)^{-1} \smat{v}_1 \right] \nonumber \\&+2 \kappa^2 \svec{S}_\mrm{in}^\T \smat{g}_1^\T (\smat{V}^A_\mrm{in}   -  \smat{v}_0)^{-1}  \smat{g}_1 \svec{S}_\mrm{in}.
	\label{eq:C_const}
\end{align}

Consequently, the QFI must approach a constant value $C<\infty$ as $\theta_0\to0$ and, hence, cannot be divergent. In other words, this sets a non-zero lower bound on the estimation error $\qerrorS\theta_{0} = 1/\sqrt{C}>0$ that applies at $\theta_0=0$, and precludes infinite precision in estimating small perturbations of $\theta_0$.

\subsubsection{Sensing at a singular point}
\label{app:single_par_sing}
When the generator $\smat{H}$ is \emph{singular} instead, the response function \eref{eq:response_function} becomes divergent at $\theta_0=0$, so we can directly use the formula for the noisy QFIM \eref{eq:FQjk} with $j=k=0$, i.e.:
\begin{align}
	\label{eq:FQ00}
	\FQ_{00} 
	& \approx
	\Tr\!\left[ \smat{n}_{0} \sform   \smat{G}_{\theta_0} \smat{n}_{0} \sform  \smat{G}_{\theta_0} \right] +   \Tr[\smat{V}_\mrm{in}^{-1} \smat{n}_{0} \sform \smat{G}_{\theta_0} \smat{V}_\mrm{in} \smat{G}_{\theta_0}^\T \sform^\T \smat{n}_{0}^\T ] \nonumber\\
	& \quad+ 
	\svec{S}_\mrm{in}^\T  \smat{G}_{\theta_0}^\T \sform^\T \smat{n}_{0}^\T  \smat{V}_\mrm{in}^{-1}   \smat{n}_{0} \sform \smat{G}_{\theta_0}  \svec{S}_\mrm{in},
\end{align}
where, however, we follow the Sain-Massey (SM) procedure described above in \appref{app:SM} to determine the correct small-$\theta_0$ expansion of the response function \eref{eq:response_function}. Although this must then be carried out for each particular case, the corresponding Laurent expansion of $\smat{G}_{\theta_0}$ generally takes the form:
\begin{equation}
	\smat{G}_{\theta_0}  = \sform \theta_0^{-s} \sum_{k=0}^r \theta_0^k \,\X_k(\smat{H},\smat{n}_0),
	\label{eq:G_SMexpansion_gen}
\end{equation}
with each (matrix) coefficient, $\X_k(\smat{H},\smat{n}_0)$, being non-zero up to some (possibly infinite) $r$. In the above, we emphasise that each coefficient in the expansion generally depends on both the dynamical generator $\smat{H}$ and the perturbation matrix $\smat{n}_0$ in \eqnref{eq:response_function}, while the singularity is then characterized by the order $s\in \mathbb{N}_{+}$ of the pole determined during the SM procedure. 

Still, assuming the general form \eref{eq:G_SMexpansion_gen} of the singular expansion, one may evaluate the form of the noisy QFI \eref{eq:FQ00} in the limit $\theta_0\to0$, which then reads
\begin{equation}
	\FQ_{00} \approx \theta_0^{-2s} \left[ A + \order{\theta_0 }  \right]
	\label{eq:FQ00_sing}
\end{equation}
with a constant
\begin{align}
	A 
	& \coloneqq
	\Tr[\smat{n}_0 \X_0 \smat{n}_0 \X_0 + \smat{V}_\mrm{in}^{-1} \smat{n}_0 \X_0 \smat{V}_\mrm{in} \X_0^\T \smat{n}_{0}^\T]\nonumber \\
	& \quad+ 
	\kappa^2 \svec{S}_\mrm{in}^\T  \X_0^\T    \smat{n}_0^\T  \smat{V}_\mrm{in}^{-1} \smat{n}_0   \X_0  \svec{S}_\mrm{in},
\end{align}
being determined by the zeroth-order coefficient, $\X_0(\smat{H},\smat{n}_0)$, in the expansion \eref{eq:G_SMexpansion_gen}. Crucially, thanks to the singularity of $\smat{H}$, the QFI diverges now at a rate $1/\theta_0^{2s}$ as $\theta_0\to0$, what proves that the sensitivity to $\theta_0$-perturbations is then infinite. In particular, the estimation error $\qerrorS\theta_{0}  = 1/\sqrt{\FQ_{00}} \propto \theta_{0}^{s}$ vanishes now as $\theta_0\to0$ at a rate dictated by the pole order $s$.

\bigskip

\subsection{Singularity-induced divergences for particular perturbations}
\label{app:SM_examples}
In what follows, we follow the SM procedure of \appref{app:SM} to determine explicitly the Laurent expansions of the response function $\smat{G}_{\theta_{0}}$, which we term \emph{SM expansions} for short, for the two-cavity system of interest and particular choices of the perturbations $\smat{n}_0$ in \eqnref{eq:response_function} (or equivalently $\mat{n}_0$ in the two-mode form of Langevin equations), which are listed in \tabref{tab:scaling}.

Firstly, by inspecting \eqnref{eq:response_function}, we observe that the expansion of $\smat{G}_{\theta_{0}}$ in $\theta_0$ resembles the one of \eqnref{eq:Ainv} with $\A(\theta_0) = \theta_0 \smat{n}_0-\smat{H}$, so that $\A_0 = -\smat{H}$, $\A_1 = \smat{n}_0$ and $\forall_{\ell>1}:\,\A_\ell = \smat{0}$. In particular, as $\det\smat{H}=0$, the response function must admit an SM expansion of the form:
\begin{equation}
\label{eq:SME}
\smat{G}_{\theta_0}  = \sform \theta_0^{-s} \left( \X_0 + \theta_0 \X_1 + \theta_0^2 \X_2 + \cdots \right),
\end{equation} 
whose coefficients are decided by the forms of $\smat{H}$ and $\smat{n}_{0}$. 

In what follows, we consider $\smat{H}$ not only to be singular but also fulfil the PT symmetry, so that it corresponds to $\H$ of \eqnref{eq:sys_H} with $g=\gamma_1=\gamma_2=1$~\cite{Note9}, i.e.~$\bar{\H}$ introduced in \eqnref{eq:HS_HNS}, whose form in phase-space reads 
\begin{equation}\label{eq:Hsing}
	\bar{\smat{H}}  = \begin{pmatrix}
		0       & 1 & 1  & 0      \\
		1  &  0     &   0     & -1 \\
		-1 &  0     &   0     & 1   \\
		0       & 1 &  1 & 0
	\end{pmatrix}.
\end{equation}

As the first example, we consider the two-mode symmetric perturbation characterised by 
\begin{equation}
	\smat{n}_0 = \Id =\begin{pmatrix}
		1	 &   0   &   0    &  0  \\
		0   &   1  &    0    &  0  \\
		0   &   0   &   1    &  0  \\
		0   &   0   &   0    &  1  
	\end{pmatrix},
	\label{appn:n2mode}
\end{equation} 
which corresponds to varying the frequencies of both cavities in an identical manner~\cite{LJ}. Then, it is sufficient to compute only the first three augmented matrices \eref{eq:aug_matrix}:
\begin{align}
	\bm{\mathcal{A}}^{(0)} &= \bpm -\bar{\smat{H}} \epm, \\
	\bm{\mathcal{A}}^{(1)} &= \bpm -\bar{\smat{H}} & 0 \\ \Id & -\bar{\smat{H}} \epm, \\
	\bm{\mathcal{A}}^{(2)} &=  \bpm -\bar{\smat{H}} & 0 & 0 \\ \Id & -\bar{\smat{H}} & 0 \\ 0 & \Id & -\bar{\smat{H}} \epm,
	\label{eq:augA2}
\end{align}
as one may then verify that $\rank [\bm{\mathcal{A}}^{(2)}] - \rank[\bm{\mathcal{A}}^{(1)}] = 4$, i.e.~the dimension of $\smat{G}_{\theta_0}$, so that the pole in the SM expansion \eref{eq:SME} is of order two, $s = 2$. Furthermore, the coefficients of the SM expansion \eref{eq:SME} can now be determined by computing the matrix $\mathcal{G}^{(s)}$ defined in \eqnref{eq:Gmatrix} with $s=2$ that is given by the Moore-Penrose inverse of $\bm{\mathcal{A}}^{(2)}$. In particular, we substitute explicitly for $\bar{\smat{H}}$ specified in \eqnref{eq:Hsing} into \eqnref{eq:augA2}, in order to obtain the relevant part of the matrix $\mathcal{G}^{(s)}$ as
\begin{widetext}
	\begin{equation}
		\mathcal{G}^{(2)} = [\bm{\mathcal{A}}^{(2)}]^+ =  
		\left(
		\begin{array}{c|c|c}
			\frac{1}{5}\bpm 0& -1& 1 & 0 \\ -1 & 0 & 0& -1 \\ -1 & 0 & 0& -1 \\ 0& 1 & -1 &0 \epm & \frac{1}{5}\bpm 3 & 0 & 0 & 2\\ 0 & 3 & -2 & 0 \\ 0& -2 & 3 & 0\\ 2 & 0 & 0 & 3\epm   &\bpm  0 & 1 & 1 & 0 \\ 1 & 0 & 0 & -1 \\ -1 & 0 & 0 & 1 \\ 0 & 1 & 1 & 0   \epm \\
			\hline
			\cdots&\cdots & \cdots
		\end{array}
		\right)_{12 \times 12},
		\label{eq:MPinv2modes}
	\end{equation}
\end{widetext} 
whose top-right block, see \eqnref{eq:Gmatrix}, yields the coefficient $\smat{X}_0$ in \eqnref{eq:SME} being actually equal to $\bar{\smat{H}}$, i.e.:
\begin{equation}\label{eq:X0_2mode}
	\smat{X}_0 = \mathcal{G}_{02}^{(2)} = \begin{pmatrix}
		0       & 1 & 1  & 0      \\
		1  &  0     &   0     & -1 \\
		-1 &  0     &   0     & 1   \\
		0       & 1 &  1 & 0
	\end{pmatrix} = \bar{\smat{H}}.
\end{equation}
Now, using the recurrence relation \eref{eq:recur} we obtain
\begin{align}
	\X_1 &= - \mathcal{G}_{00}^{(2)} \A_1 \X_0  + \mathcal{G}_{01}^{(2)}, \nonumber \\
	\X_2 &= \mathcal{G}_{00}^{(2)} (\Id - \A_1 \X_1), \nonumber \\
	\forall_{\ell \ge 3}:\quad\X_\ell &=  -  \mathcal{G}_{00}^{(2)}  \A_1 \X_{\ell-1},
\end{align}
so that substituting $ \mathcal{G}_{00}^{(2)} $, $ \mathcal{G}_{01}^{(2)} $ from \eqnref{eq:MPinv2modes} and noting that $\A_1 = \Id$, we find $\X_1 = \Id$ and $\X_\ell = 0$ for $\ell \ge 2$. Hence, the full SM expansion of the response function in presence of two-mode symmetric perturbation \eref{appn:n2mode} reads
\begin{equation}
	\smat{n}_0 = \Id: \qquad
	\smat{G}_{\theta_0}  = \sform \theta_0^{-2} \left( \bar{\smat{H}} + \theta_0 \Id \right).
	\label{eq:SME_2mode}
\end{equation}

As a second example, let us consider a one-mode perturbation described by
\begin{equation}\label{appn:n1mode}
	\smat{n}_0 = \begin{pmatrix}
		1	 &   0   &   0    &  0  \\
		0   &   0  &    0    &  0  \\
		0   &   0   &   1    &  0  \\
		0   &   0   &   0    &  0  
	\end{pmatrix} =: \tilde{\Id},
\end{equation} 	
which corresponds to modifying the frequency of only the first cavity~\cite{LJ}. Then, it is sufficient to compute only the first two augmented matrices:
\begin{align}
	\bm{\mathcal{A}}^{(0)} &= \bpm -\bar{\smat{H}} \epm, \\
	\bm{\mathcal{A}}^{(1)} &= \bpm -\bar{\smat{H}} & 0 \\ \tilde{\Id} & -\bar{\smat{H}} \epm,
\end{align}
with $\rank [\bm{\mathcal{A}}^{(1)}] - \rank[\bm{\mathcal{A}}^{(0)}] = 4$, leading to a pole of order one ($s=1$) in the SM expansion. Now, the (matrix) coefficient $\X_0=\mathcal{G}_{01}^{(1)}$ corresponds to the top-right block of the Moore-Prenrose inverse of the augmented matrix $\bm{\mathcal{A}}^{(1)}$, i.e.:
\begin{equation}
	\mathcal{G}^{(1)}  =  
	\left(
	\begin{array}{c|c}
		\frac{1}{2}\bpm 0&0&0&0 \\ -1 & 0 & 0& -1 \\ 0 &0& 0&0 \\ 0& 1 & -1 &0 \epm & \bpm 1 &0&0&-1\\ 0&-1&-1&0\\0&1&1&0\\1&0&0&-1\epm   \\
		\hline
		\cdots& \cdots
	\end{array}
	\right)_{8 \times 8}\!\!,
	\label{eq:MPinv1mode}
\end{equation}
while the remaining coefficients can be calculated by using the recurrence relation \eqref{eq:recur}, as in the previous case. We obtain
\begin{equation}
	\X_0 = \begin{pmatrix}
		1 &  0  & 0  & -1\\
		0 & -1  & -1 & 0 \\
		0 &  1   & 1  & 0 \\
		1 &  0   & 0  & -1
	\end{pmatrix}, \quad 
	\X_1 = \begin{pmatrix}
		0  & 0   & 0  &  0\\
		0  & 0  & 0 & -1\\
		0  & 0   & 0  & 0 \\
		0  & 1   & 0  & 1
	\end{pmatrix},                  
	\label{eq:X0X1}    
\end{equation}
and $\X_\ell = \smat{0}$ for $\ell \ge 2$, so that the full SM expansion of the response function in presence of one-mode symmetric perturbation \eref{appn:n1mode} reads
\begin{equation}
	\smat{n}_0 = \tilde{\Id}:\qquad
	\smat{G}_{\theta_0}  = \sform \theta_0^{-1} \left( \X_0 + \theta_0 \X_1 \right)
\end{equation} 
with the matrix coefficients specified in \eqnref{eq:X0X1}.

One can proceed similarly for other examples of perturbation matrix $\smat{n}_0$, in particular, the ones corresponding to the scenarios listed below \eqnref{eq:sens_H}:~two-mode asymmetric variation of the cavity frequencies~\cite{hongkong} or (reciprocal and non-reciprocal) perturbations  of the coupling strength~\cite{AC1}, which both yield SM expansions with a pole at the order $s=1$. We summarise the values of pole orders ($s$) arising in all the mentioned cases in \tabref{tab:scaling}.

\begin{table}[t!]
	\centering
	{\begin{tabular}{|M{0.5cm}|p{5.5 cm}|M{2cm}|} 
			\hline \vspace{1mm}
			No.  &  Perturbation matrix $\mat{n}_{0}$ in Eq. (5) ($\bar{\parVec}=0$) &  Pole order: $s$ \\
			\hline  
			1 &	 $\begin{pmatrix}
				1 & 0 \\0  & 1
			\end{pmatrix}$:   two-mode symmetric                         &   2       \\
			\hline         
			2  &	 $\begin{pmatrix}
				1 & 0 \\0  & -1
			\end{pmatrix}$:    two-mode asymmetric              &   1     \\
			\hline       
			3 &	  $\begin{pmatrix}
				1 & 0 \\ 0  & 0
			\end{pmatrix}$: one-mode             &     1    \\
			\hline    
			4  &	  $\begin{pmatrix}
				0 & 1 \\ 1 & 0
			\end{pmatrix}$:  coupling strength $g$           &  1    \\
			\hline    
			5  &	  $\begin{pmatrix}
				0 & 1 \\ 0 & 0
			\end{pmatrix}$:  non-reciprocal component of $g$    &  1 \\
			\hline    
		\end{tabular}
	}
	\caption{\textbf{Pole orders} in the SM expansion \eref{eq:SME} of the response function \eref{eq:response_function}, arising when sensing linear perturbations of a singular PT-symmetric dynamical generator, i.e.~$\bar{\smat{H}}$ specified in \eqnref{eq:Hsing}, being generated by each of the listed matrices $\smat{n}_0=\stransf{\mat{n}_0}$ in \eqnref{eq:response_function}.}
	\label{tab:scaling}
\end{table}

\subsubsection{Comparison with previous results of \citeref{LJ}}
\label{app:comp_to_LJ}
For completeness, let us comment how the above general methods can only under special circumstances agree with the approach used in \citeref{LJ}, which is based on the Neumann series $(\Id - \smat{T})^{-1} = \sum_{k=0}^\infty \smat{T}^k$. When the perturbation matrix $\smat{n}_0$ is invertible, one is then tempted to expand the response function \eref{eq:response_function} w.r.t.~$\theta_0^{-1}$ instead, so that
\begin{align} \label{eq:G}
	\smat{G}_{\theta_0} &= \sform \left(\theta_0 \smat{n}  - \smat{H} \right)^{-1} = \sform \theta^{-1} \smat{n}^{-1} \left[ \Id - \underbrace{\theta_0^{-1} (\smat{H} \smat{n}^{-1}) }_{\smat{T}} \right]^{-1}  \nonumber \\&=   \sform \theta_0^{-1} \smat{n}^{-1} \sum\limits_{k=0}^{\infty}  \theta_0^{-k} (\smat{H}\smat{n}^{-1})^k.
\end{align}
Hence, in case of $\smat{n}_0=\Id$ and $\smat{H}=\bar{\smat{H}}$, as in \eqnref{eq:Hsing}, for which $\forall_{\ell\ge2}:\,\bar{\smat{H}}^\ell = \smat{0}$, we obtain
\begin{align}\label{eq:NS}
	\smat{G}_{\theta_0}(\smat{n}_0 = \Id)  = \sform \theta_0^{-2} \left( \bar{\smat{H}} + \theta_0 \Id \right),
\end{align}
which coincides with the correct SM expansion \eref{eq:SME_2mode}. The above construction cannot be applied in case of a non-invertible $\smat{n}_0$---in particular, when considering single-mode perturbations with $\smat{n}_0=\tilde{\Id}$ of \eqnref{appn:n1mode}---but it may still lead to incorrect conclusions for invertible perturbation matrices $\smat{n}_0$. 

Let us consider the case when it is the coupling strength $g$ in \eqnref{eq:H0_smatrix} that is perturbed, so that
\begin{equation}
	\smat{n}_0 = \begin{pmatrix}
		0	 &   1   &   0    &  0  \\
		1   &   0  &    0    &  0  \\
		0   &   0   &   0    &  1  \\
		0   &   0   &   1    &  0  
	\end{pmatrix}\!.
\end{equation} 	
Evaluating explicitly the terms in the Neumann series \eref{eq:G}, we have that for all $k=1,2,3 \dots$:
\begin{equation}
	\smat{n}_0^{-1} \left( \bar{\smat{H}} \smat{n}_0^{-1} \right)^k = 2^{k-1} \begin{pmatrix}
		0 & 1 & -1 & 0 \\
		1 & 0 &  0  & 1\\
		1 & 0 &  0  & 1\\
		0 & -1 & 1  & 0
	\end{pmatrix} \ne \smat{0},
\end{equation}
and so the series \eref{eq:G} never terminates. This is a consequence of the Neumann series being then \emph{divergent}:~$(\Id - \smat{T})^{-1} = \sum_{k=0}^\infty \smat{T}^k$ converges only if $\lambda_\mrm{max}(\smat{T}) < 1$, i.e.~the spectral radius of $\smat{T}$ is less than 1, while from \eqnref{eq:G} it follows that the spectrum of $\smat{T}$ diverges as $\theta_0 \rightarrow 0$. Hence, this leaves the rate of divergence ill-defined, while the SM expansion \eref{eq:SME} correctly predicts $\smat{G}_{\theta_0} \propto \theta_0^{-1}$ with pole of the order $s=1$, see \tabref{tab:scaling}.

\begin{figure*}[t]
	\newcommand\ww{59mm}
	\centering
	\includegraphics[width=\ww]{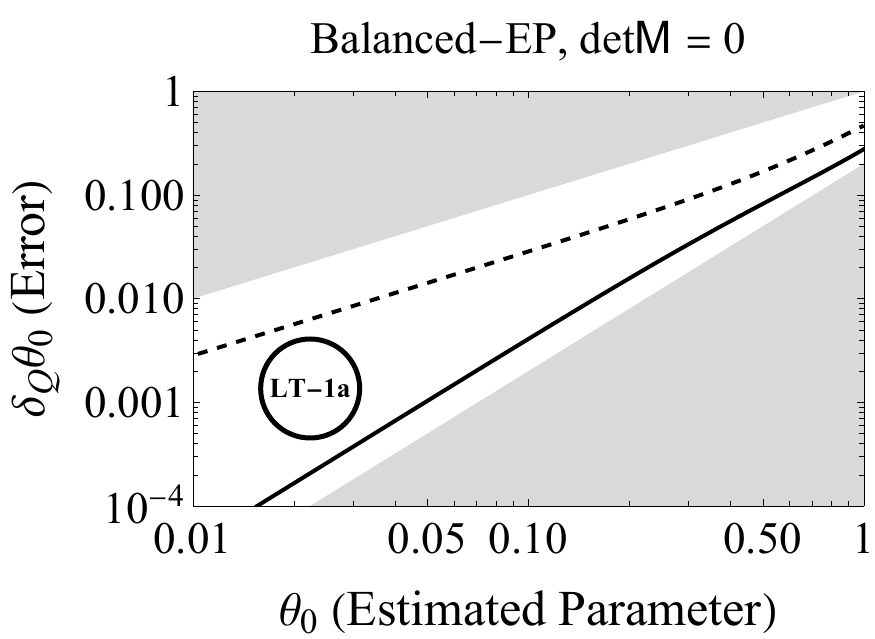}
	\includegraphics[width=\ww]{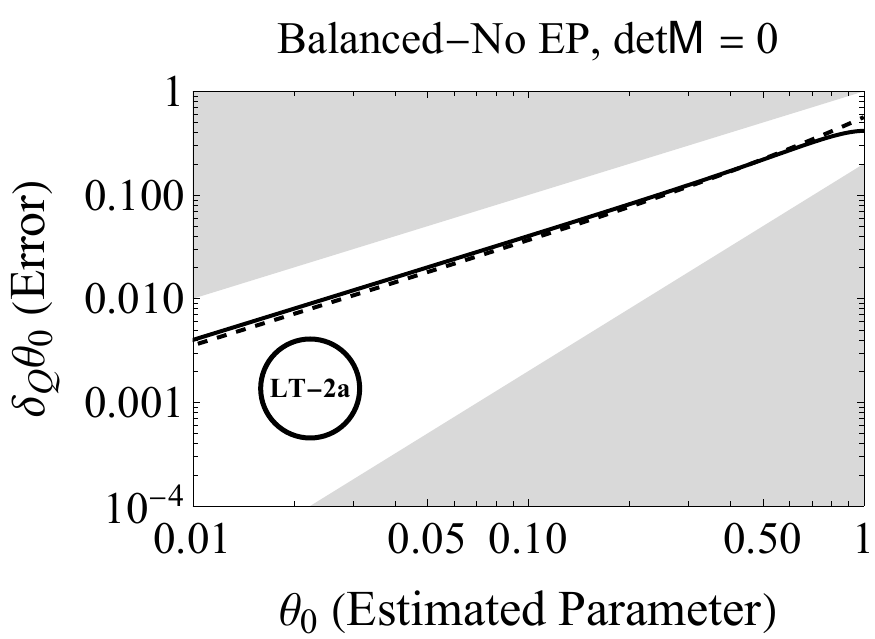}
	\includegraphics[width=\ww]{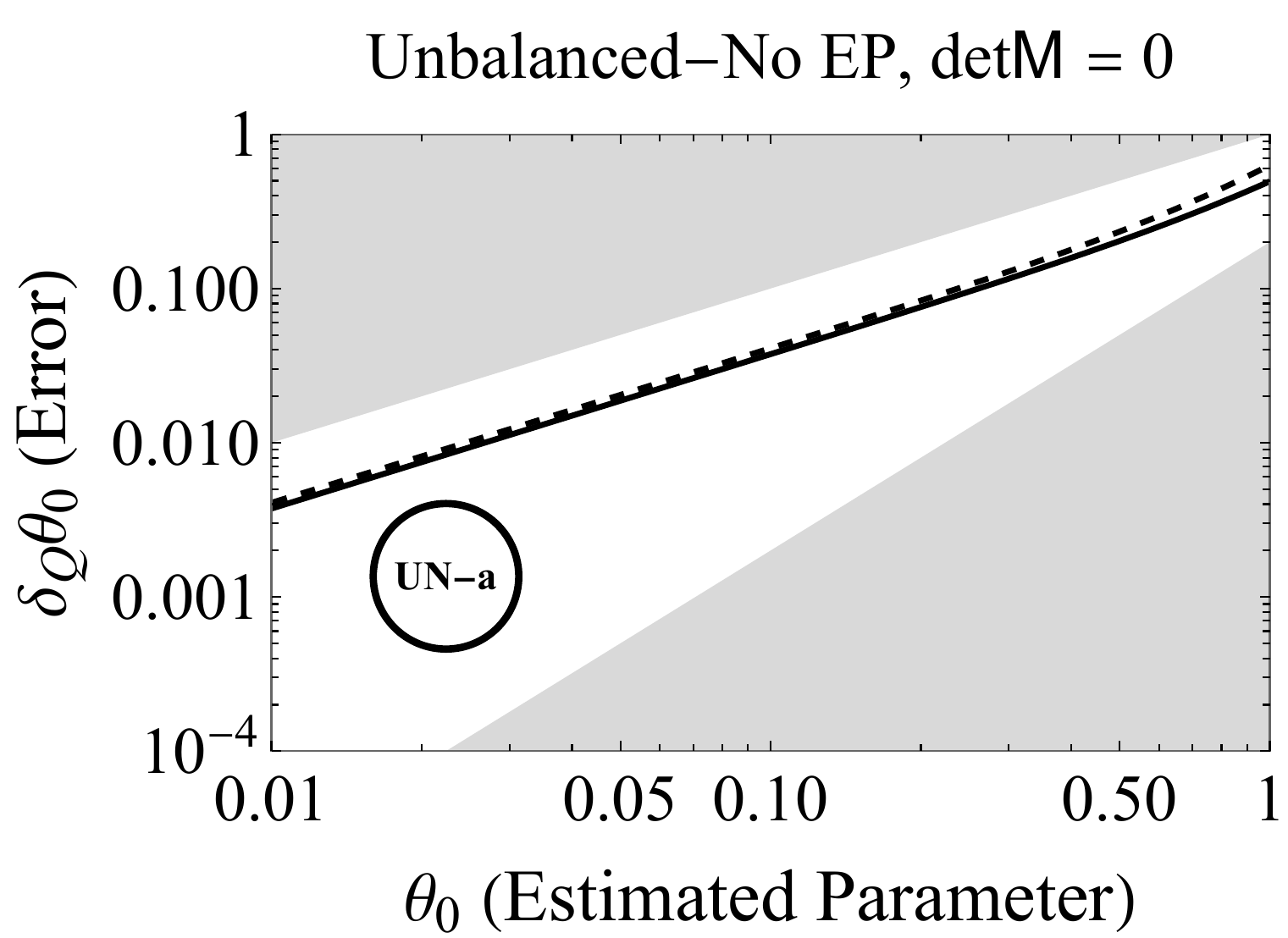}\\
	\includegraphics[width=\ww]{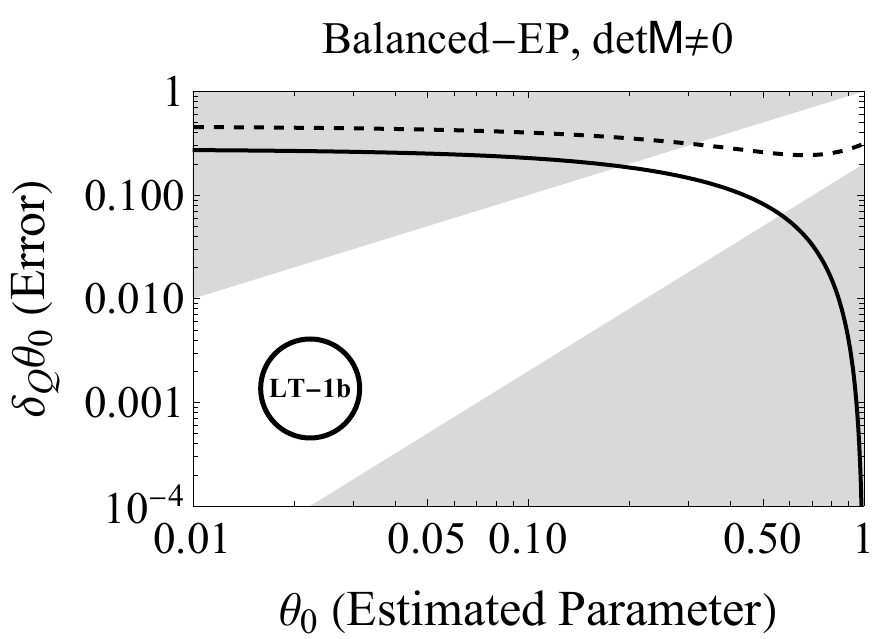}
	\includegraphics[width=\ww]{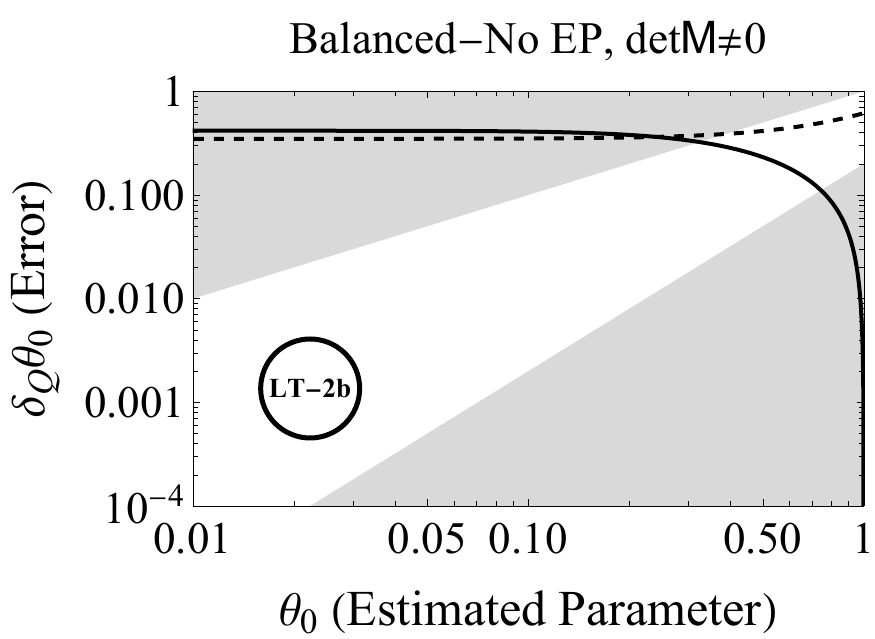}	
	\includegraphics[width=\ww]{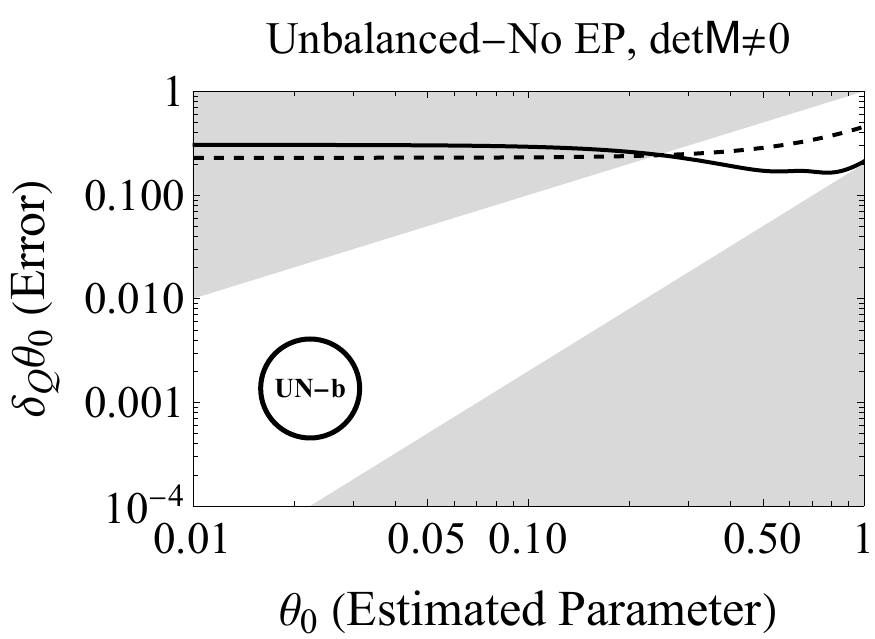}
	\caption{\textbf{Error in estimating the single-parameter $\theta_0$} (with others perfectly known) that describes two-mode or one-mode perturbations of cavity frequencies---solid and dashed lines, respectively---for different choices of the sensor working conditions. The specific choices of dynamical parameters, for which the effective generator $\smat{M} = \Id(\omega-\omega_0)-\smat{H}$ is \emph{singular} or \emph{non-singular}, are listed in \tabref{tab:EP_Sing_Bal}. The unbounded precision, i.e., the vanishing of error as $\theta_{0} \to 0$, is observed only when $\det \smat{M} = 0$, irrespective of whether the system is operated at an \emph{exceptional point} (EP) and/or \emph{balanced} condition is satisfied. However, if the EP and balanced conditions are fulfilled on top of $\det \smat{M} = 0$, the error shows a quadratic behaviour in case of the two-mode perturbation. Note that in the lower-left and lower-middle plots, the error vanishes as $\theta_{0} \to 1$ for the two-mode perturbation, which in this case actually corresponds to a singular point with $\det\smat{M}=0$. For completeness, see \appref{app:lasing_threshold} for a detailed discussion, we also mark above by the labels LT and UN the cases when the system is being perturbed at the \emph{lasing threshold} or within the \emph{unstable} phase (i.e.~does not possess a steady state in the long-time limit), respectively.
	}
	\label{fig:Det0}
	\vspace{1cm}
\end{figure*}
\begin{table*}
	\centering
	{\begin{tabular}{|M{2.0cm}||p{3.5cm}|p{3.5cm}|p{3.5cm}|} 
			\hline
			\textbf{Configuration}                     &    Balanced \& EP &   Balanced \& No-EP & Unbalanced \& No-EP \\
			\hline \vspace{3mm} 
			$\det \smat{M}=0 $                         &   $\omega=\omega_0$, \hfill \circled{LT-1a}  \newline $g =\gamma$, $\gamma_1=\gamma_2=\gamma$	&          $\omega=0$, $\omega_0 = \gamma$, \hfill  \circled{LT-2a}  \newline $g=\sqrt{2} \gamma$,  $\gamma_1=\gamma_2 =\gamma$  &   $\omega=\omega_0$ \hfill \circled{UN-a} \newline $g=2\gamma$, $\gamma_1 = \gamma$, $\gamma_2=4\gamma$ \\
			\hline    \vspace{3mm}       
			$\det \smat{M} \ne 0$                   &            $\omega=0$, $\omega_0 = \gamma$, \hfill \circled{LT-1b}  \newline  $g =\gamma$, $\gamma_1=\gamma_2 = \gamma $  &      $\omega = \omega_0$ \hfill  \circled{LT-2b} \newline $g=\sqrt{2} \gamma$,  $\gamma_1=\gamma_2 =\gamma$ 	 &  $\omega = \omega_0$ \hfill \circled{UN-b} \newline  $g=\gamma$, $\gamma_1 = \gamma$, $\gamma_2=\frac{\gamma}{2}$ \\
			\hline    
	\end{tabular}	}
	\caption{\textbf{Choices of parameters} determining the system dynamical generator $\smat{H}$, defined in \eqnref{eq:H0_smatrix}, that are used in various configurations in \figref{fig:Det0}, distinguishing whether the effective dynamical generator $\smat{M} =  \Id(\omega-\omega_0)-\smat{H}$ is singular or non-singular. }
	\label{tab:EP_Sing_Bal}
\end{table*}

\subsection{Singularity vs EP and balanced conditions}
\label{app:Det0}
The unbounded scaling of sensitivity in the described system arises when perturbing away from the singular dynamics of the sensor. In the main part of the paper, we have considered probing the signal output at the natural frequency of the sensor, i.e.~$\omega=\omega_0$. In this case, the condition for singular dynamics, i.e.~divergence of the response function \eref{eq:Greens}, is that the generator of sensor dynamics $\mat{H}$, or equivalently $\smat{H}$, constitutes a singular matrix. This then leads to divergent sensitivity (i.e.~the Fisher information) as $\theta_{0} \rightarrow 0$. Here, we illustrate that choosing singular operating conditions for the sensor to yield enhanced signal scaling with the sensed parameter is, in general, independent of tuning to an \emph{exceptional point} (EP) and/or \emph{balanced} conditions, as proposed earlier in the literature~\cite{LJ}.

In order to illustrate this, we focus on the two examples of sensor perturbations, i.e.~one-mode and two-mode (symmetric) perturbations of the cavity frequencies, as described above in \appref{app:SM} and \tabref{tab:scaling}, corresponding to $\mat{n}_0=(1,0;0,0)$ or $\mat{n}_0=\id$, respectively. Going beyond the results presented in the main part, we generalize the analysis beyond the case of probing at the sensor frequency, i.e.~we allow for $\omega\neq\omega_0$, in which case the response function can still be made divergent despite $\det \mat{H}\neq0$, as long as the system is PT-symmetric, i.e.~balanced $\gamma_1=\gamma_2=\gamma$ with the (strong) coupling being then fixed to $g=\sqrt{\gamma^2+(\omega-\omega_0)^2}>\gamma$. In particular, we consider the full dynamical generator appearing in \eqnref{eq:Greens_par_expl_app}:
\begin{equation}
	\smat{M} =   (\omega-\omega_0)\Id-\smat{H},
	\label{eq:Mgenerator}
\end{equation}
whose inverse now determines the response function $\smat{G}[\omega]$, see~\eqnref{eq:Greens} (or  \eqnref{eq:Greens_par_app} for $\parVec=\bm{0}$).

As seen in \figref{fig:Det0}, the vanishing behaviour of the error as $\theta_0\to0$ is completely lost if one operates in the non-singular regime, even when the EP and balance conditions are satisfied (see lower plots with $\det \smat{M}\neq0$). On the other hand, neither EP nor balance condition is actually needed to observe the unbounded linear scaling as long as the sensor parameters are tuned to the singularity (see upper plots with $\det \smat{M}=0$ and, in particular, the top-right plot UN-a). However, in case of the two-mode perturbation $\mat{n}_0=\id$, the simultaneous fulfilment of both EP and balanced condition enhances the sensitivity to scale quadratically with $\theta_0$ (see the top-left plot LT-1a) instead of being linear, as in all other cases. The choice of parameters for each of the subfigures in \figref{fig:Det0} is stated in \tabref{tab:EP_Sing_Bal}.

Lastly, we note that for all the cases with $\det \smat{M}=0$ listed in \tabref{tab:EP_Sing_Bal}, if one was to perform Jordan decomposition of the corresponding $\smat{M}$ matrices, the size of the largest block with zero eigenvalues is 2. Hence, it cannot be directly associated with $s$ in the $\theta_0^s$ scaling of the error---i.e.~the pole order of the SM expansion \eref{eq:SME}---as proposed in \citeref{LJ}, with contradiction being provided by the latter two cases, in which the error scales linearly ($s=1$) for either $\mat{n}_0$-perturbation.

\subsection{The role of the lasing threshold}
\label{app:lasing_threshold}
As discussed in the main text, the system dynamics \eref{eq:a1a2} (or equivalently \eqnref{eq:dyn_model}) is unstable when the imaginary parts of the eigenvalues of the dynamical generator $\H$ (see \eqnref{eq:sys_H})    are positive. Given a particular value of loss in the first cavity, $\gamma_1$, and relatively small inter-cavity coupling satisfying then $g\le\gamma_1$, this occurs when the gain in the second cavity is raised above $\gamma_2\ge g^2/\gamma_1$. In contrast, when the coupling between the cavities is relatively strong, i.e.~$g\ge\gamma_1$, it is sufficient to drive the second cavity at the same rate as the loss, i.e.~for any $\gamma_2\ge\gamma_1$ the system becomes then unstable. In particular, in the former case the transition point occurs at the value of $\gamma_2$ at which the system is, in fact, \emph{singular} (i.e.~$\det\H=0$ or $g^2=\gamma_1\gamma_2$), while in the latter when the system is \emph{balanced} ($\gamma_2=\gamma_1\eqqcolon\gamma$) or more precisely \emph{PT-symmetric}, as $g\ge\gamma$ is then automatically fulfilled. We illustrate these two ``phases" explicitly in \figref{fig:lasing}, which should be interpreted as the 2D parameter-space $\{\gamma_2,g\}$ plotted for a given value of loss $\gamma_1$ (equivalent to \figref{fig:par_space}\textbf{b}).  
	
From the physical perspective, the instability of the dynamics corresponds to both cavities exhibiting \emph{lasing}~\cite{Peng2014b}, while the aforementioned transition point corresponds to the so-called \emph{lasing threshold} (LT), above which (in the sense of increasing the gain $\gamma_2$) the excitations in both cavities grow limitlessly over time and with the system being \emph{unstable} (UN)---unless one includes other (e.g.~higher-order non-linearities) dissipative processes that are neglected by the linear model. 

In what follows, we provide a simplified proof that such a dynamical behaviour is indeed exhibited if one was to track the mean number of photons in the cavities, but more careful derivations based on the Langevin formalism yield same conclusions~\cite{NewProject}. For completeness, we further demonstrate that at the LT one may always make the sensitivity diverge by tailoring the effective dynamical generator in \eqnref{eq:Mgenerator} to be singular, $\det\smat{M}=0$, via tuning the probing frequency away from the resonance, i.e.~allowing for $\omega\neq\omega_0$. However, as the unbounded sensitivity may also be induced by the singularity away from the LT, within the UN phase, we note that operating at the LT is formally a \emph{sufficient but not necessary} condition. Finally, we explicitly show that, if probing the system at the LT, the unbounded sensitivity may be exhibited no matter whether the (singular) system is perturbed away from the LT either into the stable or the unstable dynamical phase.

\begin{figure}[t]
	\includegraphics[width=\columnwidth]{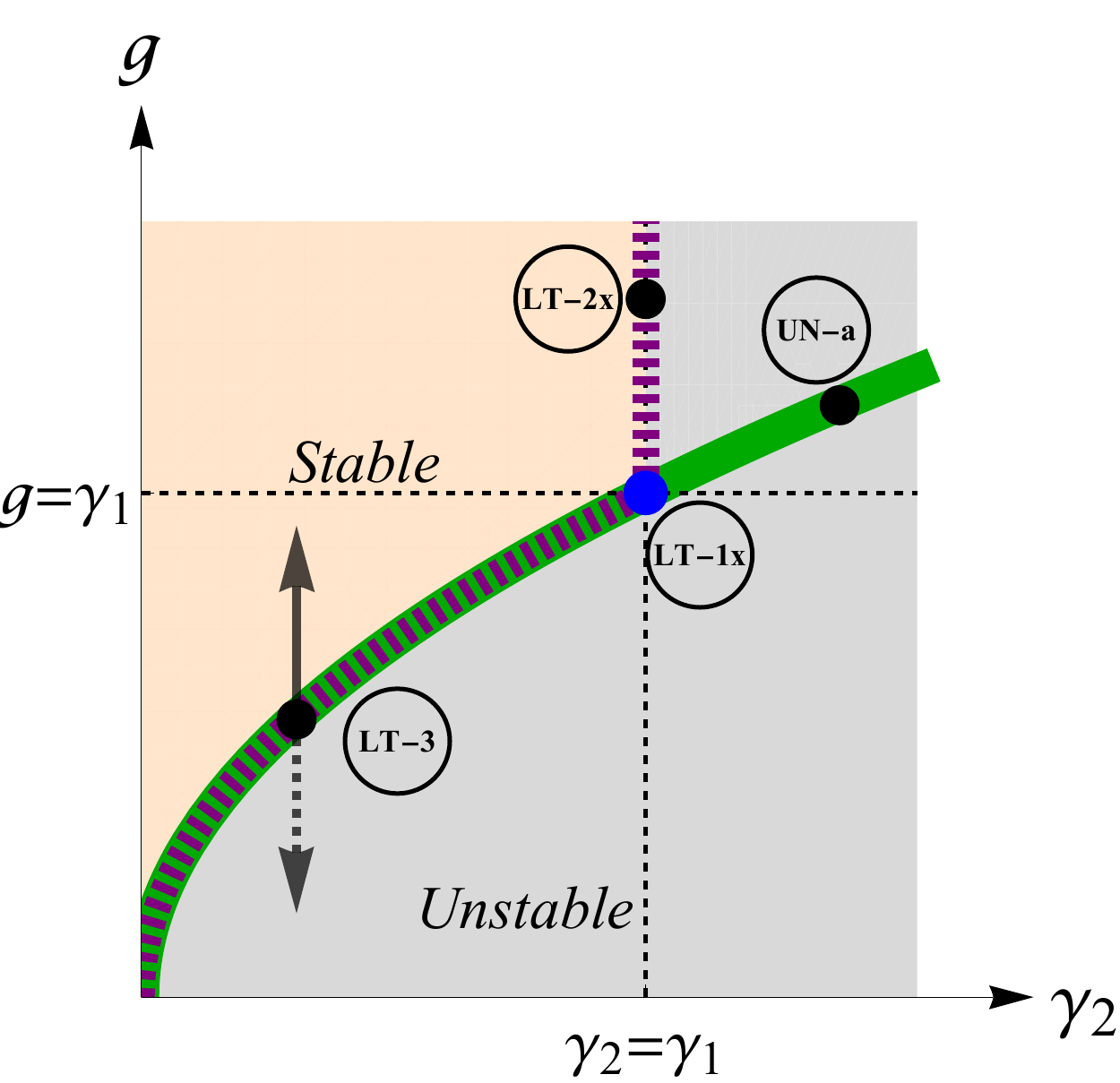}
	\caption{\textbf{Parameter space of coupling vs the driving rate} in the second cavity, $\{g,\gamma_2\}$, given some fixed loss rate $\gamma_1$ in the first cavity. From being stable the system becomes unstable (exhibits lasing) upon increasing sufficiently $\gamma_2$, as marked by \emph{light orange} and \emph{light grey} regions, respectively. The \emph{blue dot} corresponds to the only EP exhibited in this parameter space. The points labelled by LT-k constitute examples of operating at the lasing threshold (\emph{purple dashed line}), and pertain (not necessarily to scale) to the parameter values listed in \tabref{tab:EP_Sing_Bal}. The only new configuration LT-3 corresponds to the case UN-a in \tabref{tab:EP_Sing_Bal}, but with values of parameters $\gamma_{1}$ and $\gamma_{2}$ interchanged---we use the configuration LT-3 below to show that both perturbations into the stable and unstable phases (\emph{solid and dashed arrows}, respectively) away from the LT may lead to unbounded sensitivity.}
	\label{fig:lasing}
\end{figure}

\subsubsection{Steady-state solutions for the mean photon numbers in the cavities}
In order to have a better understanding of the interplay between the stability of the sensor dynamics and its sensitivity, we look at the steady-state solution of the average occupation number of the cavity modes $\hat{a}_1$ and $\hat{a}_2$. We adopt a simplistic approach based on the master equation of the reduced state $\rho$ of the two cavity modes. This approach can be shown to corroborate with more rigorous treatments accounting for the correlations being accumulated over time between cavity modes and the input fields, with the latter  possibly even in non-thermal states~\cite{NewProject}. 

In particular, we assume here that the reduced dynamics of the cavities can be modelled by the following master equation
\begin{align}
	\frac{d}{dt} \rho &= -\i [\hat{H}_S, \rho] + 2 \gamma_1 \left[ \hat{a}_1\rho \hat{a}_1^\dagger - \frac{1}{2}\left(\hat{a}_1^\dagger \hat{a}_1 \rho  + \rho \hat{a}_1^\dagger \hat{a}_1\right)\right] \nonumber 
	\\&+ 2 \gamma_2 \left[ \hat{a}_2^\dagger \rho \hat{a}_2 - \frac{1}{2}\left(\hat{a}_2  \hat{a}_2^\dagger \rho+ \rho \hat{a}_2 \hat{a}_2^\dagger \right)\right],
\end{align}
where the system Hamiltonian describing the two two cavities, $\hat{H}_S =\omega_1 \hat{a}_1^\dagger \hat{a}_1 + \omega_2 \hat{a}_2^\dagger \hat{a}_2 +  g (\hat{a}_1^\dagger \hat{a}_2 + \hat{a}_2^\dagger \hat{a}_1)$, is given as originally in \eqnref{eq:cavityHam}. 

A straightforward calculation then leads to
\begin{align}
	\partial_t \langle \hat{n}_{1} \rangle &= -2\gamma_{1} \langle\hat{n}_{1} \rangle +  2 g \langle \hat{J}_y \rangle  \label{eq:partialtn1} \\
	\partial_t \langle\hat{n}_{2} \rangle  &= 2\gamma_{2} \langle \hat{n}_{2} \rangle  -  2 g \langle\hat{J}_y \rangle  \label{eq:partialtn2} \\
	\partial_t  \langle\hat{J}_y \rangle  &= g (\langle\hat{n}_2\rangle - \langle\hat{n}_1\rangle)  + (\gamma_2 - \gamma_1) \langle\hat{J}_y\rangle \label{eq:partialtJy}\\
	&+ (\omega_1 - \omega_2) \langle \hat{J}_x \rangle \nonumber\\
	\partial_t  \langle\hat{J}_x \rangle  &= (\omega_2 - \omega_1) \langle \hat{J}_y \rangle   + (\gamma_2 - \gamma_1) \langle\hat{J}_x\rangle, \label{eq:partialtJx}
\end{align}
where $\hat{n}_{\ell} = \hat{a}_\ell^\dagger \hat{a}_\ell$ are the cavity occupation operators and $\hat{J}_x =  \frac{1}{2}(\hat{a}_1 \hat{a}_2^\dagger  + \hat{a}_2 \hat{a}_1^\dagger) $,  $\hat{J}_y = -\frac{\i}{2}(\hat{a}_1 \hat{a}_2^\dagger - \hat{a}_2 \hat{a}_1^\dagger)$ correspond to the relevant SU(2) generators in the Jordan-Schwinger representation. Moreover, since throughout this work we consider internal frequencies of the cavities to be equal, $\omega_1 = \omega_2 \eqqcolon \omega_0$, the last equation \eref{eq:partialtJx} in the above becomes redundant. Hence, by equating the r.h.s of Eqns.~(\ref{eq:partialtn1}-\ref{eq:partialtJy}) to zero and eliminating $\langle\hat{J}_y\rangle$, we obtain the steady-state solutions for the average photon number in each cavity as:
\begin{align}
	\left< \hat{n}_{1} \right>_\textrm{s.s.} &= \frac{g^2 \gamma_2}{(\gamma_1 - \gamma_2)(g^2 -\gamma_1 \gamma_2)}, \label{eq:n1avg}\\
	\left<\hat{n}_{2} \right>_\textrm{s.s.}  &= \frac{(g^2 -\gamma_1 \gamma_2) + g^2 \gamma_2}{(\gamma_1 - \gamma_2)(g^2 -\gamma_1 \gamma_2)}.\label{eq:n2avg}
\end{align}

Now, for the steady-state solution to exist, both $\langle \hat{n}_1 \rangle_\textrm{s.s.}$ and $\langle \hat{n}_2 \rangle_\textrm{s.s.}$ above must be non-negative and finite. From the perspective of increasing the driving rate $\gamma_2$ from zero, it becomes clear that the transition occurs at 
\begin{equation}
	\gamma_2^\textrm{LT}=\min\left\{\gamma_1,\frac{g^2}{\gamma_2}\right\},
	\label{eq:gamma2_LT}
\end{equation} 
at which both \eqnsref{eq:n1avg}{eq:n2avg} diverge. In particular, the above value defines the \emph{lasing threshold} (LT), below which, i.e.~for $\gamma_2<\gamma_2^\textrm{LT}$, the system is stable. Otherwise, i.e.~for $\gamma_2\ge\gamma_2^\textrm{LT}$, the system becomes unstable (UN) exhibiting the lasing phase. Let us note that the above conclusions are consistent with the argument presented in the main text based on the change of sign in the imaginary parts of the eigenvalues of the dynamical generator $\H$ (and similarly $\smat{M}$ in \eqnref{eq:Mgenerator}).

In \figref{fig:lasing}, depicting the parameter space of the coupling constant $g$ vs the driving rate $\gamma_2$ (for a given fixed value of the loss $\gamma_1$), the LT determined by \eqnref{eq:gamma2_LT} is marked with a dashed purple line (same as~\figref{fig:par_space}) that separates \emph{stable} and \emph{unstable} phases filled with light orange and light grey colours in \figref{fig:lasing}, respectively. 

\subsubsection{Singularity-induced sensing at the lasing threshold}
\label{app:sense_at_LT}
It is clear from \figref{fig:lasing} that when the intracavity coupling is relatively weak $g\le \gamma_1$ (as compared to the loss rate in the first cavity), the LT coincides with the singularity condition $\det\H=0$ (marked by a green solid curve in \figref{fig:lasing}, in a similar fashion to \figref{fig:par_space}), so that sensitivity to $\theta_0$-perturbations is then indeed unbounded when probing the system on resonance with $\omega=\omega_0$. Consistently, the configuration LT-1a in \figref{fig:Det0} (top-left subplot), for which $g\le \gamma_1$ in \tabref{tab:EP_Sing_Bal}, leads to a vanishing error $\qerrorS\theta_0\to0$ as $\theta_0\to0$. 

On the contrary, in the strong-coupling regime $g>\gamma_1$, the singularity is satisfied away from the LT in the unstable (UN) lasing phase (solid green and dashed purple lines in \figref{fig:lasing} then do not coincide). Hence, it is rather the configuration UN-a instead of LT-2b in \figref{fig:Det0} (with parameters specified in \tabref{tab:EP_Sing_Bal}) that leads to unbounded sensitivity for $\omega=\omega_0$. However, as we now show, for any point at the LT in the strong-coupling regime (i.e.~all the points along the vertical segment of the dashed purple line in \figref{fig:lasing}, including LT-2b) one can make the effective dynamical generator, $\smat{M}=(\omega-\omega_0)\Id-\smat{H}$ in \eqnref{eq:Mgenerator}, singular by tuning the probing frequency $\omega$.

\begin{figure}[t!]
	\centering
	\includegraphics[width=\columnwidth]{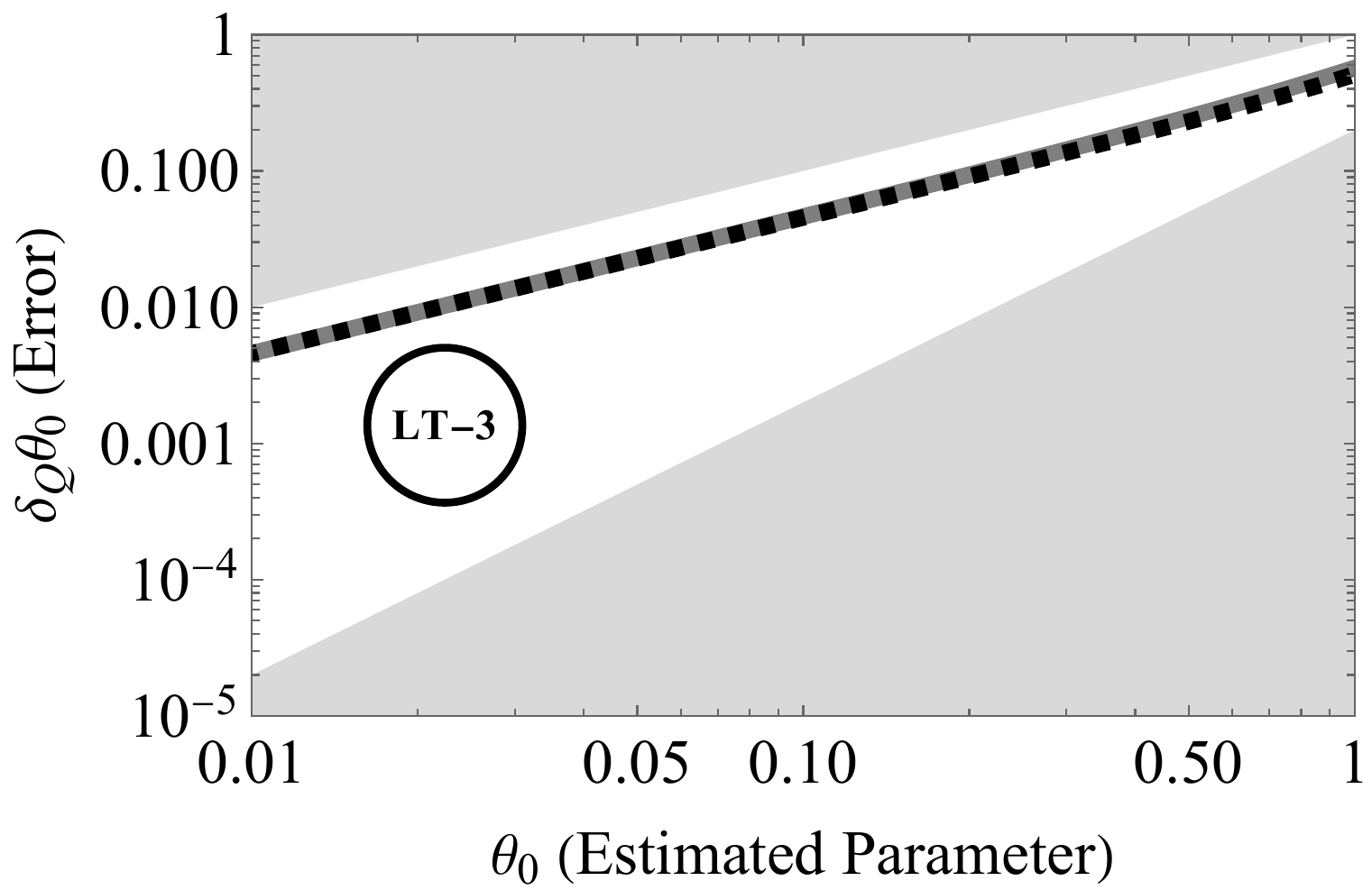}
	\caption{
		\textbf{Singularity-enhanced sensing at the lasing threshold} with the perturbation taking the system into either the stable or unstable phase. We consider the configuration marked as LT-3 in \figref{fig:lasing} and vary the coupling strength $g$, i.e.~$\mat{n}_0=\sigma_x$ in \eqnref{eq:response_function}. The error vanishes linearly, $\qerrorS\theta_0\propto\theta_0^{-1}$, matter whether $g\to g + \theta_0$ (solid grey) or $g\to g -\theta_{0}$ (dashed black) as $\theta_0\to0_+$. In the above, the system parameters are set to:~$\omega = \omega_0$, $g=2\gamma$, $\gamma_1 = 4 \gamma$, $\gamma_2 = \gamma$.
	}
	\label{fig:couplingPerturb}
\end{figure}

Defining the detuning of the probing frequency from the one of cavities as $\lambda\coloneqq\omega-\omega_0$, the condition for the dynamical generator \eqref{eq:Mgenerator} to be singular reads
\begin{align}
	\det\smat{M} = \det\!\left(\stransf{\mat{M}}\right) 
	& =  \left|\det \mat{M}\right|^2 \label{eq:detMequivalence}\\
	& =  \left|\det \!\left(\lambda\id-\H\right)\right|^2=0,	
	\label{eq:detM=0}
\end{align}
where $\mat{M}=\lambda\id-\H$ is the representation of the dynamical generator \eref{eq:Mgenerator} in the vector space of annihilation/creation operators rather than the phase space, see \eqnref{eq:phase_space_rep}, with corresponding determinants related as in \eqnref{eq:detMequivalence} due to: 
\begin{lemma}
	For any two square and real matrices of dimension $d$, $X,Y\in\mathcal{M}_{d,d}(\mathbb{R})$, the following equality holds
	\begin{equation}	
		\det
		\begin{pmatrix}
			X	& 	-Y \\
			Y	& 	X
		\end{pmatrix} 
		= \left|\det Z\right|^2,
		\label{eq:block_det}
	\end{equation}
	where $Z\coloneqq X+\i Y\in\mathcal{M}_{d,d}(\mathbb{C})$.
\end{lemma}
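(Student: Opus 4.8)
The plan is to prove the identity by pure block Gaussian elimination over $\mathbb{C}$, using the fact that elementary block row and column operations leave the determinant unchanged. Write $M=\begin{pmatrix}X&-Y\\Y&X\end{pmatrix}$ and set $Z=X+\i Y$, $\bar Z=X-\i Y$. The aim is to reduce $M$ to a block upper-triangular matrix whose diagonal blocks are $\bar Z$ and $Z$, after which the determinant reads off immediately.

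First I would add $\i$ times the second block column to the first block column. Since $X-\i Y=\bar Z$ and $Y+\i X=\i(X-\i Y)=\i\bar Z$, this produces $\begin{pmatrix}\bar Z&-Y\\\i\bar Z&X\end{pmatrix}$ with the same determinant. Next I would subtract $\i$ times the (new) first block row from the second block row; because $X+\i Y=Z$, the lower-left block is annihilated and the matrix becomes $\begin{pmatrix}\bar Z&-Y\\0&Z\end{pmatrix}$, which is block upper-triangular. Equivalently, these two steps amount to the factorisation $M=L^{-1}\begin{pmatrix}\bar Z&-Y\\0&Z\end{pmatrix}R^{-1}$ with $L=\begin{pmatrix}I&0\\-\i I&I\end{pmatrix}$ and $R=\begin{pmatrix}I&0\\\i I&I\end{pmatrix}$, both of unit determinant, so that $\det M=\det\bar Z\,\det Z$.

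It then only remains to observe that, because $X$ and $Y$ are real, $\bar Z=X-\i Y$ is exactly the entrywise complex conjugate of $Z$, hence $\det\bar Z=\overline{\det Z}$, and therefore $\det M=\det Z\,\overline{\det Z}=\left|\det Z\right|^2$, which is the claimed identity \eqnref{eq:block_det}. (An alternative route is to conjugate $M$ by the fixed invertible matrix $P=\begin{pmatrix}I&I\\-\i I&\i I\end{pmatrix}$, for which a short computation gives $P^{-1}MP=Z\oplus\bar Z$ directly; I would nonetheless keep the elimination argument as the primary one, since it needs neither the inverse of $P$ nor any normalisation.)

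The argument is essentially routine, so there is no genuine obstacle; the only points requiring a moment of care are verifying the two block identities $Y+\i X=\i\bar Z$ and $X+\i Y=Z$ that drive the elimination, and making explicit that it is precisely the \emph{reality} of $X$ and $Y$ that turns $\det\bar Z$ into $\overline{\det Z}$ — without that hypothesis the final equality would not hold.
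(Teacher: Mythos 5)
Your proof is correct. The block identities driving the elimination do hold ($X-\i Y=\bar Z$, $Y+\i X=\i\bar Z$, $X+\i Y=Z$), the two operations are exactly right- and left-multiplication by your unit-determinant triangular matrices $R$ and $L$, the determinant rule for a block upper-triangular matrix requires no commutativity of blocks, and the reality of $X,Y$ is indeed what gives $\det\bar Z=\overline{\det Z}$. The route differs from the paper's in its mechanics: the paper writes the explicit factorisation $\bigl(\begin{smallmatrix}X&-Y\\ Y&X\end{smallmatrix}\bigr)=\tfrac{1}{\sqrt2}\bigl(\begin{smallmatrix}I&I\\ \i I&-\i I\end{smallmatrix}\bigr)\,(Z^*\oplus Z)\,\tfrac{1}{\sqrt2}\bigl(\begin{smallmatrix}I&-\i I\\ I&\i I\end{smallmatrix}\bigr)$, i.e.\ a similarity transformation to the block-diagonal form $Z^*\oplus Z$ --- essentially your parenthetical alternative with the matrix $P$, up to normalisation --- and then must evaluate the determinants $(\mp\i)^d$ of the two outer factors (via the commuting-blocks determinant formula) and note that they cancel. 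Your primary elimination argument is marginally more economical: since $L$ and $R$ are unit-triangular, no determinant of any change-of-basis matrix has to be computed and the commuting-blocks lemma is never invoked, only the triangular-block rule. What the paper's version buys in exchange is the slightly stronger structural fact that the real $2d\times 2d$ matrix is \emph{similar} to $Z^*\oplus Z$, rather than merely reducible by unimodular operations to a triangular matrix with diagonal blocks $\bar Z$ and $Z$; for the determinant identity itself that extra information is not needed.
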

\begin{proof}
	Defining $I$ as the $d$-dimensional identity matrix, let us observe that
	\begin{equation}	
		\begin{pmatrix}
			X	& 	-Y \\
			Y	& 	X
		\end{pmatrix} 
		=\frac{1}{\sqrt{2}}\!
		\begin{pmatrix}
			I	& 	I \\
			\i I	& 	- \i I
		\end{pmatrix} 
		\,
		\begin{pmatrix}
			Z^*	& 	0 \\
			0	& 	Z
		\end{pmatrix} 
		\,
		\frac{1}{\sqrt{2}}\!
		\begin{pmatrix}
			I	& 	-\i I \\
			I	& 	\i I
		\end{pmatrix}, 
	\end{equation}
	so from multiplicativity of the determinant and its standard form for block matrices with commuting blocks, we obtain
	\begin{equation}	
		\det
		\begin{pmatrix}
			X	& 	-Y \\
			Y	& 	X
		\end{pmatrix} 
		= (-\i)^d \,\det(Z^*Z)\,(\i)^d=\det Z^* \det Z
	\end{equation}
	and \eqnref{eq:block_det} follows.
\end{proof}
Now, as \eqnref{eq:detM=0} is equivalent to just the characteristic equation of the ``bare'' dynamical generator  $\H$ (introduced in ~\eqnref{eq:sys_H}), $\det \smat{M}=0$ can be satisfied by choosing some detuning $\lambda\in\mathbb{R}$ \emph{if and only if} $\H$ possesses a real eigenvalue. The eigenvalues of $\H$ read $\lambda_\pm=\i\gamma_-\pm\sqrt{g^2-\gamma_+^2}$ (see ~\eqnref{eq:eval_evecs}), so it becomes clear that for $g^2\le\gamma_+^2$ this may only happen when one of them vanishes (i.e.~$\gamma_-^2=\gamma_+^2-g^2\Leftrightarrow g^2=\gamma_1\gamma_2$), so that $\det \H=0$ and we consistently recover the $\H$-singular condition when sensing on resonance. In contrast, if $g^2>\gamma_+^2$ then $\Im[\lambda_\pm]=\i\gamma_-$, so we must set $\gamma_-=0\;\Leftrightarrow\;\gamma_1=\gamma_2\eqqcolon\gamma$ (balanced case) to obtain real $\lambda_\pm=\pm\sqrt{g^2-\gamma^2}$ with $g>\gamma$. In particular, this way we recover exactly the aforementioned case of sensing at the LT in the strong-coupling regime ($\gamma_1=\gamma_2$ for $g>\gamma_1$), for which we must hence detune the probe by $\pm\sqrt{g^2-\gamma^2}$ to operate at the singularity. For instance, this is exactly the difference between the configurations LT-2b and LT-2a in \figref{fig:Det0} (middle subplots), where in the latter case we set the detuning to $\lambda=-\gamma$ given $g=\sqrt{2}\gamma$, see the corresponding parameters in \tabref{tab:EP_Sing_Bal}, in order to make the sensitivity unbounded. 

Finally, we demonstrate that the unbounded sensitivity may be exhibited when perturbing the system away from the singularity occurring at the LT, no matter whether the system is perturbed into the stable or the unstable phase. We consider an exemplary configuration in the weak-coupling regime denoted as LT-3 in \figref{fig:lasing} and assume $\theta_0$ to be a variation of the coupling strength (i.e.~$\mat{n}_0=\sigma_x$ in \eqnref{eq:response_function} and \tabref{tab:scaling}). As a result, when $\theta_0$ is positive(negative) the system becomes stable(unstable), as marked by the solid(dashed) arrow in \figref{fig:lasing}. In \figref{fig:couplingPerturb}, we explicitly show numerically that the error vanishes linearly, i.e.~$\qerrorS\theta_0\propto\theta_0^{-1}$ as $\theta_0\to0$, in either of the cases (the system parameters are fixed to:~$\omega = \omega_0$, $g=2\gamma$, $\gamma_1 = 4 \gamma$, $\gamma_2 = \gamma$; as in the case of UN-a in \tabref{tab:EP_Sing_Bal} but with the loss and driving rates interchanged).

\section{Multi-parameter perturbation sensing}
\label{app:multi_par}
%
For any dynamical transformation under which an input Gaussian state $\rho (\smat{S}_\mrm{in}, \smat{V}_\mrm{in})$ is mapped onto an output Gaussian state $\rho (\svec{S}(\parVec), \smat{V}(\parVec))$ with $\svec{S}(\parVec)$ and $\smat{V}(\parVec)$ respectively given by Eqs.~\eqref{eq:SparVec} and \eqref{eq:VparVec}, and the corresponding QFIM in \eqnref{eq:FQjk}.

\subsubsection{Sensing $\parVec$ at a singular point}
Now, if the response function admits a Laurent expansion $\smat{G}_{\theta_j}  = \sform \theta_j^{-s} \sum_{\ell=0}^r \theta_j^\ell\,\X_\ell$ with $s\in\mathbb{N}_+$, in particular, the SM expansion \eref{eq:SME}, then \eqnref{eq:FQjk} can be generally written as
\begin{align}\label{eq:FQ_ijgen}
	\FQ_{jk}	\approx  \theta_{j}^{-2s} \left( \sum_{p=0}^{2r}   \theta_{j}^p   \Tr\!\left[  \T^p_{jk}  \right] + \sum_{p=0}^{2r} \theta_{j}^p t^p_{jk} \right),
\end{align} 
with 
\begin{align}
	\T^p_{jk} &: = \sum_{\alpha = 0}^{p}  \smat{n}_{j} \smat{X}_\alpha  \smat{n}_{k} \smat{X}_{p-\alpha}+ \smat{V}_\mrm{in}^{-1} \smat{n}_{j} \smat{X}_\alpha  \smat{V}_\mrm{in} \smat{X}_{p-\alpha}^\T \smat{n}_{k}^\T,\\
	t^p_{jk} &:= \kappa^2 \sum_{\alpha = 0}^p  \svec{S}_\mrm{in}^\T   \smat{X}_\alpha^\T \smat{n}_{j}^\T \smat{V}_\mrm{in}^{-1} \smat{n}_{k} \smat{X}_{p-\alpha} \svec{S}_\mrm{in},
\end{align}
where $\smat{X}_\ell=0$ for any $\ell>r$ in the above.

\subsection{Impact of the nuisance parameter}
\label{app:nuisance_par}

In this work, we focus on the two-parameter $\parVec=\{\theta_0,\theta_1\}$ sensing scenario, in which $\theta_0$ is the primary parameter inducing the perturbation to be sensed, while $\theta_1$ represents a secondary \emph{nuisance} parameter, whose value is of no interest, but cannot be assumed to be apriori known, in contrast to all other constants ($g$, $\gamma_1$, $\gamma_2$, $\omega=\omega_0$) specifying the dynamics.

\subsubsection{Singularity-braking nuisance parameter}
\label{app:nonsing_pert}
Let us study first how the error in estimating the primary parameter $\theta_{0}$ is generally affected by the presence of a nuisance parameter $\theta_1$ that takes an otherwise singular generator $\smat{H}$ away from the singularity. This can be achieved in different ways via any $\smat{H}^\mrm{NS}_{\theta_1} = \smat{H} - \theta_1 \smat{n}_1$ such that $\det\smat{H}^\mrm{NS}_{\theta_1}\ne0$ for $\theta_1\neq0$. In general, the corresponding two-parameter response function  in~\eqnref{eq:Gomega} then reads
\begin{equation}
	\smat{G}_{\theta_0, \theta_1} =  \smat{J} (\theta_{0} \smat{n}_{0} - \smat{H}^\mrm{NS}_{\theta_1})^{-1} = \smat{J} (\theta_{0} \smat{n}_{0} + \theta_{1} \smat{n}_{1} - \smat{H})^{-1},
\end{equation}
but since $\smat{H}^\mrm{NS}_{\theta_{1}}$ is invertible for $\theta_{1} \ne 0$, one can use the Neumann series expansion \eref{eq:expnG} with $\svec{H}$ replaced by $\smat{H}^\mrm{NS}_{\theta_{1}}$ and the coefficients $\smat{g}_k$ being now a function of the nuisance parameter $\theta_1$.  Similarly, the covariance matrix allows for an expansion as in \eqnref{eq:expnV} with $\smat{v}_k$ depending now on $\theta_1$ via $\smat{g}_k$. Consequently, the entries of the noisy QFIM \eref{eq:FmatrixNoisyparVecApp} read:
\begin{align}
	\FQ_{00} 	&\approx  \frac{1}{2} \Tr\!\left[(\smat{V}^A_\mrm{in}   -  \smat{v}_0)^{-1} (-\smat{v}_1) (\smat{V}^A_\mrm{in}   -  \smat{v}_0)^{-1} (-\smat{v}_1) \right] \nonumber \\&+2 \kappa^2 \svec{S}_\mrm{in}^\T \smat{g}_1^\T (\smat{V}^A_\mrm{in}   -  \smat{v}_0)^{-1}  \smat{g}_1 \svec{S}_\mrm{in} +\mathcal{O}(\theta_0), \label{eq:F00NS}\\
	\FQ_{01} &\approx  \frac{1}{2} \Tr\!\left[(\smat{V}^A_\mrm{in}   -  \smat{v}_0)^{-1} (-\smat{v}_1) (\smat{V}^A_\mrm{in}   -  \smat{v}_0)^{-1} \left(- \partial_{\theta_1} \smat{v}_0 \right) \right] \nonumber \\&+2 \kappa^2 \svec{S}_\mrm{in}^\T \smat{g}_1^\T (\smat{V}^A_\mrm{in}   -  \smat{v}_0)^{-1}  (\partial_{\theta_1}\smat{g}_0) \svec{S}_\mrm{in} +\mathcal{O}(\theta_0).\label{eq:F01NS}
\end{align}
where the first expression is consistently equivalent to the constant $C$ in \eqnref{eq:C_const}, while the second one is now a consequence of the multi-parameter scenario. 

In contrast to the single-parameter scenario, the error in estimating the parameter $\theta_0$, while treating $\theta_1$ as a nuisance one, is generally lower-bounded by $\qerror\theta_0=\sqrt{[\FQ^{-1}]_{00}}$, where
\begin{align}\label{eq:error_multi}
	\left[\FQ^{-1} \right]_{00}  
	&=  \frac{\FQ_{11} }{ \FQ_{00}  \FQ_{11} -  \FQ_{01}  \FQ_{10}} =  \frac{1}{ \FQ_{00} - \frac{ \FQ_{01}  \FQ_{10}}{ \FQ_{11}}}.
\end{align}
Hence, in order for $\left[\FQ^{-1} \right]_{00} $ to vanish, $\FQ_{00} - \frac{ \FQ_{01}  \FQ_{10}}{ \FQ_{11}}$ must blow up. This is possibility is excluded by \eqnref{eq:F00NS} and \eqnref{eq:F01NS}, given that $\smat{n}_0$, $\smat{n}_1$, $\smat{H}_{\theta_1}^{-1}$ and $\smat{V}_\mrm{in}^{-1}$ all have a finite spectral radius. 

As a concrete example, we focus on $\smat{H}^\mrm{NS}_{\theta_1}$ stated in~\eqnref{eq:HS_HNS} with $\theta_1$ corresponding to a perturbation of the inter-cavity coupling strength $g$ in our system, depicted by the black arrow in~\figref{fig:par_space}. Taking the perturbation of the primary parameter to be generated by $\smat{n_0}=\Id$, we choose the input modes of both the probing and scattering channels to be in thermal states with zero average amplitude, i.e.~$\svec{S}_\mrm{in}=\svec{0}$, while holding an average number of photons $n_A$ and $n_B$, respectively. In such a case, we find the off-diagonal element \eref{eq:F01NS} vanishes, so that $\left[\FQ^{-1} \right]_{00}  = 1/\FQ_{00}$ with 
\begin{equation}
	\FQ_{00} \approx  \frac{2(2\bar{n}_A + \eta_2 \bar{n}_B) - \bar{n}_A(7\bar{n}_A + 3\eta_2 \bar{n}_B)}{c+2 c_{+} \theta_1 + c_{-} \theta_1^2 - 4 \bar{n}_{A}^2 \theta_1^3 + \bar{n}_A^2 \theta_1^4},
\end{equation}
where $\bar{n}_{A(B)} =  (1 + 2 n_{A(B)} )$, $c = \eta_1 \bar{n}_{B}(\bar{n}_{A} + \eta_2 \bar{n}_{B}) - \bar{n}_{A}(7\bar{n}_{A} + 3 \bar{n}_{B} \eta_2)$, $c_{\pm} =  \bar{n}_{A}(2\bar{n}_{A} \pm (\eta_1 + \eta_2) \bar{n}_{B})$, and we have set $\kappa =1$ for convenience. Thus, $\qerror\theta_0=\sqrt{\left[\FQ^{-1} \right]_{00}}$ is non-vanishing for any small $\theta_1\neq0$.

Now, let us also consider the very special case of $\theta_{1} = 0$, at which $\smat{H}^\mrm{NS}_{\theta_1=0} = \smat{H}$, so that it becomes singular. Hence, we must then resort to the SM expansion of the response function appearing in \eqnref{eq:FQjk}, i.e.~determine $\smat{G}_{\theta_0, \theta_1 =0} = \sform \theta^{-2} \sum_{k=0}^1 \theta_{0}^k \X_k$ that turns out to be identical to the single-parameter case discussed in \appref{app:SM}:~the pole is of order two, coefficient $\X_0$ is given by \eqnref{eq:X0_2mode}, and $\X_1 = \Id$. Assuming the inputs to be in zero-amplitude thermal states as above, \eqnref{eq:FQjk} then yields
\begin{align}
	\FQ_{00} &\approx \alpha_{00} \theta_{0}^{-4} + \beta_{00} \theta_{0}^{-2}, \label{eq:F00}\\
	\FQ_{01} &\approx \alpha_{01}  \theta_{0}^{-3},\\
	\FQ_{11} &\approx \alpha_{11}  \theta_{0}^{-4}  + \beta_{11} \theta_{0}^{-2}.
\end{align}
where
\begin{align}
	\alpha_{00} &= \frac{2(2 + 4 n_A + \pi_{1B}  + \pi_{2B} )^2}{(1 + 2 n_A + \pi_{1B} ) (1 + 2 n_A + \pi_{2B} )}, \quad \beta_{00} =4,\nonumber \\
	\alpha_{01} &= \alpha_{11} = 2 \left(   6 + \frac{1 + 2 n_A + \pi_{1B} }{1 + 2 n_A + \pi_{2B}} + \frac{1 + 2 n_A + \pi_{2B} }{1 + 2 n_A + \pi_{1B}} \right) \nonumber \\&= 12 + \beta_{11}.
\end{align}
Using the above expressions, we have that
\begin{equation}
	\frac{\FQ_{01}\FQ_{10}}{\FQ_{11}}=\frac{\FQ_{01}^2}{\FQ_{11}} 
	\approx \frac{\alpha_{01} \theta_{0}^{-6}}{\theta_{0}^{-4} + \frac{\beta_{11}}{\alpha_{01}} \theta_{0}^{-2}},
\end{equation}
which when plugged into \eqnref{eq:error_multi}, yields the quantum bound on the (multiparameter) estimation error: 
\begin{align}
	\qerror \theta_{0} 
	&= \sqrt{[\FQ^{-1}]_{00}} \nonumber \\
	&\approx \left( \alpha_{00} \theta_{0}^{-4}  + \beta_{00}\theta_{0}^{-2} - \frac{\alpha_{01} \theta_{0}^{-6}}{\theta_{0}^{-4} + \frac{\beta_{11}}{\alpha_{01}} \theta_{0}^{-2}}  \right)^{-1/2} \nonumber \\
	&= \frac{1}{\alpha_{00}^{1/2}} \theta_{0}^2 -\frac{(\beta_{00} - \alpha_{01})}{2\alpha_{00}^{3/2}}  \theta_{0}^{4}+O(\theta_0^6).
	\label{eq:qerrorNS}
\end{align}

On the other hand, in the single-parameter scenario, in which $\theta_1=0$ is assumed to be perfectly known and not fluctuating, the quantum bound on the estimation error follows from \eqnref{eq:F00} and reads
\begin{align}
	\qerrorS \theta_0 
	&= 1/\sqrt{\FQ_{00}} \nonumber\\
	&\approx  (\alpha_{00} \theta_{0}^{-4} + \beta_{00} \theta_{0}^{-2})^{-1/2} \nonumber \\
	&= \frac{1}{\alpha_{00}^{1/2}} \theta_{0}^{2} - \frac{\beta_{00}}{2 \alpha_{00}^{3/2}} \theta_{0}^4 + O(\theta_0^6).
	\label{eq:qerrorSNS}
\end{align}
Hence, as $\qerror \theta_0 = \qerrorS \theta_{0} + O(\theta_0^4)$, it follows that despite the singular behaviour with error vanishing quadratically as $\theta_0\to0$, the impact of the nuisance parameter becomes negligible for small enough perturbations.

\subsubsection{Singularity-preserving nuisance parameter}
\label{app:sing_pert}
Considering now the perturbation $\theta_{1}$ in~\eqnref{eq:HS_HNS} such that $\HS$ is singular for all values of $\theta_{1}$ (including $\theta_{1} =0$), we must compute the SM expansion of the response function that is valid for any $\theta_{1}$. Following the steps described in \appref{app:SM}, one finds that $\smat{G}_{\theta_{0}, \theta_{1}} = \sform \theta_{0}^{-2} \sum_{k=0}^1 \theta_{0}^k \X_k$, where
\begin{equation}\label{eq:X0}
	\X_0 =   \begin{pmatrix}
		0       & 1-\theta_{1} & 1-\theta_{1}  & 0      \\
		1-\theta_{1}  &  0     &   0     & -1-\theta_{1} \\
		-1-\theta_{1} &  0     &   0     & 1-\theta_{1}   \\
		0       & 1-\theta_{1} &  1-\theta_{1} & 0
	\end{pmatrix}
\end{equation}
and $\X_1 = \Id$.
Hence, substituting the above SM expansion into \eqnref{eq:FQjk}, we obtain the corresponding elements of the QFIM:
\begin{align}
	\FQ_{00} &\approx \alpha \theta_0^{-4}    +  2\beta \theta_0^{-3}  + \gamma \theta_0^{-2}, \label{eq:F00S}\\
	\FQ_{01} &\approx \alpha \theta_0^{-3}+\beta \theta_0^{-2},\\
	\FQ_{11} &\approx  \alpha \theta_0^{-2},
\end{align}
where  $\alpha:=\Tr\!\left[    \smat{V}_\mrm{in}^{-1} \X_0 \smat{V}_\mrm{in} \X_0^\T \right]  +  \langle \X_{0}^\T  \smat{V}_\mrm{in}^{-1}  \X_0  + \X_{0}  \smat{V}_\mrm{in}^{-1}  \X_0^\T \rangle$, $\beta :=  \langle \X_0^\T  \smat{V}_\mrm{in}^{-1} \X_1 + \X_1^\T  \smat{V}_\mrm{in}^{-1} \X_0 \rangle$, and $\gamma = 8  + 2 \langle \X_1^\T  \smat{V}_\mrm{in}^{-1} \X_1 \rangle $ with  $\langle \bullet  \rangle = \smat{S}_\mrm{in}^\T \bullet \smat{S}_\mrm{in}$. From positivity of covariance matrices (and their inverses) it follows that $\alpha,\gamma\ge0$ and $\alpha\gamma\ge\beta^2$.

The above elements of the QFIM allow us to determine the quantum bound on the $\theta_0$-estimation error with $\theta_1$ constituting a nuisance parameter, i.e.:
\begin{align}
	\qerror\theta_{0} 
	& =\sqrt{[\FQ^{-1}]_{00}}
	\nonumber \\
	&\approx  
	\left(\alpha \theta_0^{-4}    +  2\beta \theta_0^{-3}  + \gamma \theta_0^{-2} - \frac{ \left(\alpha \theta_0^{-3}+\beta \theta_0^{-2} \right)^2}{\alpha \theta_0^{-2}}\right)^{-\frac{1}{2}}
	\nonumber\\
	&=
	\left(\gamma-\frac{\beta^2}{\alpha}\right)^{-\frac{1}{2}}\,\theta_0
	\label{eq:qerrorS}
\end{align}
which, in contrast to \eqnref{eq:qerrorNS}, scales now linearly with $\theta_0$.

Crucially, the presence of the nuisance parameter precludes now the quadratic scaling of the error observed in the single-parameter setting. Recall that when $\theta_1$ is perfectly known, the quantum bound on the estimation error can be evaluated analogously to \eqnref{eq:qerrorSNS}, being dictated by \eqnref{eq:F00S}, i.e:
\begin{align}
	\qerrorS \theta_0 
	&= 1/\sqrt{\FQ_{00}} \nonumber\\
	&\approx  
	\left(\alpha \theta_0^{-4}    +  2\beta \theta_0^{-3}  + \gamma \theta_0^{-2}\right)^{-\frac{1}{2}} \nonumber\\
	&=
	\alpha^{-1/2}\,\theta_0^{2}-\alpha^{-3/2}\beta\,\theta_0^{3}+O(\theta_0^4),
\end{align}
so that it follows the $\theta_0^2$-scaling for $\theta_0\ll1$---corresponding to the already discussed scenario of sensing a two-mode symmetric perturbation of the cavity frequencies for a singular $\smat{H}$, which yields a pole of order two ($s=2$), as stated in the first row of \tabref{tab:scaling} in \appref{app:SM}.

Note that in \figref{fig:CRB}, when computing the exact estimation errors $\qerror \theta_0$/$\qerrorS \theta_0$ numerically, we further assume the input modes of both the probing and scattering channels to be in thermal states with zero average amplitude and an average number of photons, $n_A$ and $n_B$, respectively. As a result, the above coefficients become $\beta=0$, $\gamma=8$ and
\begin{equation}
	\alpha(\theta_1) = \frac{2(2 + 4n_A + \pi_{1B} + \pi_{2B})^2 (1- \theta_{1})^2}{(1 + 2 n_A + \pi_{1B}) (1 + 2 n_A + \pi_{2B})},
\end{equation}
where $\pi_{1B} = \eta_1 (1 + 2 n_B )$ and $\pi_{2B} = \eta_2 (1 + 2 n_B )$ and $\alpha(\theta_1)$ depends explicitly on the nuisance parameter $\theta_1$ (whose variations correspond then to red arrows in \figref{fig:par_space}a), we present for the two values $\theta_0=0$ and $\theta_0=0.25$, i.e.~both when the secondary perturbation is vanishing or non-vanishing---recall that $\det\HS=0$ here crucially for any $\theta_1$. Although we are primarily concerned here with perturbations such that $\theta_1\ll1$, the above is true for any $|\theta_1|<1$, while at the special value of $\theta_1=1$ one would have to recompute the SM expansion---$\X_0=\smat{0}$ at $\theta_1=1$ in \eqnref{eq:X0}.

\end{document}